\newtheorem{definition}{\textbf{Definition}}
\newtheorem{theorem}[definition]{\textbf{Theorem}}
\newtheorem{lemma}[definition]{\textbf{Lemma}}
\newtheorem{proposition}[definition]{\textbf{Proposition}}
\newtheorem{example}[definition]{\textbf{Example}}
\newtheorem{dpb}[definition]{\textbf{Decision Problem}}
\newtheorem{mpb}{\textbf{Minimization problem}}
\newtheorem{task}[definition]{\textbf{Task}}
\newtheorem*{proof*}{\textbf{Proof}}
\newcommand{\N}{\ensuremath{\mathbb{N}}}
\newcommand{\Z}{\ensuremath{\mathbb{Z}}}
\newcommand{\A}{\ensuremath{\mathcal{A}}}
\newcommand{\Lan}{\ensuremath{\mathcal{L}}}
\newcommand{\msf}[1]{\ensuremath{\mathsf{#1}}}
\newcommand{\Pos}{\ensuremath{\mathcal{P}}}
\newcommand{\Inter}[1]{\ensuremath{\llbracket 1,#1 \rrbracket}}
\newcommand{\TInter}[1]{\ensuremath{\llbracket #1 \rrbracket}}
\newcommand{\Intr}{\ensuremath{\llbracket 1,r \rrbracket}}
\newcommand{\Ints}{\ensuremath{\llbracket 1,s \rrbracket}}
\newcommand{\qinit}{\ensuremath{q_{\mathsf{init}}}}
\author[1,2,3]{Benjamin Bordais}
\author[1,2]{Daniel Neider}
\affil[1]{TU Dortmund University, Dortmund, Germany}
\affil[2]{Center for Trustworthy Data Science and Security, University Alliance Ruhr, Dortmund, Germany}
\affil[3]{Université Libre de Bruxelles, Belgium}
\date{}
\newcolumntype{M}[1]{>{\centering\arraybackslash}m{#1}}
\begin{document}
	\title{Learning DFAs from Positive Examples Only via Word Counting}
	\maketitle	
	
	\begin{abstract}
		Learning finite automata from positive examples has recently gained attention as a powerful approach for understanding, explaining, analyzing, and verifying black-box systems. The motivation for focusing solely on positive examples arises from the practical limitation that we can only observe what a system is capable of (positive examples) but not what it cannot do (negative examples). Unlike the classical problem of passive DFA learning with both positive and negative examples, which has been known to be NP-complete since the 1970s, the topic of learning DFAs exclusively from positive examples remains poorly understood. This paper introduces a novel perspective on this problem by leveraging the concept of counting the number of accepted words up to a carefully determined length. Our contributions are twofold. First, we prove that computing the minimal number of words up to this length accepted by DFAs of a given size that accept all positive examples is NP-complete, establishing that learning from positive examples alone is computationally demanding. Second, we propose a new learning algorithm with a better asymptotic runtime that the best-known bound for existing algorithms. While our experimental evaluation reveals that this algorithm under-performs state-of-the-art methods, it demonstrates significant potential as a preprocessing step to enhance existing approaches.
	\end{abstract}

\section{Introduction}
In recent years, the rapid advancement and proliferation of Artificial Intelligence systems have transformed numerous fields, from healthcare \cite{zhang2022applications} to transportation \cite{bharadiya2023artificial} or finance \cite{chen2023explainable}. This surge in AI capabilities has greatly increased the need of interpretable models able to provide insights into the decision-making processes of complex black-box systems. 

Interpretable models are synthesized from temporal 
data. The goal is to obtain a model that is coherent with the observations gathered, which in turn can be used to decipher the system's past behavior
. Different kinds of (interpretable) models are used in the literature, often of one of two types: finite-state machines \cite{weiss2024extracting} and (temporal) logic formulas \cite{DBLP:conf/aips/CamachoM19}. In this paper, we focus on Deterministic Finite Automata (DFAs) \cite{DBLP:journals/ibmrd/RabinS59}, that enjoy both an accessible formalism, making them suitable e.g. classifier of sequential data \cite{DBLP:conf/aaai/ShvoLIM21}, and many desirable properties, such as tractable model checking. 

In the literature, synthesizing a DFA from observations is usually formalized as a passive learning task where we are given a positive set of words that a prospective DFA should accept, a negative set of words that it should reject, and an integer bound on its number of states \cite{DBLP:journals/iandc/Gold78,DBLP:conf/cade/GrinchteinLP06,DBLP:conf/icgi/HeuleV10}. 
However, a significant obstacle on the usefulness of the passive learning task from positive and negative examples (PLPN) is the difficulty of coming up with negative examples: they correspond to observations of a system's undesirable behaviors, which can be intrinsically hard (e.g., with black-box systems) or unrealistic (e.g., with self-driving cars) to gather. 

To circumvent that issue, the focus has recently shifted 
to the passive learning task of synthesizing DFAs from positive examples only (PLP). Without any additional restriction, the DFA accepting exactly the (positive) words of the input set $\mathcal{P}$---on the one end of the spectrum---and the trivial accepting DFA (the one-state DFA accepting every word)---on the other end of the spectrum---both meet the requirement of this PLP task while providing no insight whatsoever w.r.t. the input set $\mathcal{P}$. To forbid these two extreme solutions, 
constraints on the prospective DFA are added. On the one hand, as in the classical PLPN setting, we consider a bound $n \in \N$ on the number of states of the prospective DFA. The benefit is twofold: first, it prevents synthesizing the DFA accepting exactly the words in $\mathcal{P}$ (assuming $n$ is small enough); second, for explainability purposes, the smaller the DFA, the better. On the other hand, we look for a DFA that is language-minimal, i.e., whose language is minimal (for inclusion) among those DFAs with at most $n$ states that accept all words in $\mathcal{P}$. As argued by 
Avellaneda and Petrenko (2018), this constraint follows naturally from a parsimony standpoint: we look for a simplest solution. With these restrictions, we obtain a PLP task $\mathcal{T}$.

A counterexample (CE) guided algorithm solving the task $\mathcal{T}$ was first developed by Avellaneda and Petrenko (2018), then a 
few years later it was improved into a symbolic algorithm by 
Roy et al. (2023), which now constitutes the state-of-the-art. The CE algorithm relies on iteratively synthesizing DFAs by solving instances of the PLPN task, where $\mathcal{P}$ always constitutes the positive set, while the negative set is initially empty and iteratively grown by adding 
words witnessing the non language-minimality of candidate DFAs (i.e., counterexamples). 
On the other hand, the symbolic algorithm starts from the trivial accepting DFA, and iteratively looks for DFAs accepting all words in $\mathcal{P}$ with a smaller language. Both the CE and symbolic algorithms iteratively call SAT solvers at each step to find new candidate DFAs. 

By construction, the symbolic algorithm 
avoids redundancies that could occur in the CE algorithm
. However, even with the symbolic algorithm, a significant drawback remains: the best bound on the number of steps of the algorithm, and thus on the number of calls to SAT solvers, is exponential in $n$. This is due to the fact that there are exponentially many words of length at most (polynomial in) $n$. Furthermore, while the complexity of the PLPN task (which can be stated as a decision problem) is well understood---it is known to be NP-complete since the 70s \cite{DBLP:journals/iandc/Gold78}---the PLP task is poorly understood complexity-wise. In particular,  \cite{DBLP:conf/icgi/AvellanedaP18,DBLP:conf/aaai/0002GBN0T23} do not provide a complexity lower bound on any closely-related decision problem. As such, it is not clear that solving the PLP task $\mathcal{T}$ even requires the use of SAT solvers.

\paragraph{Our contributions.} In this paper, we tackle the PLP task $\mathcal{T}$ from a new perspective which consists in assigning to each DFA a word-counting metric that greatly enhances our ability to compare two DFAs (as opposed to checking language inclusion). Specifically, this word-counting metric is the number of words accepted by each DFA up to length $2n-2$
. To demonstrate the usefulness of this new perspective, we show that a word-counting minimal DFA 
is necessarily language-minimal
. Thus, we focus on a new PLP task $\mathcal{T}'$---that requires a prospective DFA to be word-counting minimal instead of only language-minimal---and on the associated decision problem about the minimal word-counting metric of DFAs with at most $n$ states accepting all words in $\mathcal{P}$. Our main contribution is that this problem is $\msf{NP}$-complete, which constitutes the first complexity result directly related to a PLP task. 
This $\msf{NP}$-completeness result has two main consequences: first, because it is in $\msf{NP}$, we immediately obtain that the task $\mathcal{T}'$, and thus the task $\mathcal{T}$ as well, can be solved with polynomially many calls to a SAT solver, via a binary search; second, because the problem is $\msf{NP}$-hard, it is unlikely that we can do better than polynomially many calls to a SAT solver to solve task $\mathcal{T}'$ (though task $\mathcal{T}$ could be intrinsically easier to solve than task $\mathcal{T}'$). 
%

We have implemented an algorithm solving task $\mathcal{T}'$ (and thus also task $\mathcal{T}$) via an Integer Linear Programming (ILP) encoding. Our experiments show that this algorithm under-performs the symbolic state-of-the-art algorithm solving task $\mathcal{T}$
. However, building on the word-counting ideas presented above, we have developed a method that could be used to improve the performance of the symbolic algorithm for solving $\mathcal{T}$. Specifically, 
we have designed a heuristic algorithm that looks for a DFA with at most $n$ states, accepting all words in $\mathcal{P}$, with a word-counting metric as small as possible. This algorithm is intended to be used as a pre-processing step outputting a DFA that can then be used as a starting point by the symbolic algorithm
. Our experiments show promising results of this method
. 

Missing technical details can be found in the extended version \cite{bordais2025learningdfaspositiveexamples}.
%

\paragraph{Related work.} As already mentioned, there is a large body of work focusing on the classical DFA passive learning. Another widely studied DFA learning setting is active learning, where a learner successively queries a teacher. Angluin's seminal work \cite{DBLP:journals/iandc/Angluin87} essentially started the research on this topic, and is still very influential nowadays \cite{DBLP:conf/fm/ShahbazG09,DBLP:conf/ijcai/BolligHKL09}. 

DFA learning from positive examples only was introduced by Gold 
(1967), where it is shown that DFAs cannot be identified at the limit. This partly explains why there was no further study of this topic until recently \cite{DBLP:conf/icgi/AvellanedaP18,DBLP:conf/aaai/0002GBN0T23}. More popular subjects are the passive learning from positive examples only of better-behaved models, such as pattern languages \cite{DBLP:journals/jcss/Angluin80}, hidden Markov chains \cite{stolcke1992hidden}, or stochastic machines \cite{DBLP:journals/ita/CarrascoO99}.

Finally, an ILP encoding a learning task (of a logical (LTLf) formula) was first introduced in \cite{DBLP:conf/ilp/IeloLFRGR23}.


\section{Definitions and notations}
\label{sec:definition}
We let $\N$ denote the set of non-negative integers. For all $i \leq j \in \N$, we let $\TInter{i,j}$ denote the set $\{ i,i+1,\ldots,j \} \subseteq \N$. 

\paragraph{Alphabet and automata.}
An alphabet $\Sigma$ refers to a non-empty finite set of letters. We let $\Sigma^*$ (resp. $\Sigma^+$) denote the set of (resp. non-empty) finite words with letters in $\Sigma$. We let $\epsilon \in \Sigma^*$ denote the empty word. For all words $u \in \Sigma^*$, we let $|u| \in \N$ denote the length of the word $u$, i.e., its number of letters. 
For all $m \in \N$, we let $\Sigma^m$ (resp. $\Sigma^{\leq m}$) denote the set of words of length $m$ (resp. at most $m$). For all $L \subseteq \Sigma^*$, we let $\msf{Pref}(L)$ denote the set of (non-strict) prefixes of words in $L$.
%
%
%
%
Let us now recall the definition of DFAs.
\begin{definition}
	Consider an alphabet $\Sigma$. A \emph{deterministic finite automaton (DFA)} on $\Sigma$ is a tuple $\A = (Q,\Sigma,\qinit,\delta,F)$ where $Q$ is a finite non-empty set of states, $\qinit \in Q$ is the initial state, $\delta\colon Q \times \Sigma \to Q$ is the transition function, and $F \subseteq Q$ is the set of final states. The transition function $\delta$ is naturally extended to finite words into a function $\delta^*\colon Q \times \Sigma^* \to Q$ defined as follows: for all $q \in Q$, $\delta^*(q,\epsilon) := q$, and for all $u \in \Sigma^*$ and $\alpha \in \Sigma$: $\delta^*(q,u \cdot \alpha) := \delta(\delta^*(q,u),\alpha)$. 
	
	A word $u \in \Sigma^*$ is \emph{accepted} by the DFA $\A$ if $\delta^*(\qinit,u) \in F$. We let $\Lan(\A) \subseteq \Sigma^*$ denote the language of the DFA $\A$, i.e., the set of words that it accepts: $\Lan(\A) := \{ u \in \Sigma^* \mid \delta^*(\qinit,u) \in F \}$. We let $|\A|$ denote the size of the DFA $\A$, i.e., its number of states $|\A| := |Q|$. Furthermore, let $\msf{DFA}(\Sigma)$ denote the set of DFAs on the alphabet $\Sigma$.
\end{definition}

In the following, unless otherwise stated, a DFA $\A$ on an alphabet $\Sigma$ will refer to the tuple $\A = (Q,\Sigma,\qinit,\delta,F)$.

Given an integer $n \in \N$ and a set of words $\mathcal{P}$, we let $\msf{Rec}(\Pos,n) := \{ \A \in \msf{DFA}(\Sigma) \mid |\A| \leq n,\; \Pos \subseteq \Lan(\A) \}$ denote the set of DFAs with at most $n$ states that accept all words in $\mathcal{P}$.

\paragraph{Strict partial orders.}
A relation $\prec \subseteq \msf{DFA}(\Sigma) \times \msf{DFA}(\Sigma)$ is a strict partial order on $\msf{DFA}(\Sigma)$ if it is irreflexive, antisymmetric, and transitive. For $S \subseteq \msf{DFA}(\Sigma)$ and $\A \in S$, $\A$ is said to be $S$-minimal w.r.t. $\prec$ if, for all $\A' \in S$, we do \emph{not} have $\A' \prec \A$.

Given two DFAs $\A,\A' \in \msf{DFA}$, we write $\A \prec_\Lan \A'$ when $\Lan(\A) \subsetneq \Lan(\A')$. Clearly, $\prec_\Lan$ is a strict partial order on $\msf{DFA}(\Sigma)$. 

\paragraph{Learning from positive examples only.}
In this paper, we focus on the task of synthesizing a DFA with at most $n \geq 1$ states accepting all words in a given set $\mathcal{P}$, i.e., we look for DFAs in $\msf{Rec}(\mathcal{P},n)$. 
Furthermore, among those DFAs, we look for a language-minimal one (following the parsimony principle), i.e., one that is $\msf{Rec}(\mathcal{P},n)$-minimal w.r.t. the order $\prec_\Lan$. Formally, the task that we consider is defined below.
\begin{task}
	\label{task}
	Given as input an alphabet $\Sigma$, a set $\Pos \subseteq \Sigma^*$, and $n \geq 1$, find a DFA $\A$ such that: a) $\A \in \msf{Rec}(\Pos,n)$; and b) the DFA $\A$ is $\msf{Rec}(\Pos,n)$-minimal w.r.t. $\prec_\Lan$.
	
	The size of this task is equal to $n + |\msf{Pref}(\Pos)|$.
\end{task}

\section{Learning DFAs via word counting}
\label{sec:DFA_word_counting}
Before we detail our approach for solving Task~\ref{task}, let us describe the state-of-the-art (symbolic) algorithm.

\paragraph{The state-of-the-art algorithm.} It was developed by 
Roy et al. (2023) and proceeds as follows. Initially, one starts with the trivial accepting DFA $\A_0 \in \msf{Rec}(\Pos,n)$. Then, at step $i \geq 1$, one computes a DFA $\A_i \in \msf{Rec}(\Pos,n)$ satisfying $\A_i \prec_\Lan \A_{i-1}$ using a SAT solver. The process stops when no such DFA is found. Overall, this algorithm iteratively computes a decreasing chain of DFAs $\A_0 \succ_\Lan \ldots \succ_\Lan \A_k \succ_\Lan \ldots$ with $\A_i \in \msf{Rec}(\mathcal{P},n)$, for all $i \in \N$. Since the languages of DFAs may be infinite, it may not be clear that this 
chain is necessarily finite. However, it is indeed the case, as argued by 
Roy et al. (2023): two DFAs with at most $n$ states have the same language if they accept exactly the same words of length at most $n^2$ (a tighter bound exists, as discussed below). Thus, since there are finitely many words of length at most $n^2$, the above decreasing chain of DFAs is necessarily finite. However, the best bound on the number of iterations that we can extract from this argument is exponential in $n$ (since there are exponentially many words of length at most $n^2$), with each iteration involving a call to a SAT solver. 

\paragraph{From a language-inclusion to a numerical order.} To design a method that does significantly less calls to SAT solvers, we shift perspective from focusing on language inclusion, via the strict partial order $\prec_\Lan$, to focusing on a numerical order induced by values that we assign to DFAs. As we will see in the remainder of this paper, this has two main benefits: first, numerical orders are compatible with binary searches
; second, the values assigned to DFAs can be seen as a score that we aim at optimizing (in this case, minimize). However, it is crucial that this numerical order relates to the language-inclusion order $\prec_\Lan$ previously defined, as this is the one involved in Task~\ref{task}. In fact, taking as values assigned to DFAs the number of words, up to a certain length, that they accept does the job. This is formally defined below.
\begin{definition}
	Consider some $h \in \N$. For all DFAs $\A,\A'$, we write $\A \prec_h \A'$ when $|\Lan(\A) \cap \Sigma^{\leq h}| < |\Lan(\A') \cap \Sigma^{\leq h}|$.
\end{definition}
In other words, we have $\A \prec_h \A'$ when $\A$ accepts fewer words of length at most $h$ than $\A'$. Clearly, for all $h \in \N$, $\prec_h$ is a strict partial order on DFAs. Furthermore, this order satisfies the following property, thus showing that minimizing the number of accepted words also minimizes the language.
\begin{proposition}
	\label{prop:language-minimal}
	For an alphabet $\Sigma$, a set $\Pos \subseteq \Sigma^*$, $n \geq 1$, and $h := 2n-2$, a DFA $\A \in \msf{Rec}(\mathcal{P},n)$ that is $\msf{Rec}(\mathcal{P},n)$-minimal w.r.t. $\prec_h$ is $\msf{Rec}(\mathcal{P},n)$-minimal w.r.t. $\prec_\Lan$. 
\end{proposition}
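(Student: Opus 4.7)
My plan is to argue by contrapositive: assuming $\A \in \msf{Rec}(\Pos,n)$ is not $\msf{Rec}(\Pos,n)$-minimal with respect to $\prec_\Lan$, I will exhibit a DFA $\A' \in \msf{Rec}(\Pos,n)$ such that $\A' \prec_h \A$. By hypothesis there already exists $\A' \in \msf{Rec}(\Pos,n)$ with $\Lan(\A') \subsetneq \Lan(\A)$, and this $\A'$ is the natural candidate. Because the inclusion $\Lan(\A') \subseteq \Lan(\A)$ is preserved by intersecting with $\Sigma^{\leq h}$, the weak inequality $|\Lan(\A') \cap \Sigma^{\leq h}| \leq |\Lan(\A) \cap \Sigma^{\leq h}|$ is automatic, so the task reduces to making this inequality strict, i.e., to exhibiting some word of length at most $h = 2n-2$ that belongs to $\Lan(\A) \setminus \Lan(\A')$.

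To do so, I would invoke the classical distinguishability bound for DFAs: if two DFAs with $n_1$ and $n_2$ states recognize distinct languages, they must disagree on some word of length at most $n_1 + n_2 - 2$. Applied to $\A$ and $\A'$, both of size at most $n$, this yields a witness $u \in \Lan(\A) \triangle \Lan(\A')$ with $|u| \leq 2n-2 = h$. Since $\Lan(\A') \subseteq \Lan(\A)$, the symmetric difference collapses to $\Lan(\A) \setminus \Lan(\A')$, so $u$ is precisely the witness we need, and $\A' \prec_h \A$ follows, contradicting the assumed $\prec_h$-minimality of $\A$.

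The main obstacle is justifying the $2n-2$ bound, which is tight (as the footnote on $h$ emphasises) but strictly sharper than the $n^2$ bound cited in the informal discussion above; notably, it cannot be recovered from the naive product construction, whose state count is $n_1 \cdot n_2$. The standard proof proceeds via Moore's partition-refinement argument applied to the disjoint union of $\A$ and $\A'$ viewed as a single DFA with at most $2n$ states: the initial partition into final vs.\ non-final states has at most two blocks, each refinement round either reaches a fixed point or strictly increases the number of blocks (which is bounded by $2n$), and two states separated only at round $k$ admit a distinguishing word of length $k$. Hence any two inequivalent states---in particular the initial states of $\A$ and $\A'$, which are inequivalent as soon as $\Lan(\A) \neq \Lan(\A')$---are distinguished by a word of length at most $2n-2$, closing the argument.
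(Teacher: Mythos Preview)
Your proof is correct and follows essentially the same approach as the paper: both invoke the folk bound (stated in the paper as Theorem~\ref{thm:small_witness}, and proved in the appendix via exactly the Moore-style refinement on the disjoint union that you sketch) to obtain a short witness in $\Lan(\A)\setminus\Lan(\A')$, which forces $\A' \prec_h \A$. Your contrapositive packaging is arguably a touch more direct than the paper's contradiction, but the underlying argument is the same.
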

This proposition relies on the folk theorem recalled below.
\begin{theorem}[Folk result]
	\label{thm:small_witness}
	Consider an alphabet $\Sigma$, and $\A,\A' \in \msf{DFA}(\Sigma)$. If $\Lan(\A) \neq \Lan(\A')$, then $\Sigma^{\leq |\A| + |\A'| - 2} \cap (\Lan(\A) \setminus \Lan(\A') \cup \Lan(\A') \setminus \Lan(\A)) \neq \emptyset$.
\end{theorem}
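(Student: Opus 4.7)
The plan is to prove the theorem by applying a Moore-style equivalence refinement to the disjoint union of $\A$ and $\A'$. Concretely, let $\A = (Q,\Sigma,\qinit,\delta,F)$ and $\A' = (Q',\Sigma,q'_0,\delta',F')$, and view them together as a single transition system on $Q \sqcup Q'$ with transitions inherited from $\delta,\delta'$ and accepting set $F \sqcup F'$. Call two states $p,q$ of this union \emph{$k$-equivalent} if, starting from either of them, the same words of length at most $k$ lead to an accepting state, and simply \emph{equivalent} if they are $k$-equivalent for every $k \in \N$. The key observation is that $\Lan(\A) = \Lan(\A')$ if and only if the two initial states are equivalent in this disjoint union.

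I then build the canonical refinement sequence $P_0, P_1, \ldots$ on $Q \sqcup Q'$, where $P_0 = \{ F \sqcup F',\, (Q \sqcup Q') \setminus (F \sqcup F') \}$ and $P_{k+1}$ is obtained from $P_k$ by splitting two states lying in the same $P_k$-class whenever some letter $\alpha \in \Sigma$ sends them to distinct classes of $P_k$. A routine induction on $k$ shows that $P_k$ is precisely the partition into $k$-equivalence classes: the case $k=0$ is immediate, and the inductive step uses that $(k{+}1)$-equivalence unfolds into $0$-equivalence together with $k$-equivalence of all one-letter successors.

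The heart of the argument is a counting bound on the step at which $(P_k)_k$ becomes stationary. Since $\Lan(\A) \neq \Lan(\A')$, the disjoint union contains both an accepting and a non-accepting state, so $|P_0| \geq 2$. Each strict refinement raises the number of classes by at least one, and that number is bounded by $|Q| + |Q'| = |\A|+|\A'|$. Consequently the sequence stabilizes no later than step $|\A|+|\A'|-2$, and from that step onward $P_k$ coincides with the full equivalence relation on $Q \sqcup Q'$.

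To conclude, the hypothesis $\Lan(\A) \neq \Lan(\A')$ puts the two initial states in different classes of $P_{|\A|+|\A'|-2}$, so by the characterization of $P_k$ some word of length at most $|\A|+|\A'|-2$ is accepted by exactly one of $\A,\A'$, which is the desired element of $\Sigma^{\leq |\A|+|\A'|-2} \cap \big(\Lan(\A) \setminus \Lan(\A') \cup \Lan(\A') \setminus \Lan(\A)\big)$. The main point requiring some care is the tightness of the bound: one has to track that $|P_0| \geq 2$ rather than merely $\geq 1$, since that single observation is exactly what turns the naive bound $|\A|+|\A'|-1$ into the claimed $|\A|+|\A'|-2$; a crude product-automaton approach would only yield the substantially weaker bound $|\A|\cdot|\A'|-1$.
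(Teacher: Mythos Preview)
Your proof is correct and follows essentially the same approach as the paper: the paper first isolates a lemma stating that in any single DFA with $n$ states, any two states admitting a distinguishing word admit one of length at most $n-2$ (proved via the Moore refinement $\sim_0 \supseteq \sim_1 \supseteq \cdots$ with $k_0 = 2$), and then applies that lemma to the disjoint union $\A''$ of $\A$ and $\A'$ on the pair $(\qinit,\qinit')$. You simply perform the refinement directly on the disjoint union, using the same counting argument ($|P_0|\geq 2$, strict refinements increase the class count, bounded by $|\A|+|\A'|$) to land on the bound $|\A|+|\A'|-2$.
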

We provide a proof of this theorem in \cite{bordais2025learningdfaspositiveexamples}. It essentially relies on the iterative computation of the Myhill-Nerode equivalence classes of DFAs. We also show that the bound $h = |\A| + |\A'| - 2$ is tight. 

The proof of Proposition~\ref{prop:language-minimal} is now direct.
\begin{proof}
	Consider a DFA $\A \in \msf{Rec}(\Pos,n)$ that is $\msf{Rec}(\Pos,n)$-minimal w.r.t. $\prec_h$, for $h = 2n-2$. Consider any DFA $\A' \in \msf{Rec}(\Pos,n)$ and assume towards a contradiction that we have $\A' \prec_\Lan \A$, i.e., $\Lan(\A') \subsetneq \Lan(\A)$. By Theorem~\ref{thm:small_witness}, there is some $u \in \Sigma^{\leq h} \cap (\Lan(\A) \setminus \Lan(\A'))$. Since we do not have $\A' \prec_h \A$, it follows that $|\Lan(\A) \cap \Sigma^{\leq h}| \leq |\Lan(\A') \cap \Sigma^{\leq h}|$. Hence, there must be some $u' \in \Sigma^{\leq h} \cap (\Lan(\A') \setminus \Lan(\A))$. Thus, we do not have $\Lan(\A') \subseteq \Lan(\A)$. Hence, the contradiction. 
\end{proof}

Following Proposition~\ref{prop:language-minimal}, we consider a new task. 
\begin{task}
	\label{task:counting}
	Given as input an alphabet $\Sigma$, a set $\Pos \subseteq \Sigma^*$, and $n \geq 1$, find a DFA $\A$ such that: a) $\A \in \msf{Rec}(\Pos,n)$; and b) the DFA $\A$ is $\msf{Rec}(\Pos,n)$-minimal w.r.t. $\prec_{2n-2}$.
\end{task}
Proposition~\ref{prop:language-minimal} gives that a DFA satisfying the requirements of Task~\ref{task:counting} also satisfies the requirements of Task~\ref{task}. Therefore, let us focus on solving this Task~\ref{task:counting}.

\section{Computing the minimal number of accepted words of length at most $2n-2$.}
\label{sec:investigate_decision_problem}
To study how we can solve Task~\ref{task:counting}, we investigate the associated decision problem, formally defined below. 
\begin{dpb}
	\label{problem1}
	Given as input an alphabet $\Sigma$, a set $\Pos \subseteq \Sigma^*$, $n \geq 1$, and $k \in \N$, decide if there exists a DFA $\A \in \msf{Rec}(\Pos,n)$ such that: $|\Lan(\A) \cap \Sigma^{\leq 2n-2}| \leq k$.
	
	The size of the input of this task is $n + |\msf{Pref}(\Pos)| + \log k$
	. 
\end{dpb}
As a first step in the study of this problem, let us consider how we can compute the number of words, up to a certain length, accepted by a DFA. Actually, this can be done in a direct (iterative) manner, as described in Lemma~\ref{lem:iterative_computation_number_of_words} below.
\begin{lemma}
	\label{lem:iterative_computation_number_of_words}
	Consider an alphabet $\Sigma$ and a DFA $\A \in \msf{DFA}(\Sigma)$. For all $q \in Q$ and $m \in \N$, we let $\msf{N}(q,m) := \{ w \in \Sigma^m \mid \delta^*(q_0,w) = q \}$. Then, $\msf{N}(q,0) = 1$ if $q = q_0$, and $\msf{N}(q,0) = 0$ otherwise. Furthermore, for all $(m,q) \in \N \times Q$: $\msf{N}(q,m+1) = \sum_{q' \in Q} \msf{N}(q',m) \cdot |\{ \alpha \in \Sigma \mid \delta(q',\alpha) = q \}|$.
	
	In addition, $|\Lan(\A) \cap \Sigma^{\leq m}| = \sum_{i=0}^m \sum_{q \in F} \msf{N}(q,i)$.
\end{lemma}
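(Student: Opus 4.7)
The proof is essentially bookkeeping, and I would organize it into three short parts matching the three claims of the lemma, treating $\msf{N}(q,m)$ consistently as the cardinality of the set it is defined to denote.

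For the base case, the only word in $\Sigma^0$ is $\epsilon$, and by definition $\delta^*(q_0,\epsilon) = q_0$. Hence $\msf{N}(q_0,0) = |\{\epsilon\}| = 1$ and $\msf{N}(q,0) = |\emptyset| = 0$ for $q \neq q_0$.

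For the recurrence, the key observation is a bijection that splits a length-$(m{+}1)$ word into its length-$m$ prefix and its final letter, indexed by the state reached after the prefix. Formally, I would show
\[
\{ w \in \Sigma^{m+1} \mid \delta^*(q_0,w) = q \} \;=\; \bigsqcup_{q' \in Q} \bigl\{ u \cdot \alpha \;\bigm|\; u \in \Sigma^m,\ \delta^*(q_0,u) = q',\ \alpha \in \Sigma,\ \delta(q',\alpha) = q \bigr\},
\]
where the union is disjoint because, by determinism, each $u \in \Sigma^m$ reaches a unique state $q' = \delta^*(q_0,u)$, so the prefixes $u$ of words in the different pieces are distinct. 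For a fixed $q'$, the corresponding piece is a Cartesian product of $\{ u \in \Sigma^m \mid \delta^*(q_0,u) = q'\}$ (of size $\msf{N}(q',m)$) and $\{\alpha \in \Sigma \mid \delta(q',\alpha) = q\}$, so its cardinality is the product of these two sizes. Summing over $q'$ gives the stated recurrence, which follows from the extension rule $\delta^*(q_0, u \cdot \alpha) = \delta(\delta^*(q_0,u), \alpha)$.

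For the final identity, I would partition $\Lan(\A) \cap \Sigma^{\leq m}$ first by length and then by the accepting state reached:
\[
\Lan(\A) \cap \Sigma^{\leq m} \;=\; \bigsqcup_{i=0}^{m} \bigsqcup_{q \in F} \bigl\{ w \in \Sigma^i \mid \delta^*(q_0,w) = q \bigr\},
\]
which is disjoint since a word has a unique length and $\delta^*$ assigns it a unique state. Taking cardinalities yields $|\Lan(\A) \cap \Sigma^{\leq m}| = \sum_{i=0}^m \sum_{q \in F} \msf{N}(q,i)$. I do not foresee any real obstacle here; the only subtle point is being explicit about the disjointness arguments, which is precisely what makes the sum (rather than merely an inclusion-exclusion) correct.
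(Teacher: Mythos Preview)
Your proposal is correct and follows essentially the same approach as the paper's proof: decompose length-$(m{+}1)$ words by the state reached after the length-$m$ prefix (using determinism for disjointness) to get the recurrence, then partition accepted words by length and accepting state for the final identity. If anything, your write-up is more careful than the paper's, which omits the base case and the explicit disjointness arguments.
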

Note that this lemma would not hold with automata that are not deterministic. 
%

Let us now state the main result of this paper. 
\begin{theorem}
	\label{thm:NP_complete}
	The decision problem~\ref{problem1} is $\msf{NP}$-complete.
\end{theorem}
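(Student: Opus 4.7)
The proof has two directions.

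For membership in NP, I would use as certificate a DFA $\mathcal{A}$ with at most $n$ states, described by its transition function and set of final states (size $O(n\,|\Sigma|)$). Verification amounts to two polynomial-time checks: (i) that $\mathcal{P} \subseteq \mathcal{L}(\mathcal{A})$, done by simulating $\mathcal{A}$ on the prefix tree of $\mathcal{P}$ in time $O(n \cdot |\msf{Pref}(\mathcal{P})|)$; and (ii) that $|\mathcal{L}(\mathcal{A}) \cap \Sigma^{\leq 2n-2}| \leq k$, done via the iterative formula of Lemma~\ref{lem:iterative_computation_number_of_words}. Each value $\msf{N}(q,m)$ with $m \leq 2n-2$ is bounded by $|\Sigma|^{2n-2}$, hence representable in $O(n\log|\Sigma|)$ bits, so computing all $O(n^2)$ such values, summing over $F$ for each length, and comparing with $k$ takes time polynomial in $n + |\msf{Pref}(\mathcal{P})| + \log k$.

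For NP-hardness, I would reduce from a natural NP-hard problem, the two most plausible sources being 3-SAT and the classical passive DFA learning problem from positive and negative examples (PLPN), already known to be NP-complete~\cite{DBLP:journals/iandc/Gold78}. The general strategy is to start from an instance of the source problem and build a PLP instance $(\Sigma', \mathcal{P}', n', k)$---possibly over an enlarged alphabet with a fresh marker symbol---so that the positive set $\mathcal{P}'$ is rich enough to pin down almost every transition of any candidate DFA in $\msf{Rec}(\mathcal{P}', n')$. The only remaining freedom in such a DFA should correspond to the choices made by the source instance: truth assignments to the 3-SAT variables, or acceptance/rejection status of the PLPN negative words. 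The counting bound $k$ is then calibrated so that ``good'' DFAs---those induced by a satisfying assignment, or a consistent labelling---are precisely those whose accepted-word count up to length $2n'-2$ does not exceed $k$.

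The main obstacle lies in this reduction: a counting bound is a single aggregate inequality, while 3-SAT clauses and PLPN negative-example constraints are of a per-word nature. The gadget must therefore ensure that every ``bad'' local choice of the DFA---one that falsifies a clause or fails to reject a designated word---contributes a strictly positive and precisely quantifiable amount to the count of accepted words of length $\leq 2n'-2$, and that these contributions cannot cancel each other out. I expect the accounting to rely heavily on the formula of Lemma~\ref{lem:iterative_computation_number_of_words}, used to compute exactly the ``cost'' of redirecting each candidate transition, combined with padding of $\mathcal{P}'$ that forces all remaining transitions to have uniquely determined targets. Once the reduction is set up so that only one ``skeleton'' is admissible and each deviation is surcharged independently, calibrating $k$ to the count achieved by a ``good'' DFA will make both directions of the reduction follow by construction.
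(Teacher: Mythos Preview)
Your NP-membership argument is correct and matches the paper's proof essentially verbatim: guess a DFA, verify $\mathcal{P}\subseteq\mathcal{L}(\mathcal{A})$ by simulation, and count accepted words up to length $2n-2$ via Lemma~\ref{lem:iterative_computation_number_of_words}.

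The NP-hardness part, however, is not a proof but a plan. You correctly identify the shape of what is needed---a SAT-style source problem, a set $\mathcal{P}'$ that rigidifies most of the DFA while leaving freedom encoding the assignment, and a threshold $k$ calibrated so that exceeding it detects a bad choice---but you stop precisely where the difficulty begins. Everything you write after ``The main obstacle lies in this reduction'' is a list of desiderata, not a construction. The paper devotes its entire Appendix~\ref{appen:sec2} to building this gadget, and the work is substantial: one needs words that force $\delta^*(q_0,w_\top)\neq\delta^*(q_0,w_\bot)$ (via ``too-many-errors'' arguments exploiting $k+1$ distinguishing suffixes), words that force pairwise-distinct variable states, words that force each variable state to transition into $\{q_\top,q_\bot\}$ (via ``too-many-states'' pigeonhole arguments against the bound $n$), analogous machinery for clauses, and finally a tight accounting showing that the error budget $k$ is met \emph{exactly} when the encoded valuation satisfies every clause. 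The interplay between the state bound $n$ and the error bound $k$---each pulling in opposite directions---is delicate, and the paper has to solve a small system of inequalities on auxiliary parameters $d,M,T$ to make both directions of the reduction go through simultaneously.

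Two further points where your sketch diverges from what actually works. First, the paper reduces from \textsf{All-Pos-Neg SAT} rather than 3-SAT; the all-positive/all-negative clause structure is what allows a single transition (to $q_\top$ or $q_\bot$) to decide clause satisfaction cleanly. Second, you suggest enlarging the alphabet with a fresh marker; the paper achieves the reduction over a \emph{binary} alphabet $\{a,b\}$, which is strictly harder and requires the elaborate prefix-coding of $\top$-, $\bot$-, variable-, and clause-words that you have not attempted. Without the concrete word sets and the two-sided calibration of $n$ and $k$, the hardness direction remains unproved.
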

\begin{proof}
	We argue the $\msf{NP}$-hardness in a dedicated section below. As for Problem~\ref{problem1} being in $\msf{NP}$, this problem can be solved as follows: we guess a DFA $\A \in \msf{Rec}(\mathcal{P},n)$---we can check that it is indeed the case in time polynomial in $|\msf{Pref}(\Pos)|$ by simulating the runs of $\A$ on words in $\mathcal{P}$---, we compute $|\Lan(\A) \cap \Sigma^{\leq 2n-2}| \in \N$ and compare it with $k$. Although $|\Lan(\A) \cap \Sigma^{\leq 2n-2}|$ may be exponential, we can compute it in polynomial time with the help of Lemma~\ref{lem:iterative_computation_number_of_words}. 
\end{proof}

Before we present the $\msf{NP}$-hardness proof, let us discuss the two major implications of Theorem~\ref{thm:NP_complete}. 
First, the fact that the decision problem~\ref{problem1} is in $\msf{NP}$ suggests an algorithm solving Task~\ref{task:counting}, and thus Task~\ref{task} as well, with seemingly better asymptotic guarantees than the state-of-the-art algorithm.
\begin{proposition}
	\label{thm:solveTask}
	Given an alphabet $\Sigma$, a set $\Pos \subseteq \Sigma^*$, and an integer $n \in \N$, we can solve Task~\ref{task:counting} with $\mathcal{O}(poly(|\Sigma|,n))$-calls to a SAT solver on inputs of size $\mathcal{O}(poly(|\Sigma|,|\mathsf{Pref}(\mathcal{P})|,n))$.
\end{proposition}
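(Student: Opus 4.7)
The plan is to combine a binary search over the target value $k$ with the $\msf{NP}$-membership of Problem~\ref{problem1} established in Theorem~\ref{thm:NP_complete}. First, I would observe that for any $\A \in \msf{Rec}(\Pos,n)$ we have $|\Lan(\A) \cap \Sigma^{\leq 2n-2}| \leq |\Sigma^{\leq 2n-2}| \leq |\Sigma|^{2n-1}$, so the minimum value $k^*$ of $|\Lan(\A) \cap \Sigma^{\leq 2n-2}|$ over $\A \in \msf{Rec}(\Pos,n)$ is bounded by $|\Sigma|^{2n-1}$, giving $\log k^* = \mathcal{O}(n \log |\Sigma|)$. Thus $k^*$ can be represented in space polynomial in $|\Sigma|$ and $n$.

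By Theorem~\ref{thm:NP_complete} and the Cook--Levin theorem, there is a polynomial-time reduction from Problem~\ref{problem1} to SAT. Given an input $(\Sigma, \Pos, n, k)$, this reduction produces a SAT instance $\phi(\Sigma, \Pos, n, k)$ of size polynomial in $n + |\msf{Pref}(\Pos)| + \log k$; using the bound on $k$, this size is $\mathcal{O}(\msf{poly}(|\Sigma|, |\msf{Pref}(\Pos)|, n))$ throughout the relevant range $\{0, \ldots, |\Sigma|^{2n-1}\}$. I would then perform a binary search on $k$ within this range, each step calling a SAT solver on $\phi(\Sigma, \Pos, n, k)$ to find the smallest $k^*$ for which it is satisfiable. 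Binary search terminates after $\mathcal{O}(\log(|\Sigma|^{2n-1})) = \mathcal{O}(n \log |\Sigma|) = \mathcal{O}(\msf{poly}(|\Sigma|, n))$ SAT calls. Decoding the satisfying assignment returned at $k^*$ yields a DFA $\A \in \msf{Rec}(\Pos,n)$ with $|\Lan(\A) \cap \Sigma^{\leq 2n-2}| = k^*$, which is therefore $\msf{Rec}(\Pos,n)$-minimal w.r.t. $\prec_{2n-2}$ and solves Task~\ref{task:counting}.

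The main technical point is to make the SAT encoding compact and witness-preserving, rather than relying on a generic Cook--Levin reduction from the NP machine implicit in the proof of Theorem~\ref{thm:NP_complete}. The natural encoding introduces Boolean variables for the transition function $\delta$ and the final-state indicator $F$ of the prospective DFA (roughly $\mathcal{O}(|\Sigma| \cdot n)$ variables), auxiliary variables tracking the state reached on each prefix in $\msf{Pref}(\Pos)$ to enforce $\Pos \subseteq \Lan(\A)$, and bit-vector variables encoding the counts $\msf{N}(q,m)$ from Lemma~\ref{lem:iterative_computation_number_of_words} for $(q,m) \in Q \times \TInter{0,2n-2}$. Since each count is bounded by $|\Sigma|^{2n-2}$, it fits in $\mathcal{O}(n \log |\Sigma|)$ bits, so standard arithmetic gadgets can express the recurrence of Lemma~\ref{lem:iterative_computation_number_of_words} and the final comparison with $k$ in polynomial size. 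Reading $\delta$ and $F$ off a satisfying assignment then directly yields the desired DFA, completing the construction.
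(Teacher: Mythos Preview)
Your proposal is correct and follows essentially the same approach as the paper: perform a binary search over $k \in \{0,\ldots,|\Sigma^{\leq 2n-2}|\}$, using the $\msf{NP}$-membership of Problem~\ref{problem1} to reduce each query to a SAT instance of polynomial size, yielding $\mathcal{O}(n\log|\Sigma|)$ SAT calls. Your additional paragraph spelling out a concrete witness-preserving encoding (transition variables, prefix-tracking variables, bit-vector counts via Lemma~\ref{lem:iterative_computation_number_of_words}) goes beyond what the paper's proof states explicitly, but is consistent with it and with the ILP encoding the paper develops later.
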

\begin{proof}	
	We have $|\Sigma^{\leq 2n-2}| = \frac{m^{2n-1} -1}{m-1}$, for $m := |\Sigma| \geq 2$. Hence, a binary search querying, at each step, a SAT solver on an encoding of Problem~\ref{problem1} with $1 \leq k \leq |\Sigma^{\leq 2n-2}|$ would take at most $\log(|\Sigma^{\leq 2n-2}|) \leq (2n-1) \cdot \log m$ steps to find a DFA $\A \in \msf{Rec}(\mathcal{P},n)$ satisfying $|\Lan(\A) \cap \Sigma^{\leq 2n-2}| = \min_{\A' \in \msf{Rec}(\mathcal{P},n)} |\Lan(\A') \cap \Sigma^{\leq 2n-2}|$. By definition, such a DFA $\A$ would be $\msf{Rec}(\mathcal{P},n)$-minimal w.r.t. $\prec_{2n-2}$.
\end{proof}

Second, unless $\msf{P} = \msf{NP}$, the $\msf{NP}$-hardness of Decision Problem~\ref{problem1} makes it unlikely that it is possible to solve Task~\ref{task:counting} with an asymptotic bound significantly better than that of Proposition~\ref{thm:solveTask}. This only applies to Task~\ref{task:counting}, and there might be an efficient way to solve Task~\ref{task} without solving Task~\ref{task:counting}. 

\paragraph{NP-hardness proof.} The NP-hardness proof of Theorem~\ref{thm:NP_complete} is quite technical, see \cite{bordais2025learningdfaspositiveexamples} for a detailed proof. 
Here, we only give a bird's eye view of the steps that we take to prove the result. 

The $\msf{NP}$-hardness proof relies on a reduction from the $\msf{All}$-$\msf{Pos}$-$\msf{Neg}$ $\msf{SAT}$ ($\msf{APN}$-$\msf{SAT}$) problem that was already used by 
Lingg, de Oliveira Oliveira, and Wolf (2024) to establish the $\msf{NP}$-hardness of the classical passive learning problem with positive and negative words (with a binary alphabet). The $\msf{APN}$-$\msf{SAT}$ problem is a variant of the classical $\msf{SAT}$ problem where we are given a set of clauses to satisfy, with each clause being constituted of only positive literals (i.e., without negations) or only negative literals (i.e., with negations). This problem is defined as follows.
\begin{dpb}
	\label{problem_sat_pos_neg}	
	Consider as input a set $X := \{ x_i \mid i \in \Inter{r} \}$ of variables, a set of clauses $\mathcal{C} = \{ C_j \mid j \in \Inter{s} \}$ that is described as the disjoint union of sets of positive and negative clauses: $\mathcal{C} = \mathcal{C}_+ \uplus \mathcal{C}_-$, where for all $j \in \Inter{s}$, we have $C_j \subseteq X$. We have to decide if there is a valuation $\nu\colon X \to \{\top,\bot\}$ of the variables satisfying $(X,\mathcal{C})$, i.e., such that, for all $j \in \Inter{s}$:
	\begin{align*}
		\exists x \in C_j:\; \nu(x) = \begin{cases}
			\top & \text{ if }C_j \in \mathcal{C}_+ \\
			\bot & \text{ if }C_j \in \mathcal{C}_- \\
		\end{cases}
	\end{align*}	
\end{dpb}

For the remainder of this section, we consider an instance $(X,\mathcal{C})$ of Problem~\ref{problem_sat_pos_neg} with $X = \{ x_i \mid i \in \Intr \}$ the set of variables, and $\mathcal{C} = \{ C_j \mid j \in \Ints \} = \mathcal{C}_+ \uplus \mathcal{C}_-$ the set of clauses. The alphabet that we consider is $\Sigma := \{a,b\}$. 
Our goal is to define a set of positive words $\mathcal{P}$ and two integers $n$ and $k$ such that $(X,\mathcal{C})$ is a positive instance of Problem~\ref{problem_sat_pos_neg} if and only if there is a $(\Pos,n,k)$-suitable DFA $\mathcal{A}$, i.e., such that 1) $\A \in \msf{Rec}(\mathcal{P},n)$; and 2) $|\Sigma^{\leq 2n-2} \cap (\Lan(\mathcal{A}) \setminus \Pos)| \leq k$\footnote{This condition constitutes a slight change compared to Problem~\ref{problem1}. This is however harmless, see the extended version.}. 

Let us first discuss what a DFA satisfying conditions 1) and 2) could look like on a simple input of Problem~\ref{problem_sat_pos_neg}. 
That way, we will be able to illustrate (some of) the kinds of words that we include in $\mathcal{P}$, which the DFA has to accept
.
\begin{example}
	\label{example:DFA}
	Consider the set of variables $X = \{ x_1,x_2,x_3 \}$, a single positive clause $C_1 = \{ x_1,x_3 \} \in \mathcal{C}_+$ and single a negative clause $C_2 = \{ x_2,x_3 \} \in \mathcal{C}_-$. In Figure~\ref{fig:smallDFA}, we scheme the shape of a DFA $\A_{\msf{ex}}$ that would be $(\mathcal{P},n,k)$-suitable, with $(\mathcal{P},n,k)$---which we partly describe here---being the result of the reduction from the instance $(X,\mathcal{C})$
	. The double-circled states are accepting. 
	
	In $\mathcal{P}$, we define a \textquotedblleft{}$\top$-word\textquotedblright{} $a \cdot a \cdot a \cdot a$ and a 
	\textquotedblleft{}$\bot$-word\textquotedblright{} $a \cdot b \cdot a \cdot a$. One can see that when the DFA $\A_{\msf{ex}}$ runs on those words, it visits the right-most (green and red) states. 
	
	We also define \textquotedblleft{}variables words\textquotedblright{} in $\mathcal{P}$ that the DFA $\A_{\msf{ex}}$ processes by visiting the central (blue) states. In particular, these words impose to $\A_{\msf{ex}}$ that we reach either $q_\top^1$ or $q_\bot^1$ upon reading the letter $a$ from the states $q_{x_i}'$, for $i \in \{1,2,3\}$. This corresponds to the bold transitions from the central states to the right-most states.
	These transitions uniquely define a valuation $\nu\colon X \to \{\top,\bot\}$, specifically: $\nu(x_1) = \top$, $\nu(x_2) = \bot$, $\nu(x_3) = \top$.
	
	In addition, we define \textquotedblleft{}clause words\textquotedblright{} in $\mathcal{P}$ that $\A_{\msf{ex}}$ processes by visiting the left-most (yellow) states. Some of these words ensure that upon reading the letter $b$ from $q_{C_j}$, for $j \in \{1,2\}$, we reach a state $q_{x_i}'$, for $i \in \{1,2,3\}$, such that $x_i \in C_j$. The other \textquotedblleft{}clause words\textquotedblright{} ensure that a positive (resp. negative) clause-state eventually leads to a $\top$-state (resp. $\bot$-state). Namely, the word $b \cdot b \cdot b \cdot a \in \mathcal{P}$ encodes that $C_1$ is a positive clause, and the word $b \cdot b \cdot a \cdot b \cdot a \cdot a \in \mathcal{P}$ encodes that $C_2$ is a negative clause.
	
	Overall, the valuation $\nu\colon X \to \{\top,\bot\}$ uniquely defined by $\A_{\msf{ex}}$ satisfies $(X,\mathcal{C})$, e.g., for the negative clause $C_2$, there is a bold edge from $q_{C_2}$ to $q'_{x_2}$, thus we must have $x_2 \in C_2$ and $\nu(x_2) = \bot$, which is indeed the case. 
\end{example}

\begin{figure}
	\centering
	\includegraphics[width=0.8\textwidth]{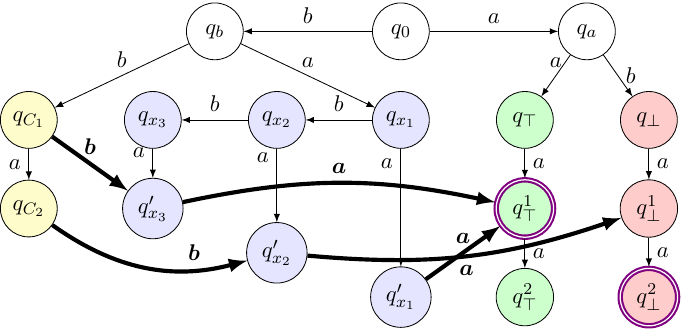}
	\caption{The shape of a DFA $\A_{\msf{ex}}$.
	}
	\label{fig:smallDFA}
\end{figure}

Let us now describe a little more formally how the words that we define in $\Pos$ ensure that the existence of a $(\Pos,n,k)$-suitable DFA $\mathcal{A}$ implies that $(X,\mathcal{C})$ is a positive instance of Problem~\ref{problem_sat_pos_neg}. Below, we consider a DFA $\mathcal{A} = (Q,\Sigma,\qinit,\delta,F)$ satisfying conditions 1) and 2). 
\begin{enumerate}
	\item We define two words $w_\top,w_\bot \in \mathcal{P}$ and we ensure that $\delta^*(q_0,w_\top) \neq \delta^*(q_0,w_\bot)$. This fact will allow us to distinguish when a variable is mapped to true ($\top$) and to false ($\bot$). In the DFA $\A_{\msf{ex}}$ of Example~\ref{example:DFA}, we have $\delta^*(q_0,w_\top) = q_\top^1 \neq q_\bot^1 = \delta^*(q_0,w_\bot)$.
	\item For each variable $x \in X$, we define a variable word $w_x \in \mathcal{P}$ and we ensure that, for all $x \neq x' \in X$, $\delta^*(q_0,w_x) \neq \delta^*(q_0,w_{x'})$. In the DFA $\A_{\msf{ex}}$, we have $\delta^*(q_0,w_{x_1}) = q_{x_1}'$, $\delta^*(q_0,w_{x_2}) = q_{x_2}'$, and $\delta^*(q_0,w_{x_3}) = q_{x_3}'$.
	\item We add words in $\mathcal{P}$ to ensure that there is some $w_{\msf{Var}} \in \Sigma^*$, such that, for all $x \in X$, we have $\delta^*(q_0,w_x \cdot w_{\msf{Var}}) \in \{ \delta^*(q_0,w_\top),\delta^*(q_0,w_\bot) \}$. In Example~\ref{example:DFA}, $w_{\msf{Var}} = a$. When this property is ensured, a valuation $\nu\colon X \to \{\top,\bot\}$ is uniquely defined by, for all $x \in X$: $\delta^*(q_0,w_x \cdot w_{\msf{Var}}) = \delta^*(q_0,w_{\nu(x)})$. (Note that the valuation $\nu\colon X \to \{\top,\bot\}$ is unique because $\delta^*(q_0,w_\top) \neq \delta^*(q_0,w_\bot)$.)
	\item For each clause $C \in \mathcal{C}$, we define a clause word $w_C \in \mathcal{P}$ and we ensure that, for all $C,C' \in \mathcal{C}$, $\delta^*(q_0,w_C) \neq \delta^*(q_0,w_{C'})$. In the DFA $\A_{\msf{ex}}$, we have $\delta^*(q_0,w_{C_1}) = q_{C_1}$, and $\delta^*(q_0,w_{C_2}) = q_{C_2}$.
	\item We add words in $\mathcal{P}$ to ensure that there is some $w_{\msf{Cl}} \in \Sigma^*$, such that, for all $C \in \mathcal{C}$, there is some variable $x_C \in C$ such that we have $\delta^*(q_0,w_C \cdot w_{\msf{Cl}}) = \delta^*(q_0,w_{x_C})$. In Example~\ref{example:DFA}, we have $w_{\msf{Cl}} = b$.
	\item Finally, we add words in $\mathcal{P}$ to ensure that, for all clauses $C \in \mathcal{C}$, we have: $\delta^*(q_0,w_C \cdot w_{\msf{Cl}} \cdot w_{\msf{Var}}) = \delta^*(q_0,w_\top)$ if $C \in \mathcal{C}_+$, and $\delta^*(q_0,w_C \cdot w_{\msf{Cl}} \cdot w_{\msf{Var}}) = \delta^*(q_0,w_\bot)$ if $C \in \mathcal{C}_-$. In Example~\ref{example:DFA}, this corresponds to the words $b \cdot b \cdot b \cdot a, b \cdot b \cdot a \cdot b \cdot a \cdot a \in \mathcal{P}$.
\end{enumerate}
With Items 5. and 6., we have that the valuation defined in Item 3. satisfies $(X,\mathcal{C})$. Indeed, consider e.g. a positive clause $C \in \mathcal{C}$. By Fact 5, there is variable $x_C \in C$ such that $\delta^*(q_0,w_C \cdot w_{\msf{Cl}}) = \delta^*(q_0,w_{x_C})$. By Fact 6, we have $\delta^*(q_0,w_{x_C} \cdot w_{\msf{Var}}) = \delta^*(q_0,w_C \cdot w_{\msf{Cl}} \cdot w_{\msf{Var}}) = \delta^*(q_0,w_\top)$. Thus, $\nu(x_C) = \top$ and the valuation $\nu$ satisfies the clause $C$. Note that, in the actual reduction described in the extended version \cite{bordais2025learningdfaspositiveexamples}, there are many words in $\mathcal{P}$ associated with $\top$, $\bot$, variables and clauses. 

Now that we have described the general idea of the construction, we would like to sketch how we prove that a DFA satisfying conditions 1) and 2) necessarily ensures the various properties laid out in Items 1., 2., 3., 4., 5., and 6. above. We use two different kinds of arguments that leverage the two opposing constraints on DFAs satisfying conditions 1) on 2): on the one hand, such DFAs must not make too many errors (i.e., accepting too many words not in $\mathcal{P}$), this is leveraged by TME-arguments (\textquotedblleft{}Too Many Errors\textquotedblright{}); on the other hand, such DFAs must not have too many states, this is leveraged by TMS-arguments (\textquotedblleft{}Too Many States\textquotedblright{}). 
\begin{itemize}
	\item TME-arguments are used to show that two words $w,w' \in \Sigma^*$ are mapped to two different states by the function $\delta^*(\qinit,\cdot)$. Typically, the structure of a TME-argument is as follows. Assume that there are $k+1$ words $(u_\sigma)_{\sigma \in \Inter{k+1}}$ such that we have $w \cdot u_\sigma \in \mathcal{P}$ and $w' \cdot u_\sigma \notin \mathcal{P}$ for all $\sigma \in \Inter{k+1}$. In that case, it cannot be that $\delta^*(\qinit,w) = \delta^*(\qinit,w')$, since otherwise we would have $w' \cdot u_\sigma \in \Lan(\mathcal{A}) \setminus \mathcal{P}$, for all $\sigma \in \Inter{k+1}$, which is a contradiction with condition 2) (assuming that $2n-2$ is large enough). This type of argument 
	allows to establish the properties described in Facts 1., 2. and 4.
	\item TMS-arguments are used to show that in two sets of words $W,W' \subseteq \Sigma^*$ there must be some $w \in W,w' \in W'$ such that $\delta^*(\qinit,w) = \delta^*(\qinit,w')$. To do so, it suffices to show that we have $|W| + |W'| > n$. 
	TMS-arguments 
	allow us to establish the properties described in Facts 3. and 5.
\end{itemize}
As for the properties of Fact 6, 
they are direct consequences of the properties of other facts (specifically, Fact 3 and Fact 5), and of the words that we define in $\mathcal{P}$.

\section{Experiments}
Our goal is now to use our new word-counting perspective to improve the state-of-the-art algorithm solving Task~\ref{task}. All of our code and data is available on GitHub (cf. the top page). Before we discuss our algorithms, let us 
describe the experimental setup that we have used.

\paragraph{Experimental setup.} We ran all of our experiments on Ubuntu 24.04.2 LTS, using 30 GiB of RAM, 12 CPU cores, and a clock speed of 5.2 GHz.

In order to compare our algorithms with the state-of-the-art symbolic algorithm developed by 
Roy et al. (2023), we have used the inputs used in that paper to compare the performance of the symbolic algorithm and the counterexample guided algorithm developed by Avellaneda and Petrenko (2018). These inputs are constituted of 28 samples of around 1000 words of lengths from 1 to 10. These words are obtained by sampling the languages of randomly generated small DFAs (of at most 10 states), with an alphabet of size 3. As in \cite{DBLP:conf/aaai/0002GBN0T23}, we consider a timeout (TO) of 1000s to solve instances of Task~\ref{task} and Task~\ref{task:counting}. All experiments are done with a number of states $n$ between 1 and 8. For all samples, we have run the algorithms 10 times, and we use the mean runtime (with the standard deviation).

\subsection{ILP algorithm solving Task~\ref{task:counting}}
To solve Task~\ref{task:counting}, we encode it as an Integer Linear Programming (ILP) minimization problem
. In that setting, the goal is to minimize a linear combination of integer variables, subject to a system of linear inequalities (with integer coefficients). When encoding Task~\ref{task:counting} in the ILP setting, the quantity to minimize corresponds to the number of words up to length $2n-2$ accepted by a prospective DFA. 

Let us describe on a high level how our ILP encoding works---the details can be found in the extended version \cite{bordais2025learningdfaspositiveexamples}. Our encoding relies on two types of variables: binary variables, encoding the DFA and checking that it is in $\msf{Rec}(\mathcal{P},n)$; and integer variables which are used to count the number of accepted words. The inequalities involving the binary variables are actually very close to the logical constraints used in the SAT-encoding developed by \cite{DBLP:conf/aaai/0002GBN0T23} (2023). As for the integer variables, our encoding relies on Lemma~\ref{lem:iterative_computation_number_of_words}: the inductive relation described in this lemma allows us to store in an integer variable the number of words of length at most $2n-2$ accepted by the prospective DFA, which we can set as a minimization goal, subject to the other inequalities. Note that, to avoid having variable multiplication, we had to use the classical Big-M method \cite{griva2008linear}.

We implemented our algorithm in Gurobi \cite{gurobi} which we have used with a Python3 API. In Figure~\ref{fig:compare_symb_ILP}, we have plotted in blue dots the computation time of our ILP algorithm (y-axis) and the computation time of the symbolic algorithm (x-axis) on each 28 samples with $n=4$. Clearly, the ILP algorithm largely under-performs the symbolic algorithm
. The situation is not significantly better for the case $n=6$ plotted as red stars\footnote{We only plotted the samples for which the ILP algorithm did not reach a full timeout (i.e., all 10 attempts timeout) before $n=6$.}. Overall, our ILP algorithm does not stand the comparison with the symbolic algorithm. We discuss below three possible reasons for this significant runtime difference.

\begin{figure}
	\centering
	\includegraphics[width=0.5\textwidth]{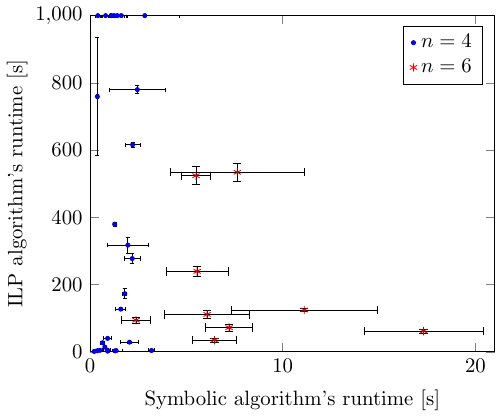}
	\caption{The comparison on the runtime of the symbolic and ILP algorithms.}
	\label{fig:compare_symb_ILP}
\end{figure}

First, from a theoretical standpoint, the symbolic algorithm could asymptotically take exponentially many steps to terminate. 
However, in practice, the symbolic algorithm does not take many steps to terminate (around 20, in the worst case), and thus greatly outperforms our ILP algorithm. There could be some kinds of samples for which the symbolic algorithm takes much more steps to conclude, in which case the runtime difference 
could be much smaller.

Furthermore, ILP solvers and SAT solvers internally use really different mechanisms: branch and bound for ILP solvers and DPLL for SAT solvers. It could be that in the case of these samples, DPLL works much better than branch and bound. This could be tested by implementing a binary search algorithm solving Task~\ref{task:counting} by querying SAT solvers. 

Finally, the ILP algorithm solves Task~\ref{task:counting}, while the symbolic algorithm solves Task~\ref{task}. We have shown in Proposition~\ref{prop:language-minimal} that Task~\ref{task:counting} is at least as hard to solve as Task~\ref{task}, it could very well be that it is simply much harder to solve.

\subsection{A pre-processing algorithm}
The symbolic algorithm starts from the trivial accepting DFA and then computes a $\prec_\Lan$-decreasing chain of DFAs until it finds a language-minimal one. To speed-up this process, we have designed a heuristic algorithm that computes a DFA that can then be used as a starting point for the symbolic algorithm. We first describe this heuristic algorithm, and then we discuss its experimental evaluation.

\paragraph{Description of the heuristic algorithm.} The goal of this heuristic algorithm is to output a DFA $\A \in \msf{Rec}(\mathcal{P},n)$ that accepts as few words of length at most $2n-2$ as possible. The general idea of the algorithm is to randomly search among DFAs, and pick one that minimizes the number of accepted words. However, it is not clear how to randomly search among DFAs in $\msf{Rec}(\mathcal{P},n)$. Therefore, our heuristic algorithm focuses on transition systems instead. A transition system can be seen as a DFA without starting or final states, i.e., it is described by a tuple $T = (Q,\Sigma,\delta)$ with $\delta\colon Q \times \Sigma \to Q$. Consider then a transition system $T = (Q,\Sigma,\delta)$ with $|Q| \leq n$ and a starting state $q \in Q$. It is clear that choosing $F_{T,q} := \{ \delta^*(q,w) \mid w \in \mathcal{P} \}$ minimizes the number of words accepted by the DFA $\A_{T,q} := (Q,\Sigma,\delta,q,F_{T,q})$ while ensuring that $\A_{T,q} \in \msf{Rec}(\mathcal{P},n)$. Following, given a transition system $T$, we define the \emph{score} of the transition system $T$ as follows: $\min_{q \in Q} |\Lan(\A_{T,q}) \cap \Sigma^{\leq 2n-2}|$. We naturally extend the notion of score to DFAs. Our heuristic algorithm then proceeds as follows:
\begin{enumerate}
	\item Randomly search ($\msf{InitRand}$ attempts) $n$-states transition systems and pick one with the best (i.e., lower) score.
	\item From the current transition system, look through all possible changes of a single transition. Make the change that improves the score the most, if one exists. Otherwise, stop. (Theoretically, this could take exponentially many steps to terminate, in practice, around 30 steps suffice.) 
	\item Do the two above steps $\msf{NbRun}$ times, pick the transition system $T$ with the best score, extract a DFA from $T$ with the best score (over all possible starting states).
\end{enumerate}
This heuristic algorithm has two hyper-parameters: $\msf{InitRand}$ and $\msf{NbRun}$. From several experiments conducted on sample 02 (with $\msf{InitRand} \in \{10,100,1000\}$ and $\msf{NbRun} \in \{10,20,50,100\}$), we have chosen $\msf{InitRand}=100$ and $\msf{NbRun}=50$ as a compromise between running time and efficiency, though it is plausible that better values could be chosen (even for sample 02).

\begin{figure}	
	\begin{minipage}{0.48\linewidth}
		\centering
		\includegraphics[width=1\textwidth]{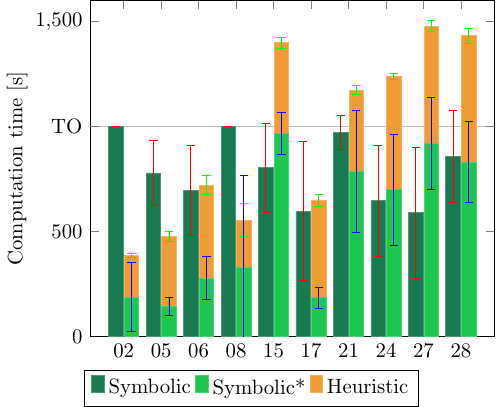}
		\caption{The runtime comparison of the symbolic and the heuristic+symbolic$^*$ algorithms.}	\label{fig:compare_heurisitc}
	\end{minipage}
	\begin{minipage}{0.02\linewidth}
		\phantom{a}
	\end{minipage}
	\begin{minipage}{0.48\linewidth}
		\centering
		\includegraphics[width=1\textwidth]{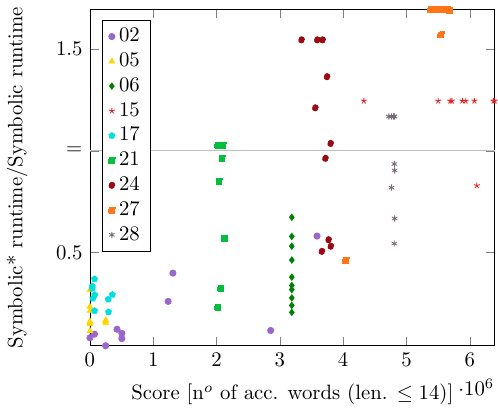}
		\caption{The quotient of the runtime of the symbolic(*) algorithms as a function of the starting score.}
		\label{fig:compare_score}
	\end{minipage}
\end{figure}

\paragraph{Experimental evaluation.} We have implemented this heuristic algorithm in C++ and compared the performance of the symbolic algorithm starting from the trivial accepting DFA and of the symbolic algorithm starting from a DFA computed by the heuristic algorithm, referred to as the symbolic* algorithm. Since the runtime of the heuristic algorithm is not negligible, the experiments are performed on the samples (and number of states $n$) for which the symbolic algorithm takes 
at least 500s. This covers 10 samples, all with $n=8$ except for sample 08, where $n=7$. The results are summarized in Figure~\ref{fig:compare_heurisitc}. 

There are two main positive takeaways from these results. First, there are three samples (02, 05, 08) where our algorithm (heuristic+symbolic*) significantly outperforms the symbolic algorithm. This is a promising result with a portfolio approach in mind where both the heuristic+symbolic* and symbolic algorithms would run in parallel
. Second, there are five samples (02, 05, 06, 08, 17) where the computation time of the symbolic* algorithm is much lower than that of the symbolic algorithm. This result is promising for harder instances where the runtime of the heuristic algorithm becomes more-or-less negligible compared to the runtime of the symbolic algorithm.

On the negative side, on the five other samples, the symbolic* algorithm does not outperform the symbolic algorithm, and even under-performs it for sample 27. This shows that a lower starting score (i.e., score of a starting DFA) does not always mean a better performance of the symbolic algorithm. To better identify how the starting score impacts the runtime of the symbolic algorithm, we have represented relevant data in Figure~\ref{fig:compare_score}. 

In this Figure, each data point corresponds to a run of the symbolic* algorithm\footnote{To compare the scores, we have only depicted the samples where the computation of the heuristic+symbolic* algorithm is done with $n=8$, thus we excluded sample 08.}: in the x-axis, the starting score is depicted, in the y-axis is depicted the quotient of the runtime of the symbolic* algorithm and the mean of the runtime of the symbolic algorithm on that sample. 

From this figure, we can extract the following facts: when starting from a very low score, the runtime ratio is small (as exemplified for samples 02, 05, 17 and also possibly 27); and when starting from higher scores, the impact of the score on the runtime is hard to see. These facts may suggest that the significant differences on the various samples may mostly come from the fact that for some samples the heuristic algorithm has found a starting DFA with a (very) low score, while it is not the case for the other samples. Following, improving the capacity of our heuristic algorithm at finding DFAs with (very) low score (when possible) may have the potential to greatly lower the runtime of the heuristic+symbolic* algorithm. Nonetheless, how the starting score influences the runtime of the SAT solver remains rather cryptic despite our experiments, and further investigating this relationship seems warranted to improve the efficiency of the minimizing-score heuristic approach.

\section{Conclusion}
In this paper, we have studied the task of learning DFAs from positive examples only with a novel word-counting perspective. The core of our studies is theoretical: we have shown the first complexity result on a decision problem directly related to this task, and we have exhibited a new algorithm solving this task with better theoretical guarantees than the state-of-the-art. On the experimental side, our ILP algorithm performs poorly, but our heuristic algorithm shows promising results as a pre-processing tool to use prior to running the symbolic algorithm. Notably, this algorithm shows that transition systems are particularly well-suited for handling this learning from positive examples task.

Our main takeaway is that a better understanding (both theoretically and experimentally) of the relationship between the language minimality of DFAs and their number of accepted words would constitute a valuable insight in the design of efficient algorithms learning DFAs from positive examples only.

\bigskip

\noindent
\textbf{Acknowledgements.} This work has been financially supported by Deutsche Forschungsgemeinschaft, DFG Project number 434592664.

\bibliographystyle{alpha}
\bibliography{ref}

\appendix
\section{Word-counting and language minimality}
\label{appen:proof_thm}
Here, we establish the folk result Theorem 1. (One can alternatively find a proof and statement in and proof in \cite[Exercice 37]{esparza2021automata}.) The proof of this theorem relies on the lemma* below.
\begin{lemma}
	\label{lem:distinguishing_words}
	Consider a DFA $\A \in \msf{DFA}(\Sigma)$ and let $n := |\A|$. For all pairs of states $(q,q') \in Q^2$, we say that a word $w \in \Sigma^*$ is $(q,q')$-\emph{distinguishing} if $(\delta(q,w),\delta(q',w)) \in F \times (Q \setminus F) \cup (Q \setminus F) \times F$. Then, for all pairs of states $(q,q') \in Q^2$, if there exists a $(q,q')$-distinguishing word, there is one of length at most $n-2$.
\end{lemma}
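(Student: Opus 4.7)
The plan is to use the iterative partition-refinement approach (in the spirit of Hopcroft/Moore's DFA minimization) and bound the number of rounds before stabilization. Define, for each $i \in \N$, an equivalence relation $\sim_i$ on $Q$ by $q \sim_i q'$ iff no word of length $\leq i$ is $(q,q')$-distinguishing. Our goal is to show that $\sim_{n-2}$ already coincides with the Myhill--Nerode equivalence $\sim := \bigcap_{i \in \N} \sim_i$; once this is done, the existence of any $(q,q')$-distinguishing word (equivalently, $q \not\sim q'$) gives $q \not\sim_{n-2} q'$, which is exactly the bound we want.

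First I would set up the base case and the inductive step. Since some $(q,q')$-distinguishing word exists, $F$ is neither empty nor equal to $Q$; hence $\sim_0$ (which is $q \sim_0 q' \iff (q \in F \Leftrightarrow q' \in F)$) has exactly two classes. Then I would prove the key recurrence: for all $i \geq 0$,
\[
q \sim_{i+1} q' \;\Longleftrightarrow\; q \sim_0 q' \;\text{and}\; \forall \alpha \in \Sigma,\; \delta(q,\alpha) \sim_i \delta(q',\alpha),
\]
which follows from splitting a distinguishing word of length $\leq i+1$ into its first letter and its remaining suffix. Two consequences of this recurrence are immediate: $\sim_{i+1}$ refines $\sim_i$, and if $\sim_{i+1} = \sim_i$ then $\sim_{i+2} = \sim_{i+1}$, so stabilization propagates forever.

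Next, I would carry out the counting argument. Consider the sequence of partitions induced by $\sim_0, \sim_1, \ldots$; each refinement that is not already a fixed point strictly increases the number of equivalence classes by at least one. Starting from $2$ classes at step $0$ and capped by $n$ classes (since $|Q|=n$), the sequence can strictly refine at most $n-2$ times, so $\sim_{n-2}$ must already be a fixed point, and therefore equals $\sim$. Thus if a $(q,q')$-distinguishing word exists, i.e.\ $q \not\sim q'$, then $q \not\sim_{n-2} q'$, yielding a distinguishing word of length at most $n-2$.

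The main obstacle is really the recurrence characterization of $\sim_{i+1}$ in terms of $\sim_i$, since both the monotone refinement property and the propagation of stabilization hinge on it; once that is in place, the class-counting bound is essentially forced. A minor check to dispatch is the edge cases $n \in \{1,2\}$: when $n=1$ no distinguishing word can exist (so the statement is vacuous), and when $n=2$ the bound $n-2 = 0$ correctly forces the distinguishing word to be $\epsilon$, which is consistent with $|F| = 1$ in that case.
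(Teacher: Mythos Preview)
Your proposal is correct and follows essentially the same partition-refinement argument as the paper: define $\sim_i$, start from two classes at $\sim_0$, bound the number of strict refinements by $n-2$, and use propagation of stabilization to conclude $\sim_{n-2} = \sim$. The only cosmetic difference is that you package monotonicity and stabilization via the single recurrence $q \sim_{i+1} q' \iff q \sim_0 q' \wedge \forall\alpha\,\delta(q,\alpha)\sim_i\delta(q',\alpha)$, whereas the paper proves these two facts separately (and handles the $k_0=2$ edge case a bit less carefully than you do).
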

\begin{proof}
	For all $i \in \N$, we consider the relation $\sim_i \subseteq Q^2$ defined by, for all pairs of states $(q,q')\in Q^2$, $q \sim_i q'$ if and only if there are no $(q,q')$-distinguishing word of length at most $i$. This relation $\sim_i$ is trivially reflexive and symmetric, it is also transitive: if $q \sim_i q' \sim_i q''$, for some $q,q',q'' \in Q$, then for all words $w \in \Sigma^{\leq i}$, we have $\delta(q'',w) \in F \Leftrightarrow \delta(q',w) \in F \Leftrightarrow  \delta(q,w) \in F$. Thus, we have $q \sim_i q''$. In fact, the relation $\sim_i$ is an equivalence relation $Q$. We let $C_i^1,\ldots,C_i^{k_i} \subseteq Q$ denote the sets of $\sim_i$-equivalence classes, with $k_i \geq 1$ denoting the number of $\sim_i$-equivalence classes.
	
	Now, we make three straightforward observations:
	\begin{itemize}
		\item $k_0 = 2$, with $\{ C_0^1,C_0^2 \} = \{ F, Q\setminus F \}$.
		\item For all $i \in \N$, $k_i \leq n$;
		\item For all $i \in \N$, 
		$\sim_{i+1} \; \subseteq \; \sim_i$. 
	\end{itemize}
	
	We have in addition the two, slightly less straightforward, facts:
	\begin{itemize}
		\item For all $i \in \N$, if $k_{i} = k_{i+1}$, then $\sim_{i} = \sim_{i+1}$. Indeed, letting $k := k_i = k_{i+1}$, since $\sim_{i+1} \; \subseteq \; \sim_i$, there is a function $f: \Inter{k} \to \Inter{k}$ such that, for all $j \in \Inter{k}$, we have $C_{i+1}^j \subseteq C_i^{f(j)}$. Since $\cup_{j \in \Inter{k}} C_i^j = Q = \cup_{j \in \Inter{k}} C_{i+1}^j$, the function $f$ is necessarily surjective, and thus bijective, and for all $j \in \Inter{k}$, we have $C_{i+1}^j = C_i^{f(j)}$. That is, the relations $\sim_i$ and $\sim_{i+1}$ have the same equivalence classes, and are thus equal.
		\item For all $i \in \N$, if $\sim_{i} = \sim_{i+1}$, then for all $k \geq i$, we have $\sim_{i} = \sim_k$. Indeed, assuming that $\sim_{i} = \sim_{i+1}$, let us show by induction on $k \geq i$ that $\sim_{k} = \sim_{k+1}$. This directly holds for $k = i$. Assume now that this holds for some $k \geq i$. Let us show that $\sim_{k+1} = \sim_{k+2}$. We have $\sim_{k+2} \subseteq \; \sim_{k+1}$. Furthermore, consider a pair of states $(q,q') \in Q^2$ such that $q \sim_{k+1} q'$. Assume towards a contradiction that there is a $(q,q')$-distinguishing word $w \in \Sigma^{k+2}$. Let us write $w = \alpha \cdot v$, for some $\alpha \in \Sigma$ and $v \in \Sigma^{k+1}$. Hence, the word $v \in \Sigma^{k+1}$ is $(\delta(q,\alpha),\delta(q',\alpha))$-distinguishing. Hence, $\delta(q,\alpha) \not\sim_{k+1} \delta(q',\alpha)$ and $\delta(q,\alpha) \not\sim_{k} \delta(q',\alpha)$. Thus, there is a $(\delta(q,\alpha),\delta(q',\alpha))$-distinguishing $v' \in \Sigma^k$. Then, the word $\alpha \cdot v' \in \Sigma^{k+1}$ is $(q,q')$-distinguishing, thus $q \not \sim_{k+1} q'$. Hence the contradiction. In fact, there is no $(q,q')$-distinguishing word of length $k+2$, therefore $q \sim_{k+2} q'$.
	\end{itemize}
	
	Now, consider the least $m \in \N$ such that $\sim_{m} = \sim_{m+1}$ (which exists since $\sim_0 \supseteq \sim_1 \supseteq \ldots$). We have $2 = k_0 < k_1 < \ldots < k_m \leq n$, thus $m \leq n-2$. (In fact, the relation $\sim_m$ corresponds to the well-known Myhill-Nerode equivalence relation.)
	
	Consider now any pair of states $(q,q') \in Q^2$. Assume that there exists a $(q,q')$-distinguishing word. Therefore, there is some $k \in \N$ such that $q \not\sim_k q'$. Therefore, since $\sim_0 \supseteq \sim_1  \supseteq \ldots \supseteq \sim_m = \sim_{m+1} = \ldots$, we have $\sim_m = \cap_{i \in \N} \sim_i$, and thus $q \not\sim_m q'$. Hence, by definition $\sim_m$, there exists a $(q,q')$-distinguishing word of length at most $m \leq n - 2$. 
\end{proof}

The proof of Theorem 1 is now straightforward.
\begin{proof}
	Let $\A = (Q,\Sigma,\qinit,\delta,F)$ and $\A' = (Q',\Sigma,\qinit',\delta',F')$. Without loos of generality, we may assume that $Q \cap Q' = \emptyset$. Then, let $\A'' = (Q'',\Sigma,\qinit'',\delta'',F'') \in \msf{DFA}(\Sigma)$ be the DFA obtained as the (disjoint) union of $\A$ and $\A'$, that is:
	\begin{itemize}
		\item $Q'' = Q \cup Q'$;
		\item $\qinit'' \in Q''$ is defined arbitrarily;
		\item for all $q \in Q$ (resp. $Q'$) and $\alpha \in \Sigma$, we have $\delta''(q,\alpha) := \delta(q,\alpha) \in Q$ (resp. $\delta''(q,\alpha) := \delta'(q,\alpha) \in Q'$);
		\item $F'' = F \cup F'$.
	\end{itemize}
	Then, we have $|\A''| = |\A| + |\A'|$, and since $\Lan(\A) \neq \Lan(\A')$, then there are $(\qinit,\qinit')$-distinguishing words $u \in \Sigma^*$, of length at most $|\A''| - 2$, by  Lemma*~\ref{lem:distinguishing_words}. By definition, such distinguishing words $u \in \Sigma^{\leq |\A| + |\A'| - 2}$ satisfy $u \in (\Lan(\A) \setminus \Lan(\A')) \cup (\Lan(\A') \setminus \Lan(\A))$.
\end{proof}

\subsection{Two tightness results}
We would like to consider two tightness results: one about Theorem 1, and one about Proposition 1. Specifically, we show the proposition* below. Note that his is a also folk result, see e.g., this \href{https://mathoverflow.net/questions/106905/deciding-equivalence-of-regular-languages}{math overflow thread}.
\begin{proposition}
	Consider a unary alphabet $\Sigma = \{a\}$. For all $m,n \geq 1$, there are two DFAs $\A_m,\A_n \in \msf{DFA}(\Sigma)$ such that $|\A_m| = m$, $|\A_n| = n$, and the length of the smallest word in $(\Lan(\A_n) \setminus \Lan(\A_m)) \cup (\Lan(\A_m) \setminus \Lan(\A_n))$ is $m+n-2$.
\end{proposition}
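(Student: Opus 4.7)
The plan is to construct $\A_m$ and $\A_n$ explicitly so that they agree on the words $a^0, a^1, \ldots, a^{m+n-3}$ and disagree on $a^{m+n-2}$, which makes $m+n-2$ both the length of the smallest distinguishing word and exactly $|\A_m|+|\A_n|-2$.

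First I would take $\A_m$ to be a ``tail-with-self-loop'' automaton on $m$ states $p_0, \ldots, p_{m-1}$ with $\delta_m(p_i,a) = p_{i+1}$ for $i < m-1$ and $\delta_m(p_{m-1},a) = p_{m-1}$, giving $\delta_m^*(p_0, a^i) = p_{\min(i, m-1)}$. For $\A_n$ I would take a pure cycle on $n$ states $q_0, \ldots, q_{n-1}$ with $\delta_n(q_i,a) = q_{(i+1) \bmod n}$, giving $\delta_n^*(q_0, a^i) = q_{i \bmod n}$. Setting $r^\star := (m-2) \bmod n$, I would then declare $F_n := \{q_r : r \neq r^\star\}$ and $F_m := \{p_{m-1}\} \cup \{p_i : 0 \leq i \leq m-2,\ i \not\equiv m-2 \pmod n\}$.

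The verification then splits into three index ranges. For $i \in \{0, \ldots, m-2\}$, the automaton $\A_m$ reads $p_i$ and $\A_n$ reads $q_{i \bmod n}$; by construction, both accept $a^i$ iff $i \not\equiv m-2 \pmod n$, so they agree. For $i \in \{m-1, \ldots, m+n-3\}$, $\A_m$ sits at the sink $p_{m-1}$ and accepts $a^i$; the $n-1$ consecutive residues $(m-1) \bmod n, \ldots, (m+n-3) \bmod n$ miss exactly one residue class, namely $(m+n-2) \bmod n = r^\star$, so $\A_n$ also accepts $a^i$ throughout this range. Finally, at $i = m+n-2$ the $\A_n$-component is $q_{r^\star} \notin F_n$ while the $\A_m$-component is $p_{m-1} \in F_m$, yielding the required disagreement.

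The only real technical content is the elementary residue-skipping identity (any $n-1$ consecutive integers miss exactly one residue class modulo $n$), which motivates the above accepting-set formulas. Although the cases $m = 1$ and $n = 1$ look degenerate, the formula works uniformly: for instance, when $n = 1$ it gives $F_n = \emptyset$ and $F_m = \{p_{m-1}\}$, so $\Lan(\A_n) = \emptyset$ and $\Lan(\A_m) = \{a^i : i \geq m-1\}$, whose smallest symmetric-difference element is $a^{m-1}$, of length $m+1-2$; the cases $m = 1$ and $m = n = 1$ fall out by the same substitution.
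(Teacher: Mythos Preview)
Your proof is correct. The transition structures you use --- a tail ending in a self-loop for $\A_m$ and a pure $n$-cycle for $\A_n$ --- are exactly the ones the paper uses. The difference lies in the accepting sets: the paper (assuming without loss of generality $m \le n$) makes the single state $q_{m-2}$ accepting in each automaton, so that $\Lan(\A_m) = \{a^{m-2}\}$ and $\Lan(\A_n) = \{a^x : x \equiv m-2 \pmod n\}$, from which the smallest symmetric-difference word $a^{m+n-2}$ is read off in one line. Your accepting sets are, in the case $m \le n$, precisely the complements of these, which of course leaves the symmetric difference unchanged; the extra congruence condition in your $F_m$ is what makes the same formula continue to work when $m > n$ without appealing to symmetry. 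So you trade a slightly more intricate accepting-set definition and a three-range verification for a construction that is uniform over all $m,n \ge 1$, whereas the paper's version is shorter but handles the small cases separately.
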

\begin{proof}
	The result is direct when $m = n = 1$ and $\{m,n\} = \{1,2\}$. Assume now that $m,n \geq 2$. Let us assume without loss of generality that $m \leq n$. We consider the DFAs $\A_m,\A_n$ described below, where the double circled states are accepting:
	
	\vspace*{0.2cm}
	\begin{tikzpicture}
		\node[draw=none] (q) at (0.5,0) {$\A_m:$} ;
		
		\node[scale=1.2,draw,circle,align=center,initial,initial text={}] (q0) at (2,0) {$q_0$} ;
		\node[scale=1.2,draw,circle,align=center] (q1) at (4,0) {$q_1$} ;
		\node[draw=none] (q2) at (6,0) {$\ldots$} ;
		\node[draw,circle,accepting,align=center] (qm2) at (8,0) {$q_{m-2}$} ;
		\node[draw,circle,align=center] (qm1) at (10,0) {$q_{m-1}$} ;
		
		\path[-latex]  				
		(q0) edge node[above] {$a$} (q1)
		(q1) edge node[above] {$a$} (q2)
		(q2) edge node[above] {$a$} (qm2)
		(qm2) edge node[above] {$a$} (qm1)
		(qm1) edge[loop right] node[right] {$a$} (qm1)
		;
	\end{tikzpicture}
	
	\begin{tikzpicture}
		\node[draw=none] (q) at (0.5,0) {$\A_n:$} ;
		
		\node[scale=1.2,draw,circle,align=center,initial,initial text={}] (q0) at (2,0) {$q_0$} ;
		\node[scale=1.2,draw,circle,align=center] (q1) at (4,0) {$q_1$} ;
		\node[draw=none] (q2) at (6,0) {$\ldots$} ;
		\node[draw,circle,accepting,align=center] (qm2) at (8,0) {$q_{m-2}$} ;
		\node[draw=none] (qm) at (10,0) {$\ldots$} ;
		\node[draw,circle,align=center] (qn1) at (12,0) {$q_{n-1}$} ;
		
		\path[-latex]  				
		(q0) edge node[below] {$a$} (q1)
		(q1) edge node[below] {$a$} (q2)
		(q2) edge node[below] {$a$} (qm2)
		(qm2) edge node[below] {$a$} (qm)
		(qm) edge node[below] {$a$} (qn1)
		(qn1) edge[bend right=20] node[above] {$a$} (q0)
		;
	\end{tikzpicture}
	
	Clearly, we have $|\A_m| = m$ and $|\A_n| = n$. Furthermore, we have:
	\begin{equation*}
		\Lan(\A_m) = \{a^{m-2}\} \text{ and } \Lan(\A_n) = \{a^{x} \mid x = m-2 \mod n\}
	\end{equation*}
	Thus, $(\Lan(\A_n) \setminus \Lan(\A_m)) \cup (\Lan(\A_m) \setminus \Lan(\A_n)) = \{a^{x} \mid x \neq m-2,\; x = m-2 \mod n\}$; the smallest word in that set is $a^{m-2+n}$, of length $m+n-2$.
\end{proof}

We also establish the proposition* below.
\begin{proposition}
	Consider a unary alphabet $\Sigma = \{a\}$. For all $n \geq 2$, there is a DFA $\A_n \in \msf{DFA}(\Sigma)$ and a language $L_n \subseteq \Sigma^*$ such that $\A_n \in \msf{Rec}(L_n,n)$, $A_n$ is $\msf{Rec}(L_n,n)$-minimal w.r.t. $\prec_{2n-3}$, but $A_n$ is not $\msf{Rec}(L_n,n)$-minimal w.r.t. $\prec_{\Lan}$.
\end{proposition}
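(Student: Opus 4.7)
The plan is to recycle the construction from the previous tightness proposition, taking the same pair of DFAs but with parameter $m = n$, and choosing the witness language so that one DFA strictly extends the other only \emph{beyond} the threshold length $2n-3$. Concretely, let $\A_n$ be the $n$-state cyclic DFA from the previous proof whose unique accepting state is $q_{n-2}$, so that $\Lan(\A_n) = \{a^x \mid x \equiv n-2 \pmod{n}\} = \{a^{n-2}, a^{2n-2}, a^{3n-2}, \ldots\}$. Let $\A_n'$ be the $n$-state tail-shaped DFA accepting exactly $\{a^{n-2}\}$. Set $L_n := \{a^{n-2}\}$.

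First I would verify that both $\A_n$ and $\A_n'$ lie in $\msf{Rec}(L_n, n)$: each has $n$ states and accepts $a^{n-2}$. Since $\Lan(\A_n') = \{a^{n-2}\} \subsetneq \Lan(\A_n)$ (the word $a^{2n-2}$ separates them), this already gives $\A_n' \prec_\Lan \A_n$, so $\A_n$ is not $\msf{Rec}(L_n, n)$-minimal w.r.t. $\prec_\Lan$.

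Next I would establish $\prec_{2n-3}$-minimality. The key numerical observation is that the element of $\Lan(\A_n)$ immediately following $a^{n-2}$ is $a^{2n-2}$, which has length $2n-2 > 2n-3$; hence $\Lan(\A_n) \cap \Sigma^{\leq 2n-3} = \{a^{n-2}\}$, giving $|\Lan(\A_n) \cap \Sigma^{\leq 2n-3}| = 1$. On the other hand, every $\A'' \in \msf{Rec}(L_n, n)$ must accept $a^{n-2} \in L_n$, a word of length $n-2 \leq 2n-3$ (using $n \geq 2$), so $|\Lan(\A'') \cap \Sigma^{\leq 2n-3}| \geq 1$. Therefore no such $\A''$ satisfies $\A'' \prec_{2n-3} \A_n$, as required.

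There is essentially no hard step here once the right construction is identified; the whole argument reduces to a small counting check. The only conceptual point worth highlighting is \emph{why} the threshold $2n-3$ is exactly right: the gap between the first and second accepted words of $\A_n$ is of length $n$, pushing the second accepted word to length $2n-2$, so shortening the counting horizon by a single letter (to $2n-3$) makes the extra words $a^{2n-2}, a^{3n-2}, \ldots$ invisible to the metric. This is the same phenomenon that makes the bound $|\A| + |\A'| - 2$ of Theorem~\ref{thm:small_witness} tight for this pair, and it is what prevents any improvement of Proposition~\ref{prop:language-minimal} beyond the length $2n-2$.
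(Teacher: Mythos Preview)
Your proposal is correct and follows essentially the same construction as the paper: the same cyclic DFA $\A_n$ with accepting state $q_{n-2}$, the same tail-shaped witness $\A_n'$, and the same language $L_n = \{a^{n-2}\}$. Your justification of $\prec_{2n-3}$-minimality via the lower bound $|\Lan(\A'') \cap \Sigma^{\leq 2n-3}| \geq 1$ for every $\A'' \in \msf{Rec}(L_n,n)$ is in fact slightly more explicit than the paper's one-line remark, and your closing paragraph nicely articulates why the threshold is tight.
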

\begin{proof}
	Consider the language $L_n := \{ a^{n-2} \} \subseteq \Sigma^*$ and the DFA $\A_n$ depicted below:
	
	\begin{tikzpicture}
		\node[draw=none] (q) at (0.5,0) {$\A_n:$} ;
		
		\node[scale=1.2,draw,circle,align=center,initial,initial text={}] (q0) at (2.5,0) {$q_0$} ;
		\node[draw=none] (q1) at (5,0) {$\ldots$} ;
		\node[draw,circle,accepting,align=center] (q2) at (7.5,0) {$q_{n-2}$} ;
		\node[draw,circle] (q3) at (10,0) {$q_{n-1}$} ;
		
		\path[-latex]  				
		(q0) edge node[below] {$a$} (q1)
		(q1) edge node[below] {$a$} (q2)
		(q2) edge node[below] {$a$} (q3)
		(q3) edge[bend right=20] node[above] {$a$} (q0)
		;
	\end{tikzpicture}
	
	We have $|\A_n| = n$, $L_n \subseteq \Lan(\A_n) =  \{a^{x} \mid x = n-2 \mod n\}$, and $\Lan(\A_n) \cap \Sigma^{\leq 2n-3} = \{a^{n-2}\}$. Therefore, $A_n$ is $\msf{Rec}(L_n,n)$-minimal w.r.t. $\prec_{2n-3}$. Consider now the DFA $\A_n'$ depicted below:
	
	\begin{tikzpicture}
		\node[draw=none] (q) at (0.5,0) {$\A_n':$} ;
		
		\node[scale=1.2,draw,circle,align=center,initial,initial text={}] (q0) at (2.5,0) {$q_0$} ;
		\node[draw=none] (q1) at (5,0) {$\ldots$} ;
		\node[draw,circle,accepting,align=center] (q2) at (7.5,0) {$q_{n-2}$} ;
		\node[draw,circle] (q3) at (10,0) {$q_{n-1}$} ;
		
		\path[-latex]  				
		(q0) edge node[below] {$a$} (q1)
		(q1) edge node[below] {$a$} (q2)
		(q2) edge node[below] {$a$} (q3)
		(q3) edge[loop right] node[right] {$a$} (q0)
		;
	\end{tikzpicture}
	
	We have $|\A_n'| = n$ and $\Lan(\A_n) = L_n$. Therefore, $A_n$ is not $\msf{Rec}(L_n,n)$-minimal w.r.t. $\prec_{\Lan}$. 
\end{proof}

\section{The NP-hardness proof from Theorem 2}
\label{appen:sec2}
Before we go on to the $\msf{NP}$-hardness proof, let us first consider the (simple) proof of Lemma 1.
\begin{proof}
	For all $k \in \N$, we have:
	\begin{align*}
		\sum_{q' \in Q} \msf{N}(q',k) \cdot |\{ \alpha \in \Sigma \mid \delta(q',\alpha) = q \}| & = \sum_{q' \in Q} |\{ w \in \Sigma^k \mid \delta^*(\qinit,w) = q' \}| \cdot |\{ \alpha \in \Sigma \mid \delta^*(q',w) = q \}| \\
		& = |\{ w \cdot \alpha \in \Sigma^{k+1} \mid \delta^*(\qinit,w) = q \}| = \msf{N}(q,k+1)
	\end{align*}
	Furthermore, we have:
	\begin{align*}
		\Lan(\A) \cap \Sigma^{\leq m} & = \{ w \in \Sigma^{\leq m} \mid \delta^*(q_0,w) \in F \} = \bigcup_{i = 0}^m \{ w \in \Sigma^{i} \mid \delta^*(q_0,w) \in F \} \\
		& = \bigcup_{i = 0}^m \bigcup_{q \in F} \{ w \in \Sigma^{i} \mid \delta^*(q_0,w) = q \} \\
	\end{align*}
	
	Lemma 1 follows.
\end{proof}

Now, let us turn to the $\msf{NP}$-hardness proof. Let us first introduce and recall a few notations. For each $u,v \in \Sigma^*$, we write $u \cdot v \in \Sigma^*$ the concatenation of the word $u$ followed by the word $v$. For all $u \in \Sigma^*$ and $V \subseteq \Sigma^*$, we let $u \cdot V$ and $V \cdot u$ denote the sets $u \cdot V := \{ u \cdot v \mid v \in V \} \subseteq \Sigma^*$ and $V \cdot u := \{ v \cdot u \mid v \in V \} \subseteq \Sigma^*$. A word $u \in \Sigma^*$ is a prefix of a word $v \in \Sigma^*$ if there is some $w \in \Sigma^*$ such that $u \cdot w = v$. This is denoted $u \sqsubseteq v$.

Our goal is to establish the lemma* below.,
\begin{lemma}
	\label{lem:NP-hard}
	The decision problem 1 is $\msf{NP}$-hard, even with $k$ given in unary, and an alphabet $\Sigma$ of size 2.
\end{lemma}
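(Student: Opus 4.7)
The plan is to implement the six-point blueprint sketched in the main text by defining a concrete set $\mathcal{P} \subseteq \{a,b\}^*$ together with appropriate values of $n$ and $k$, both polynomial in $r$ and $s$ (so that giving $k$ in unary is harmless). For the forward direction, I would explicitly construct a $(\mathcal{P},n,k)$-suitable DFA from any satisfying valuation $\nu$ of $(X,\mathcal{C})$, using essentially the shape depicted in Figure~\ref{fig:smallDFA} generalised to arbitrary $r$ and $s$: a distinguished pair $(q_\top^1,q_\bot^1)$ reached by $w_\top$ and $w_\bot$, a ``variable gadget'' containing states $q_{x_i},q_{x_i}'$ for $i \in \Intr$ in which the $a$-transition from $q_{x_i}'$ goes to $q_\top^1$ or $q_\bot^1$ according to $\nu(x_i)$, and a ``clause gadget'' containing states $q_{C_j}$ for $j \in \Ints$, each of which is forwarded by $w_{\msf{Cl}}$ to some $q_{x_{C_j}}'$ with $x_{C_j}\in C_j$ witnessing satisfaction of $C_j$ under $\nu$. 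Counting the surplus of accepted words of length at most $2n-2$ not lying in $\mathcal{P}$ in this canonical construction then fixes the value of $k$ used in the reduction; the exact bookkeeping is mechanical once each gadget is fully specified.

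For the harder backward direction, I would assume a DFA $\mathcal{A}$ satisfying conditions 1) and 2) and progressively recover the six structural facts, thereby extracting a valuation of $X$ that satisfies $\mathcal{C}$. Facts 1, 2, and 4 (pairwise distinctness of the images of $w_\top,w_\bot$, of the $w_x$, and of the $w_C$) would be established by TME-arguments: for any two candidate words $w,w'$ to be separated, we arrange to have more than $k$ extensions $u_\sigma$ with $w\cdot u_\sigma \in \mathcal{P}$ and $w'\cdot u_\sigma \notin \mathcal{P}$ (and all of length at most $2n-2-\max(|w|,|w'|)$); collapsing $\delta^*(\qinit,w)$ with $\delta^*(\qinit,w')$ would then force $\mathcal{A}$ to accept more than $k$ words of length at most $2n-2$ outside $\mathcal{P}$, contradicting condition 2). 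Facts 3 and 5 (existence of common suffixes $w_{\msf{Var}}$ and $w_{\msf{Cl}}$) would be shown by TMS-arguments: by placing auxiliary $\mathcal{P}$-words that force each $w_x$ (resp.\ each $w_C$) to reach, after a common suffix, one of a small family of ``target'' states, a pigeonhole on the state budget $n$ yields the required collision with $\{\delta^*(\qinit,w_\top),\delta^*(\qinit,w_\bot)\}$ (resp.\ with $\{\delta^*(\qinit,w_x)\}_{x\in X}$). Fact 6 is then a direct consequence of facts 3 and 5 together with the purpose-built ``sign'' words encoding membership of each clause in $\mathcal{C}_+$ or $\mathcal{C}_-$ that are included in $\mathcal{P}$ from the outset.

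The main obstacle is the simultaneous quantitative tuning of $\mathcal{P}$, $n$, and $k$ so that every TME- and TMS-argument fires exactly where it is needed and nowhere else. On the TME side, each pair $(w,w')$ to be separated must admit strictly more than $k$ length-bounded distinguishing extensions inside $\mathcal{P}$, without those extensions spuriously triggering separations of unrelated pairs or inflating the error budget $k$ past the point where it still equals the canonical surplus of the forward construction. On the TMS side, one must precisely account for how many states of $\mathcal{A}$ are already pinned down by $\msf{Pref}(\mathcal{P})$ through the earlier facts, so that the residual state budget forces the intended pigeonhole collision at step 3 (resp.\ step 5) rather than at an earlier or later one. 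It is this calibration that forces the actual reduction to carry many more $\top$-, $\bot$-, variable- and clause-words than the simplified picture of Example~\ref{example:DFA} suggests, and once it is carried out the equivalence with APN-SAT follows directly from the six facts, yielding the $\msf{NP}$-hardness claimed by Lemma~\ref{lem:NP-hard}.
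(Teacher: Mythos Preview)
Your proposal follows exactly the approach taken in the paper: a reduction from APN-SAT along the six-fact blueprint, with TME-arguments for Facts~1,~2,~4, TMS-arguments for Facts~3,~5, and Fact~6 as a direct consequence. In that sense the strategy is correct and not different from the paper's.

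However, what you have written is a plan rather than a proof. You never actually define $\mathcal{P}$, never fix $n$ or $k$, and you explicitly defer the entire technical content to phrases like ``the exact bookkeeping is mechanical once each gadget is fully specified'' and ``the main obstacle is the simultaneous quantitative tuning.'' That tuning \emph{is} the proof. In the paper's version, auxiliary integer parameters $d,M,T$ are introduced alongside $n$ and $k$; the word families in $\mathcal{P}$ are defined in terms of all of them; and then a system of competing inequalities (two lower bounds from the forward direction together with four assumptions A--D for the backward direction) must be shown to admit a simultaneous solution with all parameters polynomial in $r$ and $s$. None of this is visible in your text.

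There is also one point your outline underestimates. For the TMS-steps you say a pigeonhole on the state budget ``yields the required collision'' with $\{\delta^*(\qinit,w_\top),\delta^*(\qinit,w_\bot)\}$ (resp.\ with $\{\delta^*(\qinit,w_x)\}_{x\in X}$). Pigeonhole only guarantees \emph{some} collision among the states in play; it does not by itself force the collision to be with the intended target set rather than, say, between two variable-tracks or between a clause-track and the wrong segment of a $\top/\bot$-track. The paper handles this by proving three auxiliary separation properties ($\mathsf{Prop}(a,a)$, $\mathsf{Prop}(b,b)$, $\mathsf{Prop}(a,b)$) via further TME-arguments, which in turn require dedicated families of distinguishing suffixes to be built into $\mathcal{P}$ from the start. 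Your write-up does not account for these extra families or for their effect on the error budget $k$. Until the concrete definitions and the parameter verification are actually carried out, you have restated the sketch of Section~4 rather than established Lemma~\ref{lem:NP-hard}.
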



For the remainder of this section, we consider an instance $(X,\mathcal{C})$ of Problem 2 (the All-Pos-Neg SAT problem) with $X = \{ x_i \mid i \in \Intr \}$ the set of variables, and $\mathcal{C} = \{ C_j \mid j \in \Ints \} = \mathcal{C}_+ \uplus \mathcal{C}_-$ the set of clauses. The alphabet that we consider is $\Sigma := \{a,b\}$. 

Our goal is to define a set of positive words $\mathcal{P}$ and three integers $n,k$ and $m$ such that $(X,\mathcal{C})$ is a positive instance of Problem 2 if and only if there is a $(\Pos,n,k,m)$-suitable DFA $\mathcal{A}$, i.e., such that 1) $|\mathcal{A}| \leq n$; 2) $\mathcal{P} \subseteq \Lan(\mathcal{A})$; and 3) $|\Sigma^{\leq m} \cap (\Lan(\mathcal{A}) \setminus \Pos)| \leq k$. Note that this third condition is little different from the one considered in the paper: the maximal length of the words consider is an integer $m$. We will then ensure that our reduction works also for the case $m = 2n-2$.

Let us now formally define the positive set of words $\mathcal{P}$ that we consider. This set of words is parameterized not only by an instance $(X,\mathcal{C})$ of Problem 2, but also by several integers that we do not fix yet. Later (namely, in Lemmas*~\ref{lem:valuation_implies_automaton} and~\ref{lem:assumption_DFA_valuation}), we will show that, assuming that these integers satisfy some (in)equalities, then the set of words that we define below is suitable for a reduction.
As there are many words, these may not be easy to parse. In the next subsection, we will illustrate the DFA construction on an example, which should make it easier to follow how these sets are defined.
\begin{definition}
	\label{def:pos_words_for_reduction}
	Consider an instance $(X,\mathcal{C})$ of the Problem 2. Let $k,d,M,T \in \N$. We define 
	the following sets of words:
	\begin{itemize}
		\item First of all, let us introduce a notation for a set of words: for $\alpha \neq \beta \in \{a,b\}$ and $i \in \N$, we let $U_i(\alpha) := \{ \beta^{x} \cdot \alpha \mid 1 \leq x \leq i+1\}$;
		\item We let
		\begin{equation*}
			\mathcal{P}_{\top}(k,d) := a \cdot a \cdot U_{k+1}(a) \cdot \{ a^{d\phantom{+1}}, a^{d+1} \cdot u_a \mid u_a \in U_{k+1}(a) \}
		\end{equation*}
		and
		\begin{equation*}
			\mathcal{P}_{\bot}(k,d) := a \cdot b \cdot U_{k+1}(a) \cdot \{ a^{d+1}, a^{d+1} \cdot u_a \mid u_a \in U_{k+1}(a) \}
		\end{equation*}
		These are the $\top$- and $\bot$-sets of words. As in the example of the paper, $\top$-words start with $a \cdot a$, and $\bot$-words start with $a \cdot b$.
		\item For all $i \in \Inter{r}$, we let $\mathsf{App}(x_i) := \{ 1 \leq j \leq s \mid x_i \in C_j \}$ and $\mathsf{NotApp}(x_i) := \{ 1 \leq j \leq s \mid x_i \notin C_{j'} \}$ and $\msf{Ind}(x_i) := \mathsf{App}(x_i) \cup \{ s + \sigma \mid \sigma \in \mathsf{NotApp}(x_i)\} \cup \{ s + s + i + t \cdot r \mid t \in \llbracket 0,T-1 \rrbracket \})\}$. Then, we let:		
		\begin{align*}
			\mathcal{P}_{\msf{Var}}(X,\mathcal{C},k,d,T,M,i) := b \cdot a \cdot U_M(b) \cdot b^i \cdot a \cdot b^d \cdot \{ & a^{d+1} \cdot u_a,\; b^{\sigma},\; b^{s + s + T \cdot r} \cdot u_b\\
			\mid & u_a \in U_{k+1}(a),\; \sigma \in \msf{Ind}(x_i),\; u_b \in U_{k+1}(b) \}
		\end{align*}
		These are the variable words, which all start with $b \cdot a$. 
		\item For all $j \in \Inter{s}$, we let:
		\begin{equation*}
			\mathcal{P}_{\msf{Cl}}(X,\mathcal{C},k,d,T,j) := b \cdot b \cdot a^j \cdot b \cdot b^d \cdot \{ b^{j},b^{s + s + T \cdot r} \cdot u_b \mid u_b \in U_{k+1}(b)\}
		\end{equation*}
		and
		\begin{align*}
			\mathcal{P}^{\mathsf{acc}}_{\msf{Cl}}(k,d,j) :=
			b \cdot b \cdot a^j \cdot b \cdot b^d \cdot  \begin{cases}
				\{ a^{d\phantom{+1}},a^{d+1} \cdot u_a \mid u_a \in U_{k+1}(a) \} & \text{ if }C_j \in \mathcal{C}_+ \\
				\{ a^{d+1},a^{d+1} \cdot u_a \mid u_a \in U_{k+1}(a) \} & \text{ if }C_j \in \mathcal{C}_- \\
			\end{cases}
		\end{align*}
		These are the clause words, which all start with $b \cdot b$. Note that clause words in $\mathcal{P}^{\mathsf{acc}}_{\msf{Cl}}(k,d,j)$ correspond to those defined in the paper that encode that some clause is positive while some other clause is negative.
	\end{itemize}  
	We can now define the set of words that we consider:
	\begin{align*}
		\mathcal{P}(X,\mathcal{C},k,d,T,M) & := \left(\bigcup_{i \in \Inter{r}}\mathcal{P}_{\msf{Var}}(X,\mathcal{C},k,d,T,M,i)\right) \cup \left(\bigcup_{j \in \Inter{s}}\mathcal{P}_{\msf{Cl}}(X,\mathcal{C},k,d,T,j) \cup \mathcal{P}^{\mathsf{acc}}_{\msf{Cl}}(k,d,j)\right)\\
		& \phantom{a}\cup \mathcal{P}_{\top}(k,d) \cup \mathcal{P}_{\bot}(k,d) 
	\end{align*} 
	
	We also define two integers that will be particularly useful when discussing the size of DFAs:
	\begin{equation*}
		\omega_1(X,d) := d \cdot (2 + r)
	\end{equation*}
	and 
	\begin{equation*}
		\omega_2(X,\mathcal{C},k,T,M) := 18 + s + M + 4 \cdot k + r + 2 \cdot r \cdot s + r^2 \cdot T
	\end{equation*}
\end{definition}


\subsection{If there is a satisfying valuation, then there is a suitable DFA}
Let us first show that, assuming that $n$ and $k$ are large enough, if there exists a valuation satisfying $(X,\mathcal{C})$, then there is a DFA satisfying conditions a), b), c). This is stated in the lemma* below.
\begin{lemma}
	\label{lem:valuation_implies_automaton}
	Consider a positive instance $(X,\mathcal{C})$ of Problem 2. Let $n,k,d,M,T \in \N$ and assume that $n \geq \omega_1(r,d) + \omega_2(X,\mathcal{C},k,T,M)$ and $k \geq M \cdot r + s \cdot (s + T -1)$. Then, for all $m \in \N$, there is a $(\Pos,n,k,m)$-suitable DFA.
\end{lemma}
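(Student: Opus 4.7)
The plan is to construct, from a satisfying valuation $\nu\colon X \to \{\top,\bot\}$ of $(X,\mathcal{C})$, a DFA $\mathcal{A}_\nu$ that realises the six-step blueprint sketched in the main body of the paper. Concretely, I would designate landmark states $q_\top, q_\bot$ (reached by $a\cdot a$ and $a\cdot b$ respectively), a variable-state $q'_{x_i}$ for each $x_i \in X$ (reached by reading the prefix of the variable words $b \cdot a \cdot U_M(b) \cdot b^i \cdot a$), and a clause-state $q_{C_j}$ for each $C_j \in \mathcal{C}$ (reached by the prefix $b \cdot b \cdot a^j \cdot b$ of the clause words). I would then wire $w_{\msf{Var}} := b^d \cdot a^{d+1}$ from each $q'_{x_i}$ into $q_{\nu(x_i)}$, and $w_{\msf{Cl}} := b^d \cdot b^{j}$ from each $q_{C_j}$ into $q'_{x_{C_j}}$ where $x_{C_j} \in C_j$ is any variable witnessing that $\nu$ satisfies $C_j$ (which exists because $\nu$ satisfies $(X,\mathcal{C})$ and uses the right truth value matching the sign of the clause).

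The concrete construction would glue together several pieces: a small initial tree distinguishing the four top-level prefixes in $\{a\cdot a,\;a\cdot b,\;b\cdot a,\;b\cdot b\}$; linear chains of length $d$ (or $d+1$) implementing the $a^d, b^d$ buffers, yielding the $\omega_1(X,d) = d(2+r)$ term (two chains for the $\top/\bot$ words and $r$ chains from the variable branches); small gadgets reading $U_{k+1}(\alpha)$ loops, the $b^i$ indexing, the $a^j$ clause-indexing, and the $b^{s+s+T\cdot r}$ long suffixes, whose state counts are accounted for by the individual summands of $\omega_2(X,\mathcal{C},k,T,M)$; and finally a single absorbing non-accepting sink into which every letter that does not follow a prefix of a word of $\mathcal{P}$ is routed. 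Accepting states are exactly those landmarks where some $w \in \mathcal{P}$ ends (so $q_\top$ and $q_{C_j}$ for $C_j \in \mathcal{C}_+$ become accepting, while $q_\bot$ and $q_{C_j}$ for $C_j \in \mathcal{C}_-$ stay non-accepting, matching $\mathcal{P}^{\mathsf{acc}}_{\msf{Cl}}$).

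The two verification tasks are then: (i) check $|\mathcal{A}_\nu| \leq n$ by summing the contribution of every gadget and comparing with $\omega_1(X,d)+\omega_2(X,\mathcal{C},k,T,M)$; and (ii) check $|\Sigma^{\leq m} \cap (\Lan(\mathcal{A}_\nu) \setminus \mathcal{P})| \leq k$. Point (i) is routine bookkeeping. Point (ii) is where I expect the real difficulty to lie: because the variable branches are re-used as targets for clause branches via $w_{\msf{Cl}}$, every spurious word accepted along a clause-path also has to be charged, and similarly the $U_{k+1}(\alpha)$ gadgets introduce controlled families of short extra acceptances. I would enumerate the spurious acceptances per landmark: at most $M$ extras coming from each of the $r$ variable gadgets (the $U_M(b)$ loop), and at most $s+T-1$ extras per clause (from the $s$ choice of the witnessing variable and the $T\cdot r$ padding). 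Summing gives a total bounded by $M\cdot r + s(s+T-1)$, which is exactly the hypothesis on $k$.

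The hard part is designing the transitions so that the spurious-acceptance count is simultaneously tight against $k$ and independent of the valuation $\nu$ chosen, so that the bound holds uniformly. Once (i) and (ii) are verified and $\mathcal{P} \subseteq \Lan(\mathcal{A}_\nu)$ is confirmed by directly simulating $\mathcal{A}_\nu$ on each of the finitely many word-families in Definition~\ref{def:pos_words_for_reduction}, all three conditions defining a $(\mathcal{P},n,k,m)$-suitable DFA are met, and the lemma follows.
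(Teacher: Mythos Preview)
Your approach is essentially the paper's: construct an explicit DFA $\mathcal{A}_\nu$ from a satisfying valuation, count its states against $\omega_1+\omega_2$, and show that $\Lan(\mathcal{A}_\nu)\setminus\mathcal{P}$ has exactly $M\cdot r + s\cdot(s+T-1)$ elements. Your identification of the two error families (one of size $M$ per variable, one of size $s+T-1$ per clause) matches the paper's $\msf{ErrVar}$ and $\msf{ErrCl}$ precisely.

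One detail in your sketch is off and worth fixing before you write it out: the set of accepting states is not as you describe. In the actual construction, \emph{both} the $\top$-branch and the $\bot$-branch terminate in accepting states (at depths $d$ and $d{+}1$ respectively, which is how the DFA distinguishes them), and the clause states $q_{C_j}$ are \emph{never} accepting. The words in $\mathcal{P}^{\mathsf{acc}}_{\msf{Cl}}$ are accepted not because their run ends at a clause state, but because the $b$-transition out of $q_{C_j}$ re-enters the variable chain for the witnessing variable $x_{i_j}$, and from there the $a^{d}$ or $a^{d+1}$ suffix lands in the correct $\top/\bot$ accepting state. If you made $q_{C_j}$ itself accepting you would both fail to accept $\mathcal{P}_\bot$ and introduce uncontrolled extra acceptances. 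Once you correct this wiring, your error accounting goes through exactly as you outlined.
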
 

To prove this lemma* we define a DFA parameterized by the above integers and a satisfying valuation of the variables. This definition is illustrated on an example afterwards. We invite the reader to glimpse at Figure~\ref{fig:test1} to get an idea on the shape of the DFA that we define.
\begin{definition}
	\label{def:DFA_from_input}
	Consider an instance $(X,\mathcal{C})$ of Problem 2 and a valuation $\nu: X \to \{\top,\bot\}$ satisfying $(X,\mathcal{C})$. We define the DFA $\mathcal{A}(X,\mathcal{C},k,d,T,M,\nu) = (Q,\Sigma,\qinit,\delta,F)$ where: 
	\begin{equation*}
		Q := Q_1 \uplus Q_{\msf{Cl}} \uplus Q_{\msf{Var}} \uplus Q_{\top} \uplus Q_{\bot} \uplus Q_{u_a} \uplus Q_{u_b}
	\end{equation*}
	with 
	\begin{align*}
		Q_1 & := \{ \qinit,q_a,q_b,q_{\msf{acc}},q_{\msf{
				rej}} \} \\
		Q_{\msf{Cl}} & := \{ q_{\msf{Cl}},Q_{C_j} \mid j \in \Inter{s} \} \\
		Q_{\msf{Var}} & := \{ q_{\msf{Var}},q_{\msf{Var},u}^\sigma,q_X \mid \sigma \in \Inter{M} \} \uplus \cup_{i \in \Inter{r}} Q_{x_i} \\
		Q_{x_i} & := \{ q_{x_i},q_{x_i}^\sigma,(x_i \in C_j),(x_i \notin C_j),(x_i,l,t) \mid \sigma \in \Inter{d}, j \in \Inter{s},l \in \Inter{r}, t \in \llbracket 0,T-1 \rrbracket\}\\
		Q_\top & := \{ q_\top,q_{\top,u}^\sigma,q_{\top}^{\sigma'} \mid \sigma \in \Inter{k+1},\sigma' \in \llbracket 0,d+1 \rrbracket \} \\
		Q_\bot & := \{ q_\bot,q_{\bot,u}^{\sigma},q_{\bot}^{\sigma'} \mid \sigma \in \Inter{k+1},\sigma' \in \llbracket 0,d+1 \rrbracket \} \\
		Q_{u_a} & := \{ q_{u_a}^\sigma \mid \sigma \in \Inter{k+1} \} \\
		Q_{u_b} & := \{ q_{u_b}^\sigma \mid \sigma \in \Inter{k+1} \}
	\end{align*}
	Before we finish the definition of this DFA, let us count its number of states. We have:
	\begin{align*}
		|Q| 
		& = \underbrace{5}_{|Q_1|} + \underbrace{s + 1}_{|Q_{\msf{Cl}}|} + \underbrace{2 + M + r \cdot (1 + d + 2 \cdot s+r \cdot T)}_{|Q_{\msf{Var}}|} + \underbrace{d + k + 4}_{|Q_{\top}|} + \underbrace{d + k + 4}_{|Q_{\bot}|} + \underbrace{k + 1}_{|Q_{u_a}|} + \underbrace{k + 1}_{|Q_{u_b}|} \\
		& = d \cdot (2 + r) + 18 + s + M + 4 \cdot k + r + 2 \cdot r \cdot s + r^2 \cdot T \\
		& = \omega_1(X,d) +	\omega_2(X,\mathcal{C},k,T,M)
	\end{align*}
	
	The set of final states $F \subseteq Q$ is equal to:
	\begin{equation*}
		F := \{ q_\top^d,q_\bot^{d+1},q_{\msf{acc}}, \} \uplus \bigcup_{i \in \Inter{r}} \left(\bigcup_{j \in \mathsf{App}(x_i)} (x_i \in C_j) \cup \bigcup_{j \in \mathsf{NotApp}(x_i)} (x_i \notin C_j) \cup \bigcup_{t \in \llbracket 0,T-1 \rrbracket} (x_i,i,t)\right)
	\end{equation*}
	
	Let us now define the transition function $\delta$. We start by defining those transitions that do not depend on the valuation $\nu: X \to \{\top,\bot\}$. First, we define the initial transitions leading from $\qinit$ to $q_\top,q_\bot,q_{\msf{Var}},q_{\msf{Cl}}$. Specifically, we have: 
	\begin{equation*}
		\delta(\qinit,a) := q_a, \; \delta(\qinit,b) := q_b, \; \delta(q_a,a) := q_\top, \; \delta(q_a,b) := q_\bot, \;  \delta(q_b,a) := q_{\msf{Var}}, \; \delta(q_b,b) := q_{\msf{Cl}}
	\end{equation*}
	Let us now define the other transitions (that still do not depend on the valuation $\nu$). To improve the readability of the definition of these other transitions, we describe them in a somewhat graphical manner. Specifically:
	\begin{itemize}
		\item for all $q,q' \in Q$ and $\alpha \in \Sigma$, we write $q \xlongrightarrow{\alpha} q'$ to express that $\delta(q,\alpha) := q'$;
		\item for all $q_0,q_1,\ldots,q_\theta \in Q$ and $\alpha \neq \beta \in \Sigma$, we write $q_0 \xlongrightarrow{\alpha} q_1 \xlongrightarrow{\alpha} \ldots \xlongrightarrow{\alpha} q_\theta \xRightarrow{\beta} q$ to express that, for all $0 \leq \sigma \leq \theta-1$, we have $\delta(q_\sigma,\alpha) := q_{\sigma+1}$ and $\delta(q_\sigma,\beta) := q$, while $\delta(q_\theta,\beta) := q$ (but we do not have $\delta(q_\theta,\alpha) := q$).
	\end{itemize}
	Furthermore, for almost all transitions $q \xlongrightarrow {\alpha}q'$ that we describe below, there is some $\bullet \in \{\msf{Cl},\msf{Var},\top,\bot,u_a,u_b\}$ such that $q,q' \in Q_{\bullet}$. We therefore group these transitions by the set of states $Q_\bullet$ to which all states, but the ones we depict in boxes, belong.
	
	\begin{align*}
		Q_{\msf{Cl}}:\; & q_{\msf{Cl}} \xlongrightarrow[]{a} q_{C_1} \xlongrightarrow[]{a} q_{C_2} \xlongrightarrow[]{a} \ldots \xlongrightarrow[]{a} q_{C_s} \\
		Q_{\msf{Var}}:\; & q_{\msf{Var}} \xlongrightarrow[]{a} q_{\msf{Var},u}^1 \\
		& q_{\msf{Var},u}^1 \xlongrightarrow[]{a} q_{\msf{Var},u}^2 \xlongrightarrow[]{a} \ldots \xlongrightarrow[]{a} q_{\msf{Var},u}^M \xRightarrow{b} q_X \xlongrightarrow{b} q_{x_1} \xlongrightarrow{b} q_{x_2} \xlongrightarrow{b} \ldots \xlongrightarrow{b} q_{x_r} \\
		& \forall i \in \Inter{r}:\; q_{x_i} \xlongrightarrow{a} q_{x_i^0} \xlongrightarrow{b} q_{x_1^1} \xlongrightarrow{b} \ldots \xlongrightarrow{b} q_{x_i^d} \xlongrightarrow{b} \\
		& \phantom{\forall i \in \Inter{r}:\;}(x_i \in C_1) \xlongrightarrow{b} (x_i \in C_2) \xlongrightarrow{b} \ldots \xlongrightarrow{b} (x_i \in C_s) \xlongrightarrow{b} \\
		& \phantom{\forall i \in \Inter{r}:\;}(x_i \notin C_1) \xlongrightarrow{b} (x_i \notin C_2) \xlongrightarrow{b} \ldots \xlongrightarrow{b} (x_i \notin C_s)\xlongrightarrow{b}\\
		& \phantom{\forall i \in \Inter{r}:\;}(x_i,1,0) \xlongrightarrow{b} (x_i,2,0) \xlongrightarrow{b} \ldots \xlongrightarrow{b} (x_i,r,0) \xlongrightarrow{b} \\
		& \phantom{\forall i \in \Inter{r}:\;}(x_i,1,1) \xlongrightarrow{b} (x_i,2,1) \xlongrightarrow{b} \ldots
		\xlongrightarrow{b} (x_i,r,1) 
		\xlongrightarrow{b} \\
		& \phantom{\forall i \in \Inter{r}:\;}\ldots 
		\xlongrightarrow{b} \\ 
		& \phantom{\forall i \in \Inter{r}:\;}(x_i,1,T-1) \xlongrightarrow{b} (x_i,2,T-1) \xlongrightarrow{b} \ldots
		\xlongrightarrow{b} (x_i,r,T-1) \xlongrightarrow[]{a} \boxed{q_{u_b}^1}\\
		Q_{\top}:\; & q_{\top} \xlongrightarrow[]{a} q_{\top,u}^1\\
		&q_{\top,u}^1 \xlongrightarrow[]{b} q_{\top,u}^2 \xlongrightarrow[]{b} \ldots \xlongrightarrow[]{b} q_{\top,u}^{k+1} \xRightarrow{a} q_\top^1 \xlongrightarrow[]{a} q_\top^2 \xlongrightarrow[]{a} \ldots \xlongrightarrow[]{a} q_\top^{d+1} \xlongrightarrow[]{a} \boxed{q_{u_a}^1} \\
		Q_{\bot}:\; & q_{\bot} \xlongrightarrow[]{a} q_{\bot,u}^1\\
		&q_{\bot,u}^1 \xlongrightarrow[]{b} q_{\bot,u}^2 \xlongrightarrow[]{b} \ldots \xlongrightarrow[]{b} q_{\bot,u}^{k+1} \xRightarrow{a} q_\bot^1 \xlongrightarrow[]{a} q_\bot^2 \xlongrightarrow[]{a} \ldots \xlongrightarrow[]{a} q_\bot^{d+1} \xlongrightarrow[]{a} \boxed{q_{u_a}^1} \\
		Q_{u_a}:\; & q_{u_a}^1 \xlongrightarrow[]{b} q_{u_a}^2 \xlongrightarrow[]{b} \ldots \xlongrightarrow[]{b} q_{u_a}^{k+1} \xRightarrow{a} \boxed{q_{\msf{acc}}}\\
		Q_{u_b}:\; & q_{u_b}^1 \xlongrightarrow[]{a} q_{u_b}^2 \xlongrightarrow[]{a} \ldots \xlongrightarrow[]{a} q_{u_b}^{k+1} \xRightarrow{b} \boxed{q_{\msf{acc}}}
	\end{align*}
	Let us now describe the transitions that depend on the valuation $\nu$. Since $\nu$ satisfies $(X,\mathcal{C})$, for all $j \in \Inter{s}$, there is some $i_j \in \Inter{r}$ such that $x_{i_j} \in C_j$ and, if $C_j \in \mathcal{C}_+$ (resp. $C_j \in \mathcal{C}_-$), then $v(x_{i_j}) = \top$ (resp. $v(x_{i_j}) = \bot$). Then, we have transitions from states in $Q_{\msf{Cl}}$ to states in $Q_\msf{Var}$, and from states in $Q_{\msf{Var}}$ to states in $Q_\top \cup Q_\bot$. Specifically:
	\begin{itemize}
		\item From $Q_{\msf{Cl}}$ to $Q_\msf{Var}$: for all $j \in \Inter{s}$, $\delta(q_{C_j},b) := q_{x_{i_j}}^0$.
		\item From $Q_{\msf{Var}}$ to $Q_\top \cup Q_\bot$: for all $i \in \Inter{r}$, $\delta(q_{x_i}^d,a) := q_{\nu(x_i)}^1$.
	\end{itemize}
	
	Finally, for all $q \in Q$ and $\alpha \in \Sigma$, if $\delta(q,\alpha)$ is not defined above, then we have $\delta(q,\alpha) := q_{\msf{rej}}$. In particular, this implies that $\delta(q_{\msf{rej}},a) = \delta(q_{\msf{rej}},b) = \delta(q_{\msf{acc}},a) = \delta(q_{\msf{acc}},b) := q_{\msf{rej}}$, thus $q_{\msf{rej}}$ is a self-looping sink state from which all words are rejected.
\end{definition}

We illustrate the above definition in Example*~\ref{ex:DFA}. 
\begin{figure}
	\centering
	\begin{minipage}{.6\textwidth}
		\hspace*{-1cm}
		\centering
		\includegraphics[
		height=23cm]{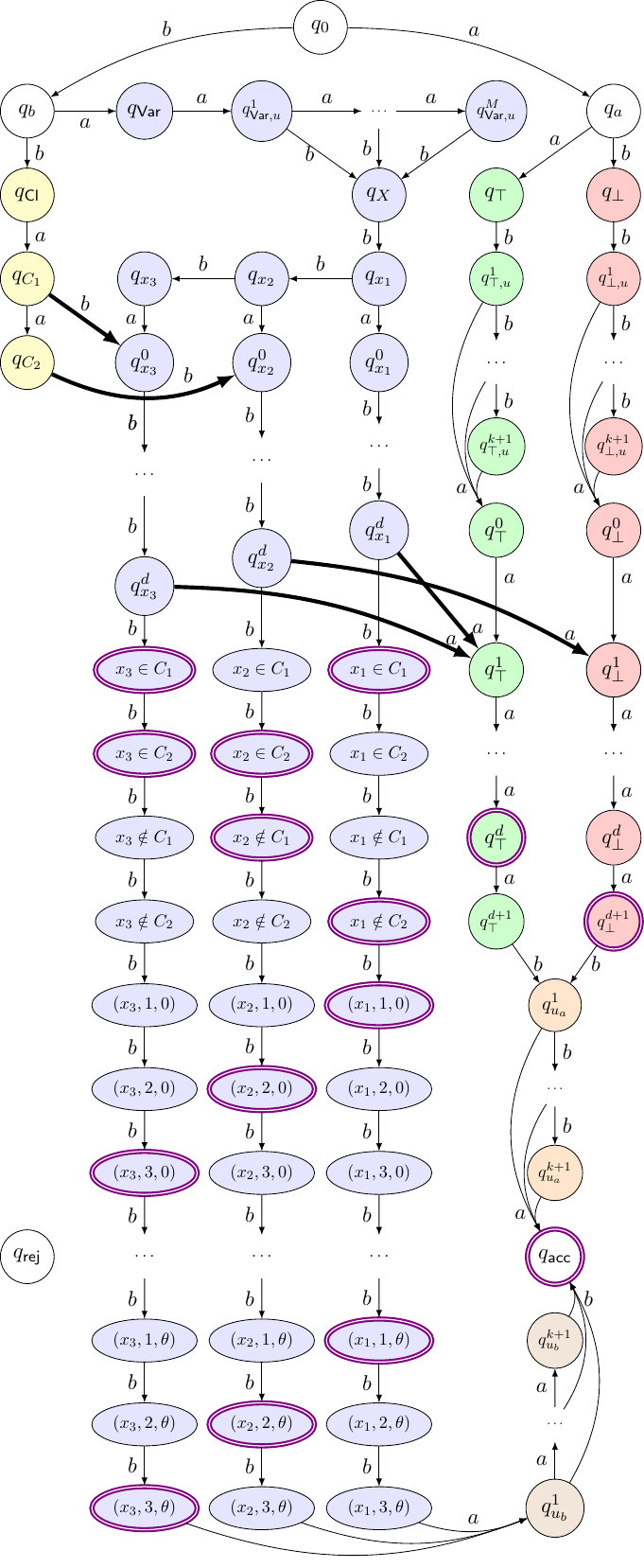}
		\caption{A DFA $\mathcal{A}(X,\mathcal{C},k,d,T,M,\nu)$.}
		\label{fig:test1}
	\end{minipage}%
	\begin{minipage}{.4\textwidth}
		\hspace*{-5cm}
		\begin{example}
			\label{ex:DFA}
			Consider the set of variables $X = \{ x_1,x_2,x_3 \}$, and a positive clause $C_1 = \{ x_1,x_3 \} \in \mathcal{C}_+$ and a negative clause $C_2 = \{ x_2,x_3 \} \in \mathcal{C}_-$. Consider the valuation $\nu: X \to \{\top,\bot\}$ such that $\nu(x_1) = \nu(x_3) := \top$, and $\nu(x_2) := \bot$. This valuation satisfies $(X,\mathcal{C})$ with $i_1 := 3$ (since $x_3 \in C_1$, $C_1 \in \mathcal{C}_+$, and $\nu(x_3) = \top$), and $i_2 := 2$ (since $x_2 \in C_2$, $C_2 \in \mathcal{C}_-$, and $\nu(x_2) = \bot$). Given some integers $k,d,T,M$, the DFA $\mathcal{A} := \mathcal{A}(X,\mathcal{C},k,d,T,M,\nu)$ of Definition~\ref{def:DFA_from_input} is depicted in Figure~\ref{fig:test1} (with $\theta := T-1$). We let $\mathcal{P} := \mathcal{P}(X,\mathcal{C},k,d,T,M)$.
			
			In this DFA, states are colored in white, yellow, blue, green, red, orange, and brown according to the set of states in $\{Q_1,Q_{\msf{Cl}},Q_{\msf{Var}},Q_{\top},Q_{\bot},Q_{u_a},Q_{u_b}\}$ to which they belong. We did not depict any transition 
			to the self-looping sink state $q_{\msf{rej}}$.
			
			Accepting states are depicted with a double-circled violet border. For instance: $(x_1 \in C_1) \in Q$ is accepting (and thus $(x_1 \notin C_1) \in Q$ is not accepting) since $x_1 \in C_1$, while $(x_2 \in C_1) \in Q$ is not accepting (and thus $(x_2 \notin C_1) \in Q$ is accepting) since $x_2 \notin C_2$. We also have $q^d_\top$ accepting and $q^{d+1}_\top$ not accepting, while $q^d_\bot$ is not accepting and $q^{d+1}_\bot$ is accepting.
			
			The transitions that depend on the valuation $\nu$ are depicted in bold, with the transitions from $Q_{\msf{Cl}}$ to $Q_{\msf{Var}}$ defined by the indices $i_j \in \Inter{3}$, for $j \in \Inter{2}$, while with the transitions from $Q_{\msf{Var}}$ to $Q_{\top} \cup Q_{\bot}$ are given directly by the valuation $\nu$. Since this valuation $\nu$ satisfies $(X,\mathcal{C})$, we have that:
			\begin{itemize}
				\item the word $b \cdot b \cdot a^1 \cdot b \cdot b^d \cdot a^d \in \mathcal{P}$ (since $C_1 \in \mathcal{C}_+$) is accepted by $\mathcal{A}$ since $\delta(q_{C_1},b) = q_{x_3}^0$, and $\delta(q_{x_3}^d,a) = q_{\top}^1$;
				\item the word $b \cdot b \cdot a^2 \cdot b \cdot b^d \cdot a^{d+1} \in \mathcal{P}$ (since $C_2 \in \mathcal{C}_-$) is accepted by the DFA $\mathcal{A}$ since $\delta(q_{C_2},b) = q_{x_2}^0$, and $\delta(q_{x_2}^d,a) = q_{\bot}^1$.
			\end{itemize}
		\end{example}
	\end{minipage}
\end{figure}

Let us now proceed to the proof of Lemma*~\ref{lem:valuation_implies_automaton}. This proof is not hard, as it essentially consists in computing the language accepted by the DFA of Definition~\ref{def:DFA_from_input}, however it is quite long since we compute this language step by step.
\begin{proof}
	Consider a positive instance $(X,\mathcal{C})$ of Problem 2 and let $n,k,d,M,T \in \N$ be such that $n \geq \omega_1(r,d) + \omega_2(X,\mathcal{C},k,T,M)$ and $k \geq M \cdot r + s \cdot (s + T -1)$. Let $m \in \N$. 
	
	By definition, there is some valuation $\nu: X \to \{\top,\bot\}$ satisfying $(X,\mathcal{C})$. Consider the DFA $\mathcal{A} := \mathcal{A}(X,\mathcal{C},k,d,T,M,\nu) = (Q,\Sigma,\qinit,\delta,F)$ from Definition~\ref{def:DFA_from_input}. We have $|Q| \leq n$ by assumption. Let us now compute the language $\Lan(\mathcal{A}) \subseteq \Sigma^*$ and relate it to the set of positive words $\mathcal{P}(X,\mathcal{C},k,d,T,M)$ from Definition~\ref{def:pos_words_for_reduction}. To compute this language, we will, as intermediary steps, compute sets of words accepted from various states $q \in Q$, i.e. the language:
	\begin{equation*}
		\Lan(\mathcal{A},q) := \{ w \in \Sigma^* \mid \delta^*(q,w) \in F \}
	\end{equation*} 
	Let us start by computing the language $\Lan(\mathcal{A},q^1_{u_a})$ and $\Lan(\mathcal{A},q^1_{u_b})$. The state $q_{\msf{acc}}$ is the single accepting state reachable from the state $q^1_{u_a}$. From the definition of the transition function $\delta$ for states in $Q_{u_a}$, it is clear that we have:
	\begin{align*}
		& \Lan(\mathcal{A},q^{k+1}_{u_a}) = \{ a \} \\
		\forall \sigma \in \llbracket 2,k+1 \rrbracket,\; & \Lan(\mathcal{A},q^{\sigma-1}_{u_a}) = \{ a \} \cup b \cdot \Lan(\mathcal{A},q^{\sigma}_{u_a})
	\end{align*} 
	Hence, we have:
	\begin{equation*}
		\Lan(\mathcal{A},q^{1}_{u_a}) = \{ b^\sigma \cdot a \mid \sigma \in \llbracket 0,k \rrbracket \} \text{ and } b \cdot \Lan(\mathcal{A},q^{1}_{u_a}) = U_{k+1}(a)
	\end{equation*}
	Similarly, we have:
	\begin{equation*}
		\Lan(\mathcal{A},q^{1}_{u_b}) = \{ a^\sigma \cdot b \mid \sigma \in \llbracket 0,k \rrbracket \} \text{ and } a \cdot \Lan(\mathcal{A},q^{1}_{u_b}) = U_{k+1}(b)
	\end{equation*}
	
	In a similar manner, letting $\Lan(\mathcal{A},q,q') := \{ w \in \Sigma^* \mid \delta^*(q,w) = q' \}$ for all $q,q' \in Q$, we have $\Lan(\mathcal{A},q_{\top},q_\top^0) = \Lan(\mathcal{A},q_{\bot},q_\bot^0) = U_{k+1}(a)$, and $\Lan(\mathcal{A},q_{\msf{Var}},q_X) = U_M(b)$.
	
	Let us now compute the languages $\Lan(\mathcal{A},q^1_\top)$ and $\Lan(\mathcal{A},q^1_\bot)$. A word in $\Sigma^*$ is accepted from the state $q^1_\top$ if, upon reading it from $q^1_\top$, we either stop at the accepting state $q^d_\top$, or we visit the state $q^{1}_{u_a}$ and the corresponding suffix is in the language $\Lan(\mathcal{A},q^{1}_{u_a})$. Therefore, we have:
	\begin{equation*}
		\Lan(\mathcal{A},q^{1}_\top) = \{a^{d-1}\} \cup a^{d} \cdot b \cdot \Lan(\mathcal{A},q^{1}_{u_a}) = \{a^{d-1}\} \cup a^{d} \cdot U_{k+1}(a)
	\end{equation*}
	Similarly, we have:
	\begin{equation*}
		\Lan(\mathcal{A},q^{1}_\bot) = \{a^{d}\} \cup a^{d} \cdot b \cdot \Lan(\mathcal{A},q^{1}_{u_a}) = \{a^{d}\} \cup a^{d} \cdot U_{k+1}(a)
	\end{equation*}
	
	Now, let us compute the languages $\Lan(\mathcal{A},q_\top)$ and $\Lan(\mathcal{A},q_\bot)$. We start with $\Lan(\mathcal{A},q_\top)$:
	\begin{equation*}
		\Lan(\mathcal{A},q_\top) = \Lan(\mathcal{A},q_\top,q_\top^0) \cdot a \cdot \Lan(\mathcal{A},q^{1}_\top) = U_{k+1}(a) \cdot \left(\{a^{d}\} \cup a^{d+1} \cdot U_{k+1}(a)\right)
	\end{equation*}
	Hence, we have:
	\begin{equation}
		\label{eqn:aa}
		\Lan(\mathcal{A}) \cap a \cdot a \cdot \Sigma^* = a \cdot a \cdot \Lan(\mathcal{A},q_\top) = a \cdot a \cdot U_{k+1}(a) \cdot \left(\{a^{d}\} \cup a^{d+1} \cdot U_{k+1}(a)\right) = \mathcal{P}_\top(k,d)
	\end{equation}
	Similarly, we have:
	\begin{equation*}
		\Lan(\mathcal{A},q_\bot) = U_{k+1}(a) \cdot \left(\{a^{d+1}\} \cup a^{d+1} \cdot U_{k+1}(a)\right)
	\end{equation*}
	and 
	\begin{equation}
		\label{eqn:ab}
		\Lan(\mathcal{A}) \cap a \cdot b \cdot \Sigma^* = a \cdot b \cdot U_{k+1}(a) \cdot \left(\{a^{d+1}\} \cup a^{d+1} \cdot U_{k+1}(a)\right) = \mathcal{P}_\bot(k,d)
	\end{equation}
	
	Let us now compute the language $\Lan(\mathcal{A},q_{\msf{Var}})$. To do so, we let $i \in \Inter{r}$ and we compute the language $b \cdot \Lan(\mathcal{A},(x_i \in C_1))$. We have:
	\begin{align*}
		b \cdot \Lan(\mathcal{A},(x_i \in C_1)) & = \{ b^\sigma \mid \sigma \in \msf{Ind}(x_i) \} \cup b^{s + s + T \cdot r} \cdot a \cdot \Lan(\mathcal{A},q^1_{u_b})
	\end{align*}
	with $a \cdot \Lan(\mathcal{A},q^1_{u_b}) = U_{k+1}(b)$ (as established above). Furthermore, we have:
	\begin{equation*}
		\Lan(\mathcal{A},q_{x_i}^0) = b^d \cdot \left(a \cdot \Lan(\mathcal{A},q^1_{\nu(x_i)}) \cup b \cdot \Lan(\mathcal{A},(x_i \in C_1)) \right)
	\end{equation*}
	and
	\begin{equation*}
		\Lan(\mathcal{A},q_{\msf{Var}}) = \Lan(\mathcal{A},q_{\msf{Var}},q_X) \cdot \left(\bigcup_{i \in \Inter{r}} \Lan(\mathcal{A},q_{X},q_{x_i}^0) \cdot \Lan(\mathcal{A},q_{x_i}^0)\right) = U_{M}(b) \cdot \left(\bigcup_{i \in \Inter{r}} b^i \cdot a \cdot \Lan(\mathcal{A},q_{x_i}^0)\right)
	\end{equation*}
	Therefore, letting $d(\top) := d$ and $d(\bot) := d+1$, and $L := \Lan(\mathcal{A}) \cap b \cdot a \cdot \Sigma^*$, we have:
	\begin{align*}
		L = & \; b \cdot a \cdot \Lan(\mathcal{A},q_{\msf{Var}}) = b \cdot a \cdot U_{M}(b) \cdot \left(\bigcup_{i \in \Inter{r}} b^i \cdot a \cdot \Lan(\mathcal{A},q_{x_i}^0) \right) \\
		= & \bigcup_{i \in \Inter{r}} b \cdot a \cdot U_{M}(b) \cdot b^i \cdot a \cdot \Lan(\mathcal{A},q_{x_i}^0) \\
		= & \bigcup_{i \in \Inter{r}} b \cdot a \cdot U_{M}(b) \cdot b^i \cdot a \cdot b^{d} \cdot \left(\underbrace{a \cdot \Lan(\mathcal{A},q^1_{\nu(x_i)})}_{= \{a^{d(\nu(x_i))}\} \; \cup \; a^{d+1} \cdot U_{k+1}(a)} \cup \{ b^\sigma \mid \sigma \in \msf{Ind}(x_i) \}  \cup d^{s + s + T \cdot r} \cdot U_{k+1}(b) \right) \\
		= & \bigcup_{i \in \Inter{r}} b \cdot a \cdot U_M(b) \cdot b^i \cdot a \cdot b^{d} \cdot \left(a^{d+1} \cdot U_{k+1}(a) \cup \{ b^\sigma \mid \sigma \in \msf{Ind}(x_i) \} \cup d^{s + s + T \cdot r} \cdot U_{k+1}(b)\right) \\
		\cup & \bigcup_{i \in \Inter{r}} b \cdot a \cdot U_M(b) \cdot b^i \cdot a \cdot b^{d} \cdot a^{d(\nu(x_i))}\\\
	\end{align*}
	Therefore, letting $\msf{ErrVar} := \bigcup_{i \in \Inter{r}} b \cdot a \cdot U_M(b) \cdot b^i \cdot a \cdot b^{d} \cdot a^{d(\nu(x_i))}$, we obtain:
	\begin{equation}
		\label{eqn:ba}
		\Lan(\mathcal{A}) \cap b \cdot a \cdot \Sigma^* = \bigcup_{i \in \Inter{r}} \mathcal{P}_{\msf{Var}}(X,\mathcal{C},k,d,T,M,i) \cup \msf{ErrVar}
	\end{equation}
	
	Let us now finally turn to the language $\Lan(\mathcal{A},q_{\msf{Cl}})$. Clearly, we have:
	\begin{align*}
		\Lan(\mathcal{A},q_{\msf{Cl}}) & = \bigcup_{j \in \Inter{s}} \Lan(\mathcal{A},q_{\msf{Cl}},q_{C_j}) \cdot b \cdot \Lan(\mathcal{A},q^0_{x_{i_j}}) = \bigcup_{j \in \Inter{s}} a^j \cdot b \cdot \Lan(\mathcal{A},q^0_{x_{i_j}}) \\
		& = \bigcup_{j \in \Inter{s}} a^j \cdot b \cdot \left(b^d \cdot a \cdot \Lan(\mathcal{A},q^1_{\nu(x_{i_j})}) \cup b^d \cdot b \cdot \Lan(\mathcal{A},(x_{i_j} \in C_1))\right) \\
		& = \bigcup_{j \in \Inter{s}} a^j \cdot b \cdot \left( \{b^d \cdot a^{d(\nu(x_{i_j}))}\} \; \cup \; b^d \cdot  a^{d+1} \cdot U_{k+1}(a) \right.\\
		& \phantom{bigcup_{j \in \Inter{s}} a^j }
		\left.\cup \; b^d \cdot \{ b^{\sigma} \mid \sigma \in \msf{Ind}(x_{i_j}) \} \cup b^{s + s + T \cdot r} \cdot U_{k+1}(b)\right)
	\end{align*}
	For all $j \in \Inter{s}$, since $x_{i_j} \in C_j$, we have $j \in \msf{App}(x_{i_j}) \subseteq \msf{Ind}(x_{i_j})$. Hence, we have:
	\begin{align*}
		b \cdot b \cdot \Lan(\mathcal{A},q_{\msf{Cl}}) = & \bigcup_{j \in \Inter{s}} b \cdot b \cdot a^j \cdot b \cdot b^d \cdot \left( \{ a^{d(\nu(x_{i_j}))}\} \; \cup \; a^{d+1} \cdot U_{k+1}(a) \right)\\
		\cup & \bigcup_{j \in \Inter{s}} b \cdot b \cdot a^j \cdot b \cdot b^d \cdot \left(\{b^{j}\} \cup b^{s + s + T \cdot r} \cdot U_{k+1}(b)\right) \\
		\cup & \bigcup_{j \in \Inter{s}} b \cdot b \cdot a^j \cdot b \cdot b^d \cdot \{b^\sigma \mid \sigma \in \msf{Ind}(x_{i_j}) \setminus \{j\}\}
	\end{align*}
	Furthermore, for all $j \in \Inter{s}$, we have:
	\begin{equation*}
		b \cdot b \cdot a^j \cdot b \cdot b^d \cdot \left(\{b^{j}\} \cup b^{s + s + T \cdot r} \cdot U_{k+1}(b)\right) = \mathcal{P}_{\msf{Cl}}(X,\mathcal{C},k,d,T,j)
	\end{equation*}
	In addition, since $\nu$ satisfies $(X,\mathcal{C})$, if $C_j \in \mathcal{C}_+$, then $\nu(x_{i_j}) = \top$, and if $C_j \in \mathcal{C}_-$, then $\nu(x_{i_j}) = \bot$. Thus, by definition of $d(\nu(x_{i_j}))$ and $\mathcal{P}^{\mathsf{acc}}_{\msf{Cl}}(k,d,j)$, we have $b \cdot b \cdot a^j \cdot b \cdot b^d \cdot a^{d(\nu(x_{i_j}))} \in \mathcal{P}^{\mathsf{acc}}_{\msf{Cl}}(k,d,j)$. Therefore:
	\begin{equation*}
		b \cdot b \cdot a^j \cdot b \cdot b^d \cdot \left( \{ a^{d(\nu(x_{i_j}))}\} \; \cup \; a^{d+1} \cdot U_{k+1}(a) \right) = \mathcal{P}^{\mathsf{acc}}_{\msf{Cl}}(k,d,j)
	\end{equation*}
	Hence, letting $\msf{ErrCl} := \bigcup_{j \in \Inter{s}} b\cdot b\cdot a^j \cdot b \cdot b^d \cdot \{b^\sigma \mid \sigma \in \msf{Ind}(x_{i_j}) \setminus \{j\}\}$, we have:
	\begin{equation}
		\label{eqn:bb}
		\Lan(\mathcal{A}) \cap b \cdot b \cdot \Sigma^* = b \cdot b \cdot \Lan(\mathcal{A},q_{\msf{Cl}}) = \bigcup_{j \in \Inter{s}} \left(\mathcal{P}^{\mathsf{acc}}_{\msf{Cl}}(k,d,j) \cup \mathcal{P}_{\msf{Cl}}(X,\mathcal{C},k,d,T,j)\right) \cup \msf{ErrCl}
	\end{equation}
	Overall, with Equations~(\ref{eqn:aa}),~(\ref{eqn:ab}),~(\ref{eqn:ba}), and~(\ref{eqn:bb}), we obtain:
	\begin{align*}
		\Lan(\mathcal{A}) & = \left(\Lan(\mathcal{A}) \cap a \cdot a \cdot \Sigma^*\right) \cup \left(\Lan(\mathcal{A}) \cap a \cdot b \cdot \Sigma^*\right) \cup \left(\Lan(\mathcal{A}) \cap b \cdot a \cdot \Sigma^*\right) \cup \left(\Lan(\mathcal{A}) \cap b \cdot b \cdot \Sigma^*\right)\\
		& = \mathcal{P}(X,\mathcal{C},k,d,T,M) \cup \msf{ErrVar} \cup \msf{ErrCl}
	\end{align*}
	with 
	\begin{equation*}
		|\msf{ErrVar}| = |\bigcup_{i \in \Inter{r}} b \cdot a \cdot U_M(b) \cdot b^i \cdot a \cdot b^{d} \cdot a^{d(\nu(x_i))}| = M \cdot r
	\end{equation*}
	and, since $\mathsf{App}(x_{i_j}) \cup \mathsf{NotApp}(x_{i_j}) = \Inter{s}$:
	\begin{align*}
		|\msf{ErrCl}| & = |\bigcup_{j \in \Inter{s}} b\cdot b\cdot a^j \cdot b \cdot b^d \cdot \{b^\sigma \mid \sigma \in \msf{Ind}(x_{i_j}) \setminus \{j\}\}| = |\msf{Ind}(x_{i_j})| - 1 \\
		& = \sum_{j \in \Inter{s}} (|\mathsf{App}(x_{i_j}) \cup \{ s + \sigma \mid \sigma \in \mathsf{NotApp}(x_{i_j})\} \cup \{ s + s + i_j + t \cdot r \mid t \in \llbracket 0,T-1 \rrbracket \})\}| -1) \\
		& = \sum_{j \in \Inter{s}} (|\mathsf{App}(x_{i_j})| + |\mathsf{NotApp}(x_{i_j})| + |\llbracket 0,T-1 \rrbracket| -1) \\
		& = s \cdot (s + T -1)
	\end{align*}
	Therefore, we have:
	\begin{itemize}
		\item $|Q| \leq n$;
		\item $\mathcal{P}(X,\mathcal{C},k,d,T,M) \subseteq \Lan(\mathcal{A})$;
		\item $|\Lan(\mathcal{A}) \setminus \mathcal{P}(X,\mathcal{C},k,d,T,M)| = |\msf{ErrCl}| + 	|\msf{ErrVar}| = M \cdot r + s \cdot (s + T -1) \leq k$
	\end{itemize}
	Thus, for all $m \in \N$, there is a  $(\mathcal{P}(X,\mathcal{C},k,d,T,M),n,k,m)$-suitable DFA.
\end{proof}

\subsection{If there is a suitable DFA, then there is a satisfying valuation}
In all of this subsection, we fix a positive instance $(X,\mathcal{C})$ of Problem 2, and $n,k,m,d,M,T \in \N$. We also let $\mathcal{P} := \mathcal{P}(X,\mathcal{C},d,k,T,M)$. The goal of this subsection is to establish that, under some conditions on the integers $n,k,m,d,M,T \in \N$, if there is a $(\mathcal{P},n,k,m)$-suitable DFA, then $(X,\mathcal{C})$ is a positive instance of Problem 2. Let us introduce below the inequalities between the integers that we will consider.
\begin{definition}
	\label{def:inequalities}
	We will consider four different inequalities between the above integers:
	\begin{itemize}
		\item Assumption A: $m \geq 2\max_{u \in \mathcal{P}} |u| + \max(r,d)$;
		\item Assumption B: $k < M \cdot T$;
		\item Assumption C: $
		n < d \cdot (2+r) + d$;
		\item Assumption D: $k \leq s \cdot (T + s-1) + M \cdot r$.
	\end{itemize}
\end{definition}

Let us now state the result that we establish in this subsection.
\begin{lemma}
	\label{lem:assumption_DFA_valuation}
	If Assumptions A, B, C, and D hold, and there is a $(\mathcal{P},n,k,m)$-suitable DFA, then $(X,\mathcal{C})$ is a positive instance of Problem 2.
\end{lemma}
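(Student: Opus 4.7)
The plan is to take any $(\mathcal{P},n,k,m)$-suitable DFA $\mathcal{A} = (Q,\Sigma,\qinit,\delta,F)$ and extract from it an explicit valuation $\nu\colon X \to \{\top,\bot\}$ satisfying $(X,\mathcal{C})$. Concretely, I would systematically establish the six structural properties laid out in Items 1--6 of the sketch in Section~\ref{sec:investigate_decision_problem}, each of which is either a separation statement (``two prefixes reach different states of $\mathcal{A}$'') handled by a TME-argument, or a collapse statement (``several prefixes must agree on a common continuation'') handled by a TMS-argument. The four assumptions in Definition~\ref{def:inequalities} are tailored so that exactly these arguments go through: A supplies enough room in the length bound, B gives a lower bound on the number of distinguishing suffixes available for variable-states, C gives an upper bound on the number of states available for the interior of the DFA, and D fixes the error budget matching the computation in Lemma~\ref{lem:valuation_implies_automaton}.

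For Items 1, 2 and 4 (the separations for $w_\top/w_\bot$, for the variable-words $w_{x_i}$, and for the clause-words $w_{C_j}$), the template is the same. I pick two candidate prefixes $w, w' \in \msf{Pref}(\mathcal{P})$, then read off from Definition~\ref{def:pos_words_for_reduction} a family of at least $k+1$ suffixes $(u_\sigma)_\sigma$ such that $w \cdot u_\sigma \in \mathcal{P}$ while $w' \cdot u_\sigma \notin \mathcal{P}$ (and vice versa, by symmetry). Assumption A guarantees that each $w' \cdot u_\sigma$ has length at most $m$, so a collision $\delta^*(\qinit,w) = \delta^*(\qinit,w')$ would inject more than $k$ distinct words into $\Lan(\mathcal{A}) \cap \Sigma^{\leq m} \setminus \mathcal{P}$, contradicting suitability. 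The delicate case is Item 2, where the distinguishing suffixes come from the $b^{s+s+T\cdot r} \cdot U_{k+1}(b)$ factor and the $\msf{Ind}(x_i)$ exponents: here Assumption B is invoked to ensure that the $T \cdot r$-sized index block in $\msf{Ind}(x_i)$ actually yields more than $k$ genuinely distinguishing suffixes between variables $x_i \neq x_{i'}$.

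For Items 3 and 5 (the collapses defining $w_{\mathsf{Var}}$ and $w_{\mathsf{Cl}}$), the template is a pigeonhole over states. I consider, in Item 3, the set of states $\{\delta^*(\qinit,w_\top), \delta^*(\qinit,w_\bot)\} \cup \{\delta^*(\qinit, w_{x_i} \cdot w_{\mathsf{Var}}) \mid i \in \Intr\}$, and I argue that the entire DFA must already commit $d \cdot (2+r)$ states to the ``interior'' paths forced by the positive words in $\mathcal{P}_\top, \mathcal{P}_\bot$ and the variable-words. Assumption C then gives $|Q| \leq n < d(2+r) + d$, leaving too few remaining states for these tail-states to all be pairwise distinct; propagating the collision with a TME-argument on the $a^{d+1}\cdot U_{k+1}(a)$ suffixes pins the tail onto $\{\delta^*(\qinit,w_\top),\delta^*(\qinit,w_\bot)\}$. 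Item 5 is entirely analogous using the clause-words. Item 6 is then immediate: the words in $\mathcal{P}^{\mathsf{acc}}_{\msf{Cl}}(k,d,j)$ force $\delta^*(\qinit, w_{C_j} \cdot w_{\mathsf{Cl}} \cdot w_{\mathsf{Var}})$ to coincide with $\delta^*(\qinit, w_\top)$ (resp.\ $w_\bot$) for $C_j \in \mathcal{C}_+$ (resp.\ $\mathcal{C}_-$), so the variable $x_{C_j}$ witnessing the collapse in Item 5 satisfies $\nu(x_{C_j}) = \top$ (resp.\ $\bot$), meaning $\nu$ satisfies $C_j$.

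The main obstacle I anticipate is the bookkeeping for the TME-arguments: for each separation I must exhibit a family of at least $k+1$ suffixes that are simultaneously (i) pairwise distinct, (ii) of total length at most $m - |w|$ after concatenation (using Assumption A), and (iii) such that $w' \cdot u_\sigma$ genuinely lies outside $\mathcal{P}$ and is not already accounted for by some other error-word used in a parallel argument. The latter requirement is what forces the somewhat baroque structure of the families $U_i(\alpha)$, $\msf{App}(x_i)$, $\msf{NotApp}(x_i)$ and $\msf{Ind}(x_i)$ in Definition~\ref{def:pos_words_for_reduction}, and it is where Assumption D plays its role: the overall error count in any suitable DFA is at most $k \leq s(T+s-1) + M \cdot r$, which is precisely what closes the arithmetic in each TME-argument without double-counting. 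Once these set-theoretic checks are made uniformly, the six items chain together to yield the valuation and the lemma follows.
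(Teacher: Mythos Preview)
Your overall architecture matches the paper's: it organizes the argument as a four-step Lemma (the paper's Lemma~\ref{lem:central_lemma}), with Step~1 establishing the separation properties $\msf{Prop}(a,a)$, $\msf{Prop}(b,b)$, $\msf{Prop}(a,b)$ via TME-arguments under Assumption~A, Step~2 using Assumption~B to find good prefixes $u_i \in U_M(b)$ for each variable, Step~3 using the TMS pigeonhole under Assumption~C, and Step~4 extracting the valuation under Assumption~D. Two points need correction, one minor and one substantive.

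The minor one: your account of Assumption~B is off. The count that must exceed $k$ is $M \cdot T$, not anything involving $T \cdot r$; it comes from ranging over the $M$ prefixes in $U_M(b)$ \emph{and} the $T$ suffix exponents $s+s+i+t\cdot r$ for $t \in \llbracket 0,T-1\rrbracket$. The conclusion is also existential, not universal: for each $i$ one obtains \emph{some} $u_i \in U_M(b)$ whose variable-state is separated from all others, and those particular $u_i$ are the ones used thereafter.

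The substantive gap is in your treatment of Item~5 and the role of Assumption~D. Saying ``Item~5 is entirely analogous'' to the TMS-argument of Item~3 only yields that $\delta^*(\qinit, b\cdot b\cdot a^j\cdot b\cdot b^d) = t_{x_{i_j}}$ for \emph{some} $i_j \in \Intr$; it does \emph{not} give $x_{i_j} \in C_j$, which is what you need for $\nu$ to satisfy $C_j$. The paper secures this membership by a squeeze argument that is quite different from the TME template you describe: one exhibits, for each $i$, a family $\msf{Err}^i_{\top,\bot}$ of exactly $M$ errors, and for each $j$, a family $\msf{ErrApp}^j$ of at least $T+s-1$ errors (built from the $\msf{App}(x_{i_j})$ and $\msf{NotApp}(x_{i_j})$ exponents), with equality holding precisely when $j \in \msf{App}(x_{i_j})$, i.e., $x_{i_j} \in C_j$. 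Summing gives $|\Sigma^{\leq m} \cap (\Lan(\mathcal{A}) \setminus \mathcal{P})| \geq M\cdot r + s(T+s-1)$, and Assumption~D is the matching \emph{upper} bound $k \leq M\cdot r + s(T+s-1)$ that forces every inequality to be an equality, hence $x_{i_j} \in C_j$ for all $j$. So D is not used to ``close the arithmetic in each TME-argument''; it is a global tightness constraint that turns the lower bound on errors into an exact count, and that exactness is what pins each clause to a variable actually occurring in it.
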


To establish this lemma, we introduce several notations below.
\begin{definition}
	We define two sets of prefixes of words in $\mathcal{P}$. Specifically, we let:
	\begin{align*}
		\msf{Cont}_{k+1}(a) := \{ & \; a \cdot a \cdot u_a, a \cdot b \cdot u_a, b \cdot a \cdot u \cdot b^i \cdot a \cdot b^d \\
		\mid & \; u_a \in U_{k+1}(a), u \in U_M(b), i \in \Inter{r} \}
	\end{align*}
	We consider these prefixes because, by definition, for all $v \in \msf{Cont}_{k+1}(a)$ and $u_a' \in U_{k+1}(a)$, we have: $v \cdot a^{d+1} \cdot u_a' \in \mathcal{P}$. Similarly, we also let:
	\begin{align*}
		\msf{Cont}_{k+1}(b) := \{ & \; b \cdot a \cdot u \cdot b^i \cdot a, b \cdot b \cdot a^j \cdot b \\ \mid & \; u \in U_M(b), i \in \Inter{r}, j \in \Inter{s} \}
	\end{align*}
	We consider these prefixes because, by definition, for all $v \in \msf{Cont}_{k+1}(b)$ and $u_b' \in U_{k+1}(b)$, we have: $v \cdot b^{d+s+s+T \cdot r} \cdot u_b' \in \mathcal{P}$.
\end{definition}

Now, we state a lemma* divided into four consecutive steps (which we prove separately afterwards) from which we will be able to straightforwardly deduce Lemma*~\ref{lem:assumption_DFA_valuation}. 
\begin{lemma}
	\label{lem:central_lemma}
	Assume that there is DFA $\mathcal{A} = (Q,\Sigma,\qinit,\delta,F)$ such that: 1) $|\mathcal{A}| \leq n$; 2) $\mathcal{P} \subseteq \Lan(\mathcal{A})$; and 3) $|\Sigma^{\leq m} \cap (\Lan(\mathcal{A}) \setminus \mathcal{P})| \leq k$. For all words $u,v \in \Sigma^*$, we let:
	\begin{equation*}
		\Delta(u,v) := \{ \delta^*(\qinit,u \cdot w) \mid w \sqsubseteq v, w \neq \epsilon \} \subseteq Q
	\end{equation*}
	
	\textbf{Step 1: Assume that A holds.} For all $l,l' \in \Inter{d}$, we have:
	\begin{itemize}
		\item $\mathsf{Prop}(a,a)$: For all $v_a,v_a' \in \msf{Cont}_{k+1}(a)$, if $l \neq l'$ then: $\delta^*(\qinit,v_a\cdot a^l) \neq \delta^*(\qinit,v_a'\cdot a^{l'})$
		\item $\mathsf{Prop}(b,b)$: For all $v_b,v_b' \in \msf{Cont}_{k+1}(b)$, if $l \neq l'$ then: $\delta^*(\qinit,v_b \cdot b^l) \neq \delta^*(\qinit,v_b' \cdot b^{l'})$
		\item $\mathsf{Prop}(a,b)$: For all $v_a \in \msf{Cont}_{k+1}(a)$, $v_b \in \msf{Cont}_{k+1}(b)$: $\delta^*(\qinit,v_a\cdot a^l) \neq \delta^*(\qinit,v_b \cdot b^{l'})$
	\end{itemize}
	
	Furthermore, there is $u_a^\top,u_a^\bot \in U_{k+1}(a)$ such that $\Delta(a \cdot a \cdot u_a^\top,a^d) \cap \Delta(a \cdot b \cdot u_a^\bot,a^d) = \emptyset$ and
	:
	\begin{align*}
		a \cdot a \cdot u_a^\top \cdot a^d,a \cdot b \cdot u_a^\bot \cdot a^{d+1} \in \Lan(\mathcal{A}) \\
		a \cdot b \cdot u_a^\bot \cdot a^{d},a \cdot a \cdot u_a^\top \cdot a^{d+1} \notin \Lan(\mathcal{A})
	\end{align*}
	
	\textbf{Step 2: Additionally assume that B holds.} For all $i \in \Inter{r}$, there is some $u_i \in U_M(b)$ such that the set $\bigcup_{1 \leq i \leq r} \Delta(b \cdot a \cdot u_i \cdot b^i \cdot a,b^d)$ is of cardinality $r \cdot d$. 
	
	Therefore, the set $V$ defined below is of cardinality $(2 + r) \cdot d$:
	\begin{equation*}
		V := \Delta(a \cdot a \cdot u_a^\top,a^d) \cup \Delta(a \cdot b \cdot u_a^\bot,a^d) \cup \bigcup_{1 \leq i \leq r} \Delta(b \cdot a \cdot u_i \cdot b^i \cdot a,b^d)
	\end{equation*}
	
	\textbf{Step 3: Additionally assume that C holds.} Let $t_\top := \delta^*(\qinit,a \cdot a \cdot u_a^\top \cdot a^d) \in Q$ and $t_\bot := \delta^*(\qinit,a \cdot b \cdot u_a^\bot \cdot a^d)$. 
	\begin{itemize}
		\item Let $i \in \Intr$. For all $u \in U_M(b)$, we have: $\delta^*(\qinit,b \cdot a \cdot u \cdot b^i \cdot a \cdot b^d \cdot a^d) \in \{t_\top,t_\bot \}$. 
		
		Then, letting $t_{x_i} := \delta^*(\qinit,b \cdot a \cdot u_i \cdot b^i \cdot a \cdot b^d)$, we have: $\delta^*(t_{x_i},a^d) \in \{t_\top,t_\bot \}$;
		\item For all $j \in \Ints$, there is some $i_j \in \Intr$ such that: $\delta^*(\qinit,b \cdot b \cdot a^j \cdot b \cdot b^d) = t_{x_{i_j}}$.
	\end{itemize}
	
	\textbf{Step 4: Additionally assume that D holds.} For all $j \in \Ints$, we have $x_{i_j} \in C_j$. Thus, the valuation $\nu: X \to \{\top,\bot\}$ such that, for all $x \in X$, $\delta^*(t_x,a^d) = t_{\nu(x)}$, satisfies $(X,\mathcal{C})$.
\end{lemma}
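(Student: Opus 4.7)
The plan is to carry out the four steps in sequence, combining the two argument templates described in the main text: TME arguments (condition 3) forbids collisions that would produce more than $k$ extra accepted words of length $\leq m$) and TMS arguments (condition 1) forces collisions via pigeonhole).

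For Step 1, I would prove each property $\mathsf{Prop}(\alpha,\beta)$ by a direct TME argument. Suppose two prefixes $v\cdot \alpha^l$ and $v'\cdot \alpha^{l'}$ with $l\neq l'$ are mapped to the same state; both are constructed so that they admit $k+1$ distinct completion suffixes leading to words in $\mathcal{P}$ (the $U_{k+1}(a)$ or $U_{k+1}(b)$ tails attached to the $a^{d+1}$ or $b^{d+s+s+T\cdot r}$ segments). Transferring any completion across the collision yields $k+1$ distinct words in $\Lan(\mathcal{A})\setminus \mathcal{P}$, all of length $\leq m$ by Assumption A, contradicting condition 3. The existence of $u_a^\top, u_a^\bot$ with the stated acceptance/rejection pattern follows by a pigeonhole over $U_{k+1}(a)$: fewer than $k+1$ values of $u$ can satisfy $a\cdot a\cdot u\cdot a^{d+1}\in \Lan(\mathcal{A})$ (otherwise $k+1$ errors), so some $u_a^\top$ works, and symmetrically for $u_a^\bot$. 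The $\Delta$-disjointness then combines $\mathsf{Prop}(a,a)$ for the off-diagonal pairs with the acceptance asymmetry $a\cdot a\cdot u_a^\top\cdot a^d\in \Lan(\mathcal{A})\not\ni a\cdot b\cdot u_a^\bot\cdot a^d$, propagated to intermediate positions via continuations.

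For Step 2, I would select each $u_i$ by a TME argument exploiting the $T$ clause-independent indices $\{s+s+i+t\cdot r : t\in\llbracket 0,T-1\rrbracket\}\subseteq \msf{Ind}(x_i)$: if for every $u\in U_M(b)$ the resulting $\Delta$-set collided with the already-fixed states, the $M\cdot T$ pairs $(u,t)$ would produce $M\cdot T$ distinct accepted words outside $\mathcal{P}$, contradicting Assumption B. The cardinality $|V| = (2+r)d$ then aggregates $\mathsf{Prop}(a,a)$, $\mathsf{Prop}(b,b)$, $\mathsf{Prop}(a,b)$, the Step 1 $\Delta$-disjointness, and the pairwise disjointness across the $r$ variable contributions.

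Step 3 is the TMS heart of the argument. Since $|Q|\leq n<(2+r)d + d = |V|+d$, the complement $Q\setminus V$ has strictly fewer than $d$ elements. Reading $a\cdot a^d$ from $\delta^*(\qinit,b\cdot a\cdot u_i\cdot b^i\cdot a\cdot b^d)$ traverses $d+1$ consecutive states, so at least one must lie in $V$; a secondary TME argument using the $\mathcal{P}_\top/\mathcal{P}_\bot$ completions (the $k+1$ tails $U_{k+1}(a)$) forces that shared state, hence also $\delta^*(t_{x_i},a^d)$, to lie in $\{t_\top,t_\bot\}$ rather than at any other $V$-vertex, since the latter admit incompatible acceptance profiles. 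The parallel pigeonhole on clause prefixes with $b$-reads then identifies an $i_j\in\Intr$ with $\delta^*(\qinit,b\cdot b\cdot a^j\cdot b\cdot b^d)=t_{x_{i_j}}$. Step 4 closes the loop: if $x_{i_j}\notin C_j$ for some $j$, the forbidden clause suffixes (those indexed by $\mathsf{NotApp}(x_{i_j})$ together with the associated $U_{k+1}(b)$-tails) would produce extra errors that, summed across clauses and added to the unavoidable contributions already counted in Lemma~\ref{lem:valuation_implies_automaton}, exceed the budget $k$ permitted by Assumption D. Hence $x_{i_j}\in C_j$, and the polarity matching of $\nu(x_{i_j})$ with $C_j$ is forced by the structure of $\mathcal{P}^{\mathsf{acc}}_{\msf{Cl}}$, so the valuation $\nu$ defined by $\delta^*(t_x,a^d)=t_{\nu(x)}$ satisfies $(X,\mathcal{C})$. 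I expect Step 3 to be the main obstacle, as it requires carefully combining the pigeonhole state-count with the acceptance/non-acceptance discipline of Step 1 to exclude every $V$-landing other than $\{t_\top,t_\bot\}$; a secondary delicate point is the tight error-budget bookkeeping in Step 4, where one must separate errors intrinsic to the construction from those caused by a hypothetical $x_{i_j}\notin C_j$.
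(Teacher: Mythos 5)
Your proposal follows the paper's own route: Steps 1 and 2 are the same TME arguments (transferring the $U_{k+1}$-tails across a hypothesised collision, pigeonholing over $U_{k+1}(a)$ to get $u_a^\top,u_a^\bot$, and the $M\cdot T>k$ count from Assumption B to pick the $u_i$), and Step 3 is the same TMS pigeonhole against $V$. Two local imprecisions in your Step 3: the claim that one of the traversed states must lie in $V$ needs the \emph{distinctness} of the $d$ states $\delta^*(\qinit,\ldots b^d\cdot a^l)$, $l\in\Inter{d}$, which is exactly $\msf{Prop}(a,a)$ (mere "consecutiveness" does not suffice, the path could loop inside $Q\setminus V$); and the shared state is not itself in $\{t_\top,t_\bot\}$ --- by $\msf{Prop}(a,b)$ and $\msf{Prop}(a,a)$ it coincides with $\delta^*(\qinit,v\cdot a^l)$ for $v\in\{a\cdot a\cdot u_a^\top,\,a\cdot b\cdot u_a^\bot\}$ at the \emph{same} depth $l$, and only after appending $a^{d-l}$ do you land in $\{t_\top,t_\bot\}$.

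The genuine gap is in Step 4, together with the part of Step 3 it relies on. First, the first bullet of Step 3 must be established for \emph{every} $u\in U_M(b)$, not only for the chosen $u_i$ as in your sketch: it is this generality that charges to $\mathcal{A}$ itself, for each $i$, the $M$ unavoidable errors of the form $b\cdot a\cdot u\cdot b^i\cdot a\cdot b^{d}\cdot a^{d}$ or $b\cdot a\cdot u\cdot b^i\cdot a\cdot b^{d}\cdot a^{d+1}$ (one of the two is accepted but not in $\mathcal{P}$, according to whether the reached state is $t_\top$ or $t_\bot$), i.e.\ the $M\cdot r$ baseline of the error budget. Second, you cannot import this baseline from Lemma~\ref{lem:valuation_implies_automaton}: that lemma counts the errors of the DFA \emph{constructed from a satisfying valuation}, which is a different automaton from the arbitrary suitable $\mathcal{A}$ under analysis, so its error count cannot be "added" to errors of $\mathcal{A}$. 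Without the $M\cdot r$ term re-derived inside $\mathcal{A}$, the chain $M\cdot r+s(T+s-1)\geq k\geq|\Sigma^{\leq m}\cap(\Lan(\mathcal{A})\setminus\mathcal{P})|\geq\sum_{i} M+\sum_{j}(T+s-1)$ does not close: the slack of size $M\cdot r$ absorbs the single extra error produced by a hypothetical $x_{i_j}\notin C_j$, and Assumption D no longer forces $j\in\msf{App}(x_{i_j})$. (A smaller slip: the $U_{k+1}(b)$-tails transferred to the clause prefix are \emph{not} errors, since $b\cdot b\cdot a^j\cdot b\cdot b^{d+s+s+T\cdot r}\cdot u_b\in\mathcal{P}$; the clause-side errors come from the suffixes $b^{j_1}$ with $j_1\neq j$, $b^{s+j_2}$, and $b^{s+s+i_j+t\cdot r}$.) Your final polarity argument via $\mathcal{P}^{\msf{acc}}_{\msf{Cl}}$ and the acceptance pattern $t_\top\in F$, $t_\bot\notin F$, $\delta(t_\top,a)\notin F$, $\delta(t_\bot,a)\in F$ matches the paper and is fine.
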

We prove the four steps of this lemma* in four different proofs. 
Recalling the bird's eye view that we have presented in the paper, the proofs of Steps 1 and 2 rely on TME-arguments, the proof of Step 3 relies on TMS-arguments, and the proof of Step 4 relies on a precise TME-argument.

\paragraph{Proof of Step 1 of Lemma*~\ref{lem:central_lemma}, assuming A}
\begin{proof} Let $l,l' \in \Inter{d}$. The cases $\msf{Prop}(a,a)$ and $\msf{Prop}(b,b)$ are almost identical, and are both very close to the case $\msf{Prop}(a,b)$.
	\begin{itemize}
		\item $\msf{Prop}(a,a)$: Consider some $v_a,v_a' \in \msf{Cont}_{k+1}(a)$. Assume towards a contradiction that $j := l' - l > 0$ and $\delta^*(\qinit,v_a \cdot a^l) = \delta^*(\qinit,v_a' \cdot a^{l'})$. Then, for all $u_a \in U_{k+1}(a)$, we have $\delta^*(\qinit,v_a \cdot a^{d+1-j} \cdot u_a) = \delta^*(\qinit,v_a' \cdot a^{d+1} \cdot u_a) \in F$ since $v_a' \cdot a^{d+1} \cdot u_a \in \mathcal{P} \subseteq \Lan(\mathcal{A})$. In addition, $|v_a \cdot a^{d+1-j} \cdot u_a| \leq |v_a' \cdot a^{d+1} \cdot u_a| \leq m$. Thus: $v_a \cdot a^{d+1-j} \cdot u_a \in \Sigma^{\leq m} \cap (\Lan(\mathcal{A}) \setminus \mathcal{P})$. Hence the contradiction since this holds for all $u_a \in U_{k+1}(a)$ and $|U_{k+1}(a)| = k+1$. 
		\item $\msf{Prop}(b,b)$: Consider some $v_b,v_b' \in \msf{Cont}_{k+1}(b)$. Assume towards a contradiction that $j := l' - l > 0$ and $\delta^*(\qinit,v_b \cdot b^l) = \delta^*(\qinit,v_b' \cdot b^{l'})$. Then, for all $u_b \in U_{k+1}(b)$, we have $\delta^*(\qinit,v_b \cdot b^{d-j+s+s+T\cdot r} \cdot u_b) = \delta^*(\qinit,v_b' \cdot b^{d+s+s+T\cdot r} \cdot u_b) \in F$ since $v_b \cdot b^{d-j+s+s+T\cdot r} \cdot u_b \in \mathcal{P} \subseteq \Lan(\mathcal{A})$. In addition, $|v_b \cdot b^{d-j+s+s+T\cdot r} \cdot u_b| \leq |v_b' \cdot b^{d+s+s+T\cdot r} \cdot u_b| \leq m$. Thus: $v_b \cdot b^{d-j+s+s+T\cdot r} \cdot u_b \in \Sigma^{\leq m} \cap (\Lan(\mathcal{A}) \setminus \mathcal{P})$. Hence the contradiction since this holds for all $u_b \in U_{k+1}(b)$ and $|U_{k+1}(b)| = k+1$. 
		\item $\msf{Prop}(a,b)$: Let $v_a \in \msf{Cont}_{k+1}(a)$ and $v_b \in \msf{Cont}_{k+1}(b)$. Assume towards a contradiction that: $\delta^*(\qinit,v_a \cdot a^l) = \delta^*(\qinit,v_b \cdot b^{l'})$. 	Then, for all $u_b \in U_{k+1}(b)$, we have: $\delta^*(\qinit,v_a \cdot a^l \cdot b^{d-l'+s+s+T \cdot r} \cdot u_b) = \delta^*(\qinit,v_b \cdot b^{d+s+s+T \cdot r} \cdot u_b) \in F$. Furthermore, we have $v_a \cdot a^l \cdot b^{d-l'+s+s+r \cdot T} \cdot u_b \notin \mathcal{P}$, since $l \geq 1$, and $|v_a \cdot a^l \cdot b^{d-l'+s+s+r \cdot T} \cdot u_b| \leq |v_b \cdot b^{d+s+s+T \cdot r} \cdot u_b| + |v_a| - |v_b| + l \leq m$. Thus: $v_a \cdot a^l \cdot b^{d-l'+s+s+r \cdot T} \cdot u_b \in \Sigma^{\leq m} \cap (\Lan(\mathcal{A}) \setminus \mathcal{P})$. Hence, the contradiction since this holds for all $u_b \in U_{k+1}(b)|$ and $|U_{k+1}(b)| = k+1$. 
	\end{itemize}
	
	Furthermore, for all $u_a \in U_{k+1}(a)$, we have $a \cdot a \cdot u_a \cdot a^{d+1},a \cdot b \cdot u_a \cdot a^{d} \in \Sigma^{\leq m} \setminus \mathcal{P}$. Hence, since $|U_{k+1}(a)| = k+1$, it follows that there is some $u_a^\top,u_a^\bot \in U_{k+1}(a)$ such that $a \cdot a \cdot u_a^\top \cdot a^{d+1},a \cdot b \cdot u_a^\bot \cdot a^{d} \notin \Lan(\mathcal{A})$. They are such that $a \cdot a \cdot u_a^\top \cdot a^d,a \cdot b \cdot u_a^\bot \cdot a^{d+1} \in \mathcal{P} \subseteq \Lan(\mathcal{A})$.
\end{proof}

\paragraph{Proof of Step 2 of Lemma*~\ref{lem:central_lemma}, assuming A, B}
\begin{proof}
	Let $i \in \Inter{r}$. Assume towards a contradiction that, for all $u \in U_M(b)$, there is some $v_u \in U_M(b)$ and $i_u \neq i \in \Inter{r}$ such that $\delta^*(\qinit,b \cdot a \cdot u \cdot b^i \cdot a \cdot b^d) = \delta^*(\qinit,b \cdot a \cdot v_u \cdot b^{i_u} \cdot a \cdot b^d)$. Consider some $t \in \llbracket 0,T-1 \rrbracket$ and $u \in U_M(b)$. Since:
	\begin{equation*}
		b \cdot a \cdot u \cdot b^i \cdot a \cdot b^{d+s+s+i+t \cdot r} \in \mathcal{P} \subseteq \Lan(\mathcal{A})
	\end{equation*}
	and $|b \cdot a \cdot v_u \cdot b^{i_u} \cdot a \cdot b^{d+s+s+i+t \cdot r}| \leq |b \cdot a \cdot u \cdot b^i \cdot a \cdot b^{d+s+s+i+t \cdot r}| + |v_u| + i_u \leq m$, we have:
	\begin{equation*}
		b \cdot a \cdot v_u \cdot b^{i_u} \cdot a \cdot b^{d+s+s+i+t \cdot r} \in \Sigma^{\leq m} \cap (\Lan(\mathcal{A}) \setminus \mathcal{P})
	\end{equation*}
	Hence: $\{ b \cdot a \cdot v_u \cdot b^{i_u} \cdot a \cdot b^{d+s+s+i+t \cdot r} \mid u \in U_M(b), t \in \llbracket 0,T-1 \rrbracket \} \subseteq (\Sigma^{\leq m} \cap \Lan(\mathcal{A}) \setminus \mathcal{P})$. Since $|U_M(b)| \cdot |\llbracket 0,T-1 \rrbracket| = M \cdot T > k$, this is a contradiction. Therefore, there is some $u_i \in U_M(b)$, such that, for all $i' \neq i \in \Inter{r}$ and $u \in U_M(b)$, we have $\delta^*(\qinit,b \cdot a \cdot u_i \cdot b^i \cdot a \cdot b^d) \neq \delta^*(\qinit,b \cdot a \cdot u \cdot b^{i'} \cdot a \cdot b^d)$.
	
	Then, by definition, we have:
	\begin{equation*}
		\bigcup_{i \in \Inter{r}} \Delta(b \cdot a \cdot u_i \cdot b^i \cdot a,b^d) = \{ \delta^*(\qinit,b \cdot a \cdot u_i \cdot b^i \cdot a \cdot b^l) \mid i \in \Inter{r},\; l \in \Inter{d} \}
	\end{equation*}
	
	Consider any $i,i' \in \Inter{r}, l,l' \in \Inter{d}$. If $l \neq l'$, then we have $\delta^*(\qinit,b \cdot a \cdot u_i \cdot b^i \cdot a \cdot b^l) \neq \delta^*(\qinit,b \cdot a \cdot u_{i'} \cdot b^{i'} \cdot a \cdot b^{l'})$ by $\msf{Prop}(b,b)$. Furthermore, if $l = l'$ and $\delta^*(\qinit,b \cdot a \cdot u_i \cdot b^i \cdot a \cdot b^l) = \delta^*(\qinit,b \cdot a \cdot u_{i'} \cdot b^{i'} \cdot a \cdot b^{l'})$, then it follows that $\delta^*(\qinit,b \cdot a \cdot u_i \cdot b^i \cdot a \cdot b^d) = \delta^*(\qinit,b \cdot a \cdot u_{i'} \cdot b^{i'} \cdot a \cdot b^{d})$, and thus $i = i'$, by choice of $u_i,u_{i'}$. Overall, we have that if $\delta^*(\qinit,b \cdot a \cdot u_i \cdot b^i \cdot a \cdot b^l) = \delta^*(\qinit,b \cdot a \cdot u_{i'} \cdot b^{i'} \cdot a \cdot b^{l'})$, then $i = i'$ and $l = l'$. Therefore, the set $\bigcup_{i \in \Inter{r}} \Delta(b \cdot a \cdot u_i \cdot b^i \cdot a,b^d)$ is of cardinality $r \cdot d$. 
	
	Then, the set $V$ is by definition equal to:
	\begin{equation*}
		V = \Delta(a \cdot a \cdot u_a^\top,a^d) \cup \Delta(a \cdot b \cdot u_a^\bot,a^d) \cup  \bigcup_{i \in \Inter{r}} \Delta(b \cdot a \cdot u_i \cdot b^i \cdot a,b^d) \subseteq Q
	\end{equation*}
	By $\msf{Prop}(a,a)$, for all $v,v' \in \{a\cdot a \cdot u_a^\top,a\cdot b\cdot u_a^\bot\}$ and $l,l' \in \Inter{d}$, we have $\delta^*(\qinit,v \cdot a^l) = \delta^*(\qinit,v' \cdot a^{l'})$ implies $l = l'$, which implies $\delta^*(\qinit,v \cdot a^d) = \delta^*(\qinit,v' \cdot a^d)$. Since, by choice of $u_a^\top,u_a^\bot$, we have $a\cdot a \cdot u_a^\top \cdot a^d \in \Lan(\mathcal{A})$ and $a\cdot b \cdot u_a^\bot \cdot a^d \notin \Lan(\mathcal{A})$, it follows that $\delta^*(\qinit,v \cdot a^l) = \delta^*(\qinit,v' \cdot a^{l'})$ implies $l = l'$ and $v = v'$. Therefore, the set $\Delta(a \cdot a \cdot u_a^\top,a^d) \cup \Delta(a \cdot b \cdot u_a^\bot,a^d)$ is of cardinality $2 \cdot d$. Furthermore, by $\msf{Prop}(a,b)$, we have:
	\begin{equation*}
		\left(\Delta(a \cdot a \cdot u_a^\top,a^d) \cup \Delta(a \cdot b \cdot u_a^\bot,a^d)\right) \cap \left(\bigcup_{i \in \Inter{r}} \Delta(b \cdot a \cdot u_i \cdot b^i \cdot a,b^d)\right) = \emptyset
	\end{equation*}
	Therefore, the cardinal of the set $V$ is equal to $(2 + r) \cdot d$.
\end{proof}

\paragraph{Proof of Step 3 of Lemma*~\ref{lem:central_lemma}, assuming A, B, C}
\begin{proof} 
	The proofs of both items are very similar, and rather straightforward with the help of $\msf{Prop}(a,a)$, $\msf{Prop}(b,b)$, and $\msf{Prop}(a,b)$.
	
	Consider first some i $\in \Inter{r}$ and $u \in U_M(b)$. Let $W := \Delta(b \cdot a \cdot u \cdot b^i \cdot a \cdot b^d,a^d) \subseteq Q$. By $\msf{Prop}(a,a)$, $W$ is of cardinal $d$. Thus, we have $|Q| \leq n < (2+r) \cdot d + d = |V| + |W|$. Hence, it must be that $V \cap W \neq \emptyset$. That is, there exist $(v,c) \in \{ (a \cdot a \cdot u_a^\top,a),(a \cdot b \cdot u_a^\bot,a),(b \cdot a \cdot u_{i'} \cdot b^{i'} \cdot a,b) \mid i' \in \Inter{r}\}$ and $l,l' \in \Inter{d}$ such that $\delta^*(\qinit,v \cdot c^l) = \delta^*(\qinit,b \cdot a \cdot u \cdot b^i \cdot a \cdot b^d \cdot a^{l'})$. By $\msf{Prop}(a,b)$, it must be that $v \in \{ a \cdot a \cdot u_a^\top,a \cdot b \cdot u_a^\bot \}$. By $\msf{Prop}(a,a)$, it must be that $l = l'$. Hence, we have:
	\begin{equation*}
		\delta^*(\qinit,b \cdot a \cdot u \cdot b^i \cdot a \cdot b^d \cdot a^{l}) \in \{\delta^*(\qinit,a \cdot a \cdot u_a^\top \cdot a^l),\delta^*(\qinit,a \cdot b \cdot u_a^\bot \cdot a^l)\}
	\end{equation*}
	Thus, we have:
	\begin{equation*}
		\delta^*(\qinit,b \cdot a \cdot u \cdot b^i \cdot a \cdot b^d \cdot a^{d}) \in \{\delta^*(\qinit,a \cdot a \cdot u_a^\top \cdot a^d),\delta^*(\qinit,a \cdot b \cdot u_a^\bot \cdot a^d)\} = \{ t_\top,t_\bot\}
	\end{equation*}
	
	Now, consider some $j \in \Ints$ and let $W := \Delta(b \cdot b \cdot a^j \cdot b,b^d) \subseteq Q$. By $\msf{Prop}(b,b)$, $W$ is of cardinal $d$. Thus, we have $|Q| \leq n < (2+r) \cdot d + d = |V| + |W|$. Hence, it must be that $V \cap W \neq \emptyset$. That is, there exist $(v,c) \in \{ (a \cdot a \cdot u_a^\top,a),(a \cdot b \cdot u_a^\bot,a),(b \cdot a \cdot u_{i} \cdot b^{i} \cdot a,b) \mid i \in \Inter{r}\}$ and $l,l' \in \Inter{d}$ such that $\delta^*(\qinit,v \cdot c^l) = \delta^*(\qinit,b \cdot b \cdot a^j \cdot b \cdot b^{l'})$. By $\msf{Prop}(a,b)$, it must be that $v \in \{ b \cdot a \cdot u_{i} \cdot b^{i} \cdot a \mid i \in \Inter{r} \}$. By $\msf{Prop}(b,b)$, it must be that $l = l'$. It follows that there is some $i_j \in \Inter{r}$ such that:
	\begin{equation*}
		\delta^*(\qinit,b \cdot a \cdot u_{i_j} \cdot b^{i_j} \cdot a \cdot b^l) = \delta^*(\qinit,b \cdot b \cdot a^j \cdot b \cdot b^{l})
	\end{equation*}
	Thus:
	\begin{equation*}
		\delta^*(\qinit,b \cdot a \cdot u_{i_j} \cdot b^{i_j} \cdot a \cdot b^d) = \delta^*(\qinit,b \cdot b \cdot a^j \cdot b \cdot b^{d})
	\end{equation*}
	Since $t_{x_{i_j}} = \delta^*(\qinit,b \cdot a \cdot u_{i_j} \cdot b^{i_j} \cdot a \cdot b^d)$, we obtain $	t_{x_{i_j}} = \delta^*(\qinit,b \cdot b \cdot a^j \cdot b \cdot b^{d})$.
\end{proof}

\paragraph{Proof of Step 4 of Lemma*~\ref{lem:central_lemma}, assuming A, B, C, D}
\begin{proof}
	First of all, note that we have:
	\begin{align*}
		t_\top,\delta(t_\bot,a) \in F & \text{ since } a \cdot a \cdot u_a^\top \cdot a^d,a \cdot b \cdot u_a^\bot \cdot a^{d+1} \in \Lan(\mathcal{A}) \\
		t_\bot,\delta(t_\top,a) \notin F & \text{ since } a \cdot b \cdot u_a^\bot \cdot a^d,a \cdot a \cdot u_a^\top \cdot a^{d+1} \notin \Lan(\mathcal{A})
	\end{align*}
	Therefore, we have $t_\top \neq t_\bot$. 
	
	Now, let $i \in \Inter{r}$ and $u \in U_M(b)$. We have shown that:
	\begin{equation*}
		\delta^*(\qinit,b \cdot a \cdot u \cdot b^i \cdot a \cdot b^d \cdot a^d) \in \{t_\top,t_\bot \}
	\end{equation*}
	We let $\msf{U}(i,\top) := \{ u \in U_M(b) \mid \delta^*(\qinit,b \cdot a \cdot u \cdot b^i \cdot a \cdot b^d \cdot a^d) = t_\top\}$ and $\msf{U}(i,\bot) := \{ u \in U_M(b) \mid \delta^*(\qinit,b \cdot a \cdot u \cdot b^i \cdot a \cdot b^d \cdot a^d) = t_\bot\} = U_M(b) \setminus \msf{U}(i,\top)$. Then, 
	we let:
	\begin{align*}
		\msf{Err}^i_{\top,\bot} := \{ & \; b \cdot a \cdot u_\top \cdot b^{i} \cdot a \cdot b^d \cdot a^d,b \cdot a \cdot u_\bot \cdot b^{i} \cdot a \cdot b^d \cdot a^{d+1} \mid \; u_\top \in \msf{U}(i,\top), u_\bot \in \msf{U}(i,\bot) \}
	\end{align*}
	and we have:
	\begin{equation*}
		\msf{Err}^i_{\top,\bot} \subseteq \Sigma^{\leq m} \cap (\Lan(\mathcal{A}) \setminus \mathcal{P}) \text{ with }|\msf{Err}^i_{\top,\bot}| = M
	\end{equation*}
	
	Consider now any $j \in \Inter{s}$. We have shown that:
	\begin{equation*}
		\delta^*(\qinit,b \cdot b \cdot a^j \cdot b \cdot b^{d}) = \delta^*(\qinit,b \cdot a \cdot u_{i_j} \cdot b^{i_j} \cdot a \cdot b^d) = t_{x_{i_j}}
	\end{equation*}
	Furthermore, we have that for all $j' \in \mathsf{App}(x_{i_j})$: 
	\begin{equation*}
		b \cdot a \cdot u_{i_j} \cdot b^{i_j} \cdot a \cdot b^{d+j'} \in \mathcal{P}
	\end{equation*}
	Thus, $b \cdot b \cdot a^j \cdot b \cdot b^{d+j'} \in \Lan(\mathcal{A})$ with $b \cdot b \cdot a^j \cdot b \cdot b^{d+j'} \in \mathcal{P}$ if and only if $j' = j$. Similarly, we have that for all $j' \in \mathsf{NotApp}(x_{i_j})$: 
	\begin{equation*}
		b \cdot a \cdot u_{i_j} \cdot b^{i_j} \cdot a \cdot b^{d+s+j'} \in \mathcal{P} \text{ thus }b \cdot b \cdot a^j \cdot b \cdot b^{d+s+j'} \in \Lan(\mathcal{A}) \setminus \mathcal{P}
	\end{equation*}
	In addition, for all $t \in \llbracket 0,T-1 \rrbracket$, we have:
	\begin{equation*}
		b \cdot a \cdot u_{i_j} \cdot b^{i_j} \cdot a \cdot b^{d+s+s+i_j+t \cdot r} \in \mathcal{P} \text{ thus }b \cdot b \cdot a^j \cdot b \cdot b^{d+s+s+i_j+t \cdot r} \in \Lan(\mathcal{A}) \setminus \mathcal{P}
	\end{equation*}
	Therefore, we let: 
	\begin{align*}
		\msf{ErrApp}^j := b \cdot b \cdot a^j \cdot b \cdot b^d \cdot \{ & \; b^{j_1}, b^{s+j_2}, b^{s+s+i_j+t \cdot r} \\
		\mid & \; j_1 \in \mathsf{App}(x_{i_j}) \setminus \{j\}, j_2 \in \mathsf{NotApp}(x_{i_j}), t \in \llbracket 0,T-1 \rrbracket \}
	\end{align*}
	Since $\mathsf{App}(x_{i_j}) \cup \mathsf{NotApp}(x_{i_j}) = \Inter{s}$, we have:
	\begin{equation*}
		\msf{ErrApp}^j \subseteq \Sigma^{\leq m} \cap (\Lan(\mathcal{A}) \setminus \mathcal{P}) \text{ with }|\msf{ErrApp}^j| \geq T + s-1
	\end{equation*}
	and the above inequality is an equality if and only if $j \in \msf{App}(x_{i_j})$. Overall, we have:
	\begin{equation*}
		\left(\bigcup_{i \in \Inter{r}} \msf{Err}^i_{\top,\bot} \cup \bigcup_{j \in \Inter{s}} \msf{ErrApp}^j\right) \subseteq \Sigma^{\leq m} \cap (\Lan(\mathcal{A}) \setminus \mathcal{P})
	\end{equation*}
	Hence, we have:
	\begin{align*}
		M \cdot r + s \cdot (T + s-1) & \geq  |\Sigma^{\leq m} \cap (\Lan(\mathcal{A}) \setminus \mathcal{P})| \\
		& \geq |\bigcup_{i \in \Inter{r}} \msf{Err}^i_{\top,\bot} \cup \bigcup_{j \in \Inter{s}} \msf{ErrApp}^j | = \sum_{i \in \Inter{r}} |\msf{Err}^i_{\top,\bot}| + \sum_{j \in \Inter{s}} |\msf{ErrApp}^j| \\
		& \geq \sum_{i \in \Inter{r}} M + \sum_{j \in \Inter{s}} (T + s -1) = M \cdot r + s \cdot (T + s -1)
	\end{align*}
	Therefore, all of the above inequalities are in fact equalities. Hence, for all $j \in \Inter{s}$, we have $|\msf{ErrApp}^j| = T + s -1$, that is $j \in \msf{App}(x_{i_j})$. 
	
	We can now conclude. Consider the valuation $\nu: X \to \{\top,\bot\}$ such that, for all $x \in X$, $\delta^*(t_x,a^d) = t_{\nu(x)}$. Consider any $j \in \Inter{s}$. 
	\begin{itemize}
		\item Assume that $C_j \in \mathcal{C}_+$. We have $b \cdot b \cdot a^j \cdot b \cdot b^d \cdot a^{d} \in \mathcal{P}$. Therefore, $\delta^*(t_{x_{i_j}},a^d) = \delta^*(\qinit,b \cdot b \cdot a^j \cdot b \cdot b^d \cdot a^{d}) \in F$. That is $\delta^*(t_{x_{i_j}},a^d) = t_\top$. Hence, $v(x_{i_j}) = \top$ and $x_{i_j} \in C_j$ since $j \in \msf{App}(x_{i_j})$. 
		\item Assume that $C_j \in \mathcal{C}_-$. We have $b \cdot b \cdot a^j \cdot b \cdot b^d \cdot a^{d+1} \in \mathcal{P}$. Therefore, $\delta^*(t_{x_{i_j}},a^{d+1}) = \delta^*(\qinit,b \cdot b \cdot a^j \cdot b \cdot b^d \cdot a^{d+1}) \in F$. That is $\delta^*(t_{x_{i_j}},a^d) = t_\bot$. Hence, $v(x_{i_j}) = \bot$ and $x_{i_j} \in C_j$ since $j \in \msf{App}(x_{i_j})$. 
	\end{itemize}
	In fact, the valuation $\nu$ does satisfy $(X,\mathcal{C})$.
\end{proof}

The proof of Lemma*~\ref{lem:assumption_DFA_valuation} is now direct.
\begin{proof}
	Assuming that A, B, C, and D hold, by Lemma*~\ref{lem:central_lemma}, we can exhibit a valuation $\nu: X \to \{\top,\bot\}$ satisfying $(X,\mathcal{C})$. Thus, $(X,\mathcal{C})$ is a positive instance of Problem 2.
\end{proof}

We can now proceed to the proof of Lemma*~\ref{lem:NP-hard}.
\begin{proof}
	Consider an instance $(X,\mathcal{C})$ of Problem 2. Let $n,k,m,d,M,T \in \N$ and $\mathcal{P} := \mathcal{P}(X,\mathcal{C},d,k,T,M)$. Assume that we have the following inequalities:
	\begin{enumerate}
		\item $n \geq \omega_1(X,d) + \omega_2(X,\mathcal{C},k,T,M)
		$;
		\item $k \geq M \cdot r + s \cdot (s + T -1)$;
		\item $m \geq 2\max_{u \in \mathcal{P}} |u| + \max(r,d)$;
		\item $k < M \cdot T$;
		\item $
		n < d \cdot (2+r) + d = \omega_1(X,d) + d$;
		\item $k \leq s \cdot (T + s-1) + M \cdot r$.
	\end{enumerate}
	
	Then, by Lemmas*~\ref{lem:valuation_implies_automaton} and~\ref{lem:assumption_DFA_valuation}, $(X,\mathcal{C})$ is a positive instance of Problem 2 if and only if there is a $(\mathcal{P},n,k,m)$-suitable DFA. Let us choose these integers $n,k,m,d,M,T \in \N$ such that all of the above inequalities are met. First, we choose:
	\begin{equation*}
		M := 3 (s + r) \text{ and } T = 2 s + 3 r
	\end{equation*}
	and (due to inequalities 2. and 6.):
	\begin{equation*}
		k = s \cdot (T + s-1) + M \cdot r
	\end{equation*}
	That way, we have $k = s \cdot (3 (s + r) - 1) + 3 (s+r) \cdot r \leq 3(s+r)^2 < M \cdot T$. Thus, inequality 4. is also satisfied. Furthermore, we let:
	\begin{equation*}
		d := \omega_2(X,\mathcal{C},k,T,M) + 1
	\end{equation*}
	and 
	\begin{equation*}
		n := \omega_1(X,d) + \omega_2(X,\mathcal{C},k,T,M)
	\end{equation*}
	With these choices, inequalities 1. and 5. are satisfied. 
	
	Let us consider the quantity $\max_{u \in \mathcal{P}} |u|$. Assuming that $r \geq 2$---which is harmless, up to adding a dummy variable which does not appear in any clause in the input $(X,\mathcal{C})$---one can see that, for any $u \in \mathcal{P}$, the set $\Delta(q_0,\epsilon,u)$ constitutes less than half of all the states in the DFA. (We use the assumption $r \geq 2$ for words starting by $b$.) 
	Therefore, we have $\max_{u \in \Pos} |u| \leq |Q|/2 \leq n/2$. Hence, any $m \geq n + \max(r,d) = n + d$ satisfies inequalities 3. This is in particular the case for $m = 2n-2$, since $n \geq 2d$. 
	
	Therefore, considering $k' := k + |\Sigma^{\leq 2n-2} \cap \mathcal{P}(X,\mathcal{C},d,k,T,M)|$ = $k + |\mathcal{P}(X,\mathcal{C},d,k,T,M)|$, we have that $(X,\mathcal{C})$ is a positive instance of Problem 2 if and only if there is a $(\mathcal{P},n,k,2n-2)$-suitable DFA if and only if $(\Pos,n,k')$ is a positive instance of Problem 2 Therefore, Problem 1 is $\msf{NP}$-hard, even with $k$ written in unary, and $\Sigma$ of cardinality 2.
\end{proof}

\section{Our algorithms}
\label{appen:sec3}
\subsection{Our ILP encoding}
Let us first recall the formal setting of Integer Linear Programming problems.
\begin{mpb}
	\label{ILP}
	Consider as input a set $I$ and
	\begin{itemize}
		\item for all $i \in I$, a lower bound $m_i \in \Z$ and an upper bound $M_i \in \Z$ on the value of the integer variable $x_i$; and
		\item an $I$-indexed integer vector $(c_i)_{i \in I} \in \Z^I$; and
		\item a set $\Omega$ indexing $I$-indexed integer vectors $(p_{\omega,i})_{\omega \in \Omega,i \in I}$ and, for each $\omega \in \Omega$, an integer $p_\omega \in \Z$. 
	\end{itemize}
	The goal is to find, over all $I$-indexed integer vectors $(x_i)_{i \in I} \in \Pi_{i \in I} \TInter{m_i,M_i}$, the minimal value of $\sum_{i \in I} c_i \cdot x_i \in \Z$, subject to the inequalities: for all $\omega \in \Omega$, $\sum_{i \in I} p_{\omega,i} \cdot x_i \geq p_{\omega}$.
	
	The size of the input of this problem is equal to $|I| \cdot |\Omega| \cdot \log N$, where $N$ is the maximum (absolute) value of all the input integers.
\end{mpb}

Our goal is, given an alphabet $\Sigma$, a set of word $\Pos \subseteq \Sigma^*$, and an integer $n$, to define a set $I$, an $I$-indexed integer vector $(c_i)_{i \in I} \in \Z^I$, a set $\Omega$ and, for all $\omega \in \Omega$, an $I$-indexed integer vector $(p_{\omega,i})_{i \in I} \in \Z^I$, and an integer $p_\omega \in \Z$ such that:
\begin{itemize}
	\item[$\A \rightarrow x$:] From all DFAs $\A \in \msf{Rec}(\Pos,n)$, we can derive an integer vector $x \in \Pi_{i \in I} \TInter{m_i,M_i}$ satisfying $\sum_{i \in I} p_{\omega,i} \cdot x_i = p_{\omega}$, for all $\omega \in \Omega$, and such that $\sum_{i \in I} c_i \cdot x_i = |\Lan(\A) \cap \Sigma^{\leq 2n-2}|$.
	\item[$x \rightarrow \A$:] Reciprocally, from all integer vectors $x \in \Pi_{i \in I} \TInter{m_i,M_i}$ satisfying $\sum_{i \in I} p_{\omega,i} \cdot x_i = p_{\omega}$, for all $\omega \in \Omega$, we can (easily) derive a DFA $\A \in \msf{Rec}(\Pos,n)$ such that $\sum_{i \in I} c_i \cdot x_i = |\Lan(\A) \cap \Sigma^{\leq 2n-2}|$.
\end{itemize}

To do so, we encode the prospective DFA with integer variables $(x_i)_{i \in I}$. To this end, we assume that the set of states of the DFA is $Q = \{ q_0, \ldots, q_{n-1} \}$. Then, we use the following binary variables (i.e. integer variables for which the lower bound is 0 and the upper bound is 1) and integer variables. Let us first introduce the binary variables, their informal meaning, and the inequalities that we define to ensure their meaning.
\begin{description}[font=\normalfont\itshape]
	\item[Transition variables.]
	For each potential transition $\delta(p, \alpha) = q$ (for $p,q \in Q$, $\alpha \in \Sigma$), we introduce a binary variable $x_{p, \alpha, q}$. The meaning of $x_{p, \alpha, q} = 1$ is $\delta(p,\alpha) = q$ in the prospective DFA. We let $I_{\msf{Trans}} := Q \times \Sigma \times Q$.
	\item[State variables.]
	For each state $q \in Q$, we introduce a binary variable $x_q$. The meaning of $x_q = 1$ is that $q \in F$. We let $I_{\msf{State}} := Q$.
	\item[Word variables.]
	For each state $q \in Q$, and prefixes $u \in \Sigma^*$ of words in $\Pos$, we introduce a binary variable $x_{u,q}$. The meaning of $x_{u,q} = 1$ is that $\delta^*(q_0,u) = q$. We let $I_{\msf{Word}} := \msf{Pref}(\Pos) \times Q$.
\end{description}

Now, we impose constraints on these variables $(x_i)_{i \in I}$ to enforce their meaning by defining the various (in)equalities, indexed by the set $\Omega$, that must be satisfied. Each one of these (in)equalities, which corresponds to a different element of $\omega \in \Omega$, implicitly defines the integer constant $p_\omega \in \Z$ and integer vector $(p_{\omega,i})_{i \in I} \in \Z^I$.

We first impose that these variables encode a DFA, that is: for all $p \in Q$ and $\alpha \in \Sigma$, there is unique $q \in Q$ such that $\delta(p,\alpha) = q$. This is ensured by the following equalities:
\begin{equation}
	\forall p \in Q,\; \forall \alpha \in \Sigma: \; \sum_{q \in Q} x_{p, \alpha, q} = 1
	\tag{$\omega_{\msf{DFA}}^{p,\alpha}$}
\end{equation}
We let $\Omega_{\msf{DFA}} := \{\omega_{\msf{DFA}}^{p,\alpha} \mid p \in Q,\; \alpha \in \Sigma\}$.

Next, we need to enforce that the prospective DFA does accept all the words in $\Pos$. To this end, we first enforce the meaning of the state variables by defining (in)equalities corresponding to the following facts: $\delta^*(q_0,\epsilon) = q_0$; for all $u \in \msf{Pref}(\Pos)$, there is a unique $q \in Q$ such that $\delta^*(q_0,u) = q$; and for all $u \cdot \alpha \in \msf{Pref}(\Pos)$, $\delta^*(q,u \cdot \alpha) = \delta(\delta^*(q,u),\alpha)$:
\begin{align}
	& \; x_{\epsilon, q_0} = 1 \tag{$\omega_{\msf{run}}^{\msf{init}}$}\\ 
	\forall u \in \msf{Pref}(\Pos): & \; \sum_{q \in Q} x_{u, q} = 1 \tag{$\omega_{\msf{run}}^u$}\\
	\forall u \cdot \alpha \in \msf{Pref}(\Pos),\; \forall p,q \in Q: & \; x_{u, p} + x_{p, \alpha, q} \leq 1 + x_{u \cdot \alpha, q} \tag{$\omega_{\msf{run}}^{u \cdot \alpha,p,q}$}
\end{align}
The last inequality amounts to the logical implication: ($x_{u, p} = 1$ and $x_{p,\alpha,q} = 1$) implies $x_{u \cdot \alpha, q} = 1$. We let $\Omega_{\msf{run}} := \{ \omega_{\msf{run}}^{\msf{init}},\omega_{\msf{run}}^{u},\omega_{\msf{run}}^{v \cdot \alpha,p,q} \mid u,v \cdot \alpha \in \msf{Pref}(\Pos), p,q \in Q \}$.

We can now enforce that the prospective DFA is consistent with $\Pos$, i.e. for all $u \in \Pos$, we have $\delta^*(q,u) \in F$. This is ensured with the following inequalities:
\begin{equation}
	\forall u \in \Pos,\; \forall q \in Q: \; x_{u, q} \leq x_q \tag{$\omega_{\msf{consistent}}^{u,q}$}
\end{equation}
We let $\Omega_{\msf{consistent}} := \{ \omega_{\msf{consistent}}^{u,q} \mid u \in \Pos, q \in Q \}$.

Let us consider counting variables.
\begin{description}[font=\normalfont\itshape]
	\item[Counting variables.]
	For each state $q \in Q$ and $k \in \TInter{1,2n-2}$, we introduce an integer variable $x_{q,k}$ which counts the number of paths from $q_0$ to $q$ of size exactly $k$ (i.e. $\msf{N}(q,k)$). Thus, we set $m_{q,k} := 0$ and $M_{q,k} := |\Sigma|^k$. 
	
	Since we consider only linear (in)equalities, and thus we cannot multiply two variables, we also consider, for each state $p \in Q$, letter $\alpha \in \Sigma$, state $q \in Q$ and $k \in \TInter{0,2n-3}$, the integer variable $x_{p,\alpha,q,k}$, that stands for the product $x_{p,\alpha,q} \cdot x_{p,k}$. We set $m_{p,\alpha,q,k} := 0$ and $M_{p,\alpha,q,k} := |\Sigma|^k$. We let $I_{\msf{Count}} := Q \times \TInter{0,2n-2} \cup Q \times \Sigma \times Q \times \TInter{0,2n-3}$.
	
	Similarly, since we are interested in the accepted words, i.e., those that lead from the initial state to an accepting state, we also introduce, for each state $q \in Q$ and $k \in \TInter{0,2n-2}$, the variable $x_{q,k,F}$ which stands for the product $x_{q,k} \cdot x_q$ (recall that $x_q = 1$ means that $q \in F$). We set $m_{q,k,F} = 0$ and $M_{q,k,F} = |\Sigma|^k$.
	
	Finally, we introduce an integer variable $x_{F}$ counting the number of accepted words of size at most $2n-2$ (i.e. $|\Lan(\A) \cap \Sigma^{\leq 2n-2}|$). Thus, we set $m_{F} = 0$ and $M_{F} = \sum_{k = 0}^{2n-2} |\Sigma|^k$.  We let $I_{\msf{FinalCount}} := \{F\} \cup Q \times \TInter{0,2n-2} \times \{F\}$.
\end{description}
Overall, the set of variable indices $I$ that we consider is equal to $I := I_{\msf{Trans}} \cup I_{\msf{State}} \cup I_{\msf{Word}} \cup I_{\msf{Count}} \cup I_{\msf{FinalCount}}$. 

The quantity to minimize is the number of accepted words of size at most $2n-2$, i.e., given the meaning of the above variables: $x_F$. Thus, we define $c_F := 1$ and for all $i \in I \setminus \{F\}$, $c_i := 0$.

Let us now enforce the meaning of the counting variables:
\begin{align}
	& \; x_{q_0,0} = 1 \tag{$\omega_{\msf{count}}^{\msf{init},q_0}$}\\ 
	\forall q \in Q \setminus \{q_0\},\; : & \; x_{q,0} = 0 \tag{$\omega_{\msf{count}}^{\msf{init},q}$}\\ 
	\forall (p,\alpha,q) \in Q \times \Sigma \times Q,\; \forall k \in \TInter{0,2n-3},\; : & \;  x_{p,k} \leq x_{p,\alpha,q,k} + (1- x_{p,\alpha,q}) \cdot M \tag{$\omega_{\msf{count},1}^{\msf{step},p,\alpha,q,k}$}\\
	\forall (p,\alpha,q) \in Q \times \Sigma \times Q,\; \forall k \in \TInter{0,2n-3},\; : & \;  x_{p,\alpha,q,k} \leq x_{p,\alpha,q} \cdot M \tag{$\omega_{\msf{count},2}^{\msf{step},p,\alpha,q,k}$}\\
	\forall q \in Q \times \Sigma \times Q,\; \forall k \in \TInter{1,2n-2},\; : & \;  x_{q,k} = \sum_{p \in Q} \sum_{\alpha \in \Sigma} x_{p,\alpha,q,k-1} \tag{$\omega_{\msf{count}}^{\msf{step},q,k}$} \\
	\forall q \in Q,\; \forall k \in \TInter{0,2n-2},\; : & \; x_{q,k} \leq x_{q,k,F} + (1- x_{q}) \cdot M \tag{$\omega_{\msf{count},1}^{\msf{final},q,k}$} \\
	\forall q \in Q,\; \forall k \in \TInter{0,2n-2},\; : & \; x_{q,k,F} \leq  x_{q} \cdot M \tag{$\omega_{\msf{count},2}^{\msf{final},q,k}$}\\ 
	& \; x_{F} = \sum_{q \in Q} \sum_{k \in \TInter{0,2n-2}} x_{q,k,F} \tag{$\omega_{\msf{count}}^{\msf{final}}$}
\end{align}
The constraints $\omega_{\msf{count},1}^{\msf{step},p,\alpha,q,k},\omega_{\msf{count},2}^{\msf{step},p,\alpha,q,k},\omega_{\msf{count},1}^{\msf{final},q,k},\omega_{\msf{count},2}^{\msf{final},q,k}$ use the classical big-$M$ method to encode a product (between a binary and an integer variable) in ILP, which works as soon as $M$ is bigger than the value of any of the variables, e.g. $M = M_F+1$ works. Furthermore, the constraints $\omega_{\msf{count}}^{\msf{init},q},\omega_{\msf{count}}^{\msf{step},q,k}$ exactly follow the iterative computation described in Lemma 1, ensuring that we do have $x_{q,k} = \msf{N}(q,k)$.

We let $\Omega_{\msf{count}} := \{ \omega_{\msf{count}}^{\msf{init},q},\omega_{\msf{count},1}^{\msf{step},p,\alpha,q,k_1},\omega_{\msf{count},2}^{\msf{step},p,\alpha,q,k_1},\omega_{\msf{count}}^{\msf{step},q,k_2},\omega_{\msf{count},1}^{\msf{final},q,k_3},\omega_{\msf{count},2}^{\msf{final},q,k_3},\omega_{\msf{count}}^{\msf{final}} \mid p,q \in Q,\; \alpha \in \Sigma,\; k_1 \in \TInter{0,2n-3},\; k_2 \in \TInter{1,2n-2},\; k_3 \in \TInter{0,2n-2} \}$.

The set $\Omega$ that we consider is then constituted of all the (in)equality indexes defined above:
\begin{equation*}
	\Omega := \Omega_{\msf{DFA}} \cup \Omega_{\msf{run}} \cup \Omega_{\msf{consistent}} \cup \Omega_{\msf{count}}
\end{equation*}

With these constraints, the informal meanings of all these integers variables are ensured, and thus conditions $\A \rightarrow x$ and $x \rightarrow \A$ are met. 

Let us count the number of variables and (in)equalities that we have defined, letting $p := |\msf{Pref}(\Pos)|$ and $s := |\Sigma|$:
\begin{align*}
	|I| & \leq \underbrace{n^2 \cdot s}_{=|I_{\msf{Trans}}|} + \underbrace{n}_{=|I_{\msf{State}}|} +\underbrace{p \cdot n}_{=|I_{\msf{Word}}|} + \underbrace{n \cdot 2n + n^2 \cdot s \cdot 2n}_{\leq|I_{\msf{Count}}|} + \underbrace{1 + n \cdot 2n}_{=|I_{\msf{FinalCount}}|}\\
	& = O(n^3 \cdot s + p \cdot n)
\end{align*}
and 
\begin{align*}
	|\Omega| & \leq \underbrace{n \cdot s}_{=|\Omega_{\msf{DFA}}|} + \underbrace{1 + p + p \cdot n^2}_{\leq|\Omega_{\msf{run}}|} +\underbrace{p \cdot n}_{=|\Omega_{\msf{consistent}}|} + \underbrace{n + 2 \cdot n^2 \cdot s \cdot 2n + n \cdot 2n + 2 \cdot n \cdot 2n + 1}_{\leq|\Omega_{\msf{count}}|} \\
	& = O(n^3 \cdot s + n^2 \cdot p)
\end{align*}

Furthermore, the larger integer defined is $M_F+1$, with $\log(M_F) \leq \log(2n) + k \cdot \log(|\Sigma|)$. Therefore, the size of the input $((c_i)_{i \in I},(p_{\omega,i})_{\omega \in \Omega, i \in I})$ is polynomial in the size of $(\Pos,n)$.  

\subsection{Our heuristic algorithm}
In this subsection, we formally describe our heuristic algorithm. Let us first formally introduce the notion of transition system in the definition* below.
\begin{definition}
	\label{def:TS}
	A \emph{transition system} $T$ is a tuple $T = (Q,\Sigma,\delta)$ where $Q$ is a non-empty set of states, $\Sigma$ is an alphabet, and $\delta\colon Q \times \Sigma \to Q$ is the \emph{transition function}.
	
	Consider any transition system $T = (Q,\Sigma,\delta)$. For all $\mathcal{P} \subseteq \Sigma^*$ and $q \in Q$, we consider the DFA $\mathcal{A}_{T,q,\mathcal{P}} := (Q,\Sigma,\delta,q,\{ \delta^*(q,u) \mid u \in \mathcal{P} \})$ and, for all $n \geq 1$, we let:
	\begin{equation*}
		\msf{score}(T,\mathcal{P},n) := \max_{q \in Q} |\Lan(\mathcal{A}_{T,q,\mathcal{P}}) \cap \Sigma^{\leq 2n-2}|
	\end{equation*} 
	
	Furthermore, for all $(q,\alpha,q') \in Q \times \Sigma \times Q$, we let $T[q,\alpha \to q']$ denote the transition system $T[q,\alpha \to q'] = (Q,\Sigma,\delta')$, with $\delta'\colon Q \times \Sigma \to Q$ is such that, for all $(\tilde{q},\beta) \in Q \times \Sigma$, we have:
	\begin{align*}
		\delta'(\tilde{q},\beta) := \begin{cases}
			q' & \text{ if }(\tilde{q},\beta) = (q,\alpha) \\
			\delta(q,\alpha) & \text{ otherwise }
		\end{cases}
	\end{align*}		
\end{definition}

Our heuristic algorithm is formally described as Algorithm~\ref{algo:heuristic}, where $Q_n$ refers to any set of cardinal $n$. Let us explain step by step this algorithm:
\begin{itemize}
	\item First, in Line 1, we initialize the best total score. We will attempt to improve it with $\msf{NbRun}$ runs, as can be witnessed by the \textquotedblleft{}For\textquotedblright{} loop from Line 2 to Line 24.
	\item Then, in Lines 3-12, we randomly pick a transition system with the best score (as given in Definition*~\ref{def:TS}) over $\msf{InitRand}$ attempts. The $\msf{Rand}$ function used in Line 6 picks uniformly at random in the set $Q_n$ an $\alpha$-successor to the state $q$.
	\item Following, in Lines 13-22, we iteratively improve the score of the current transition system by changing a single transition at a time. With the \textquotedblleft{}For\textquotedblright{} loop of Line 17, we range over all possible transition changes (one at a time), and we pick the transition change that most improves the score. If no change improves the score, the Boolean variable $\msf{HasImproved}$ stays False, and we exit the \textquotedblleft{}While\textquotedblright{} loop.
	\item In Lines 23-24, we update, if relevant, the best score over all the transition systems computed so far on all runs.
	\item Finally, in Lines 25-27, once we have chosen the transition system with the best score over $\msf{NbRun}$ runs, we extract a DFA from that transition system whose number of accepted words (of length at most $2n-2$) is equal to the score of the transition system.
\end{itemize}
Note that the scores in Lines 8 and 18 are computed via matrix multiplication (with a fast exponentiation method).
\begin{algorithm}[t]
	\caption{$\msf{MinScore}(\mathcal{P},\Sigma,n)$: Computes a DFA in $\msf{Rec}(\mathcal{P},n)$ with a score a small as possible.}
	\label{algo:heuristic}
	\textbf{Input}: An alphabet $\Sigma$, a positive set of words $\mathcal{P} \subseteq \Sigma^*$, a bound $n \in \N$ on the number of states
	\begin{algorithmic}[1]
		\State $\msf{TotalBestScore} = \infty$
		\For{$i \in \llbracket 1,\msf{NbRun} \rrbracket$}
		\State $\msf{BestScore} = \infty$
		\For{$j \in \llbracket 1,\msf{InitRand} \rrbracket$}
		\For{$q \in Q_n$, $\alpha \in \Sigma$}
		\State $\delta(q,\alpha) \gets \msf{Rand}(Q_n)$
		\EndFor
		\State $T \gets (Q_n,\Sigma,\delta)$
		\State $s \gets \msf{Score}(T,\mathcal{P},n)$
		\If{$s < \msf{BestScore}$}
		\State $\msf{BestScore} \gets s$, $\msf{BestTS} \gets T$
		\EndIf
		\EndFor
		\State $\msf{CurrentScore} \gets \msf{BestScore}$
		\State $\msf{CurrentTS} \gets \msf{BestTS}$
		\State $\msf{HasImproved} \gets \msf{True}$
		\While{$\msf{HasImproved}$}
		\State $\msf{HasImproved} \gets \msf{False}$
		\For{$(q,\alpha,q') \in Q_n \times \Sigma \times Q_n$}
		\State $T \gets \msf{CurrentTS}[q,\alpha \to q']$
		\State $s \gets \msf{Score}(T,\mathcal{P},n)$
		\If{$s < \msf{CurrentScore}$}
		\State $\msf{CurrentScore} \gets s$, $\msf{CandidateTS} \gets T$, $\msf{HasImproved} \gets \msf{True}$
		\EndIf
		\EndFor
		\If{$\msf{HasImproved}$}
		\State $\msf{CurrentTS} \gets \msf{CandidateTS}$ 
		\EndIf
		\EndWhile
		\If{$\msf{CurrentScore} < \msf{TotalBestScore}$}
		\State $\msf{TotalBestScore} \gets \msf{CurrentScore}$, $\msf{TotalBestTS} \gets \msf{CurrentTS}$
		\EndIf
		\EndFor
		\For{$q \in Q_n$}
		\State $\mathcal{A} \gets \mathcal{A}_{\msf{TotalBestTS},q,\mathcal{P}}$
		\If{$|\Lan(\mathcal{A}) \cap \Sigma^{\leq 2n-2}| = \msf{TotalBestScore}$}
		\Return $\mathcal{A}$
		\EndIf
		\EndFor
	\end{algorithmic}
\end{algorithm}

\end{document}